\newtheorem{theorem}{Theorem}
\newenvironment{proof}{ \textbf{Proof:} }{ \hfill $\Box$}
\def\bb0{{\mathbb{0}}}
\def\bb{{\mathbf{b}}}
\def\b0{{\mathbf{0}}}
\def\bI{{\mathbf{I}}}
\def\bbC{{\mathbb{C}}}
\def\bbE{{\mathbb{E}}}
\def\cL{\mathcal{L}}
\def\cR{\mathcal{R}}
\def\cT{\mathcal{T}}
\def\sfA{\mathsf{A}}
\def\sfS{\mathsf{S}}
\def\sfZ{\mathsf{Z}}
\def\sfa{{\mathsf{a}}}
\def\sfb{{\mathsf{b}}}
\def\sfj{{\mathsf{j}}}
\def\sfs{{\mathsf{s}}}
\def\sf0{{\mathsf{0}}}
\def\rm0{{\mathrm{0}}}
\def\b0{{\pmb{0}}} 
\begin{document}
	
	\title{	A generalization of the achievable rate of a MISO system using  Bode-Fano wideband matching theory }

	\author{\IEEEauthorblockN{Nitish  Deshpande, \textit{Student Member, IEEE,} Miguel R. Castellanos, \textit{Member, IEEE,} Saeed R. Khosravirad, \textit{Member, IEEE,} Jinfeng Du,   \textit{Member, IEEE,}   Harish Viswanathan, \textit{Fellow, IEEE,} and Robert W. Heath Jr., \textit{Fellow, IEEE}}\\ 
		%	\IEEEauthorblockA{{Department of Electrical and Computer Engineering, North Carolina State University, Raleigh, NC} \\
			%		Email: \{nvdeshpa, mrcastel, rwheathjr\}@ncsu.edu}
		\thanks{ Nitish  Deshpande,  Miguel R. Castellanos, and Robert W. Heath Jr. are
			with the Department of Electrical and Computer Engineering, North Carolina State University, Raleigh, NC 27606 USA (email: \{nvdeshpa, mrcastel, rwheathjr\}@ncsu.edu).
			Saeed R. Khosravirad, Jinfeng Du, and Harish Viswanathan are with Nokia
			Bell Laboratories, Murray Hill, NJ 07974, USA (email: \{saeed.khosravirad,  jinfeng.du,
			harish.viswanathan\}@nokia-bell-labs.com).
			This project is funded in part by Nokia Bell Laboratories, Murray Hill, NJ 07974, USA. This material is also based upon work supported in part by the National Science Foundation under Grant No. NSF-CNS-2147955 and NSF-ECCS-2153698 and by funds from federal agencies and industry partners as specified in the Resilient \& Intelligent NextG Systems (RINGS) program.
		}
	}
	
	%\author{Miguel R. Castellanos, 
		%        % <-this % stops a space
		%\thanks{This paper was produced by the IEEE Publication Technology Group. They are in Piscataway, NJ.}% <-this % stops a space
		%\thanks{Manuscript received April 19, 2021; revised August 16, 2021.}}
	
	% The paper headers
	%\markboth{Journal of \LaTeX\ Class Files,~Vol.~14, No.~8, August~2021}%
	%{Shell \MakeLowercase{\textit{et al.}}: A Sample Article Using IEEEtran.cls for IEEE Journals}
	
	%\IEEEpubid{0000--0000/00\$00.00~\copyright~2021 IEEE}
	% Remember, if you use this you must call \IEEEpubidadjcol in the second
	% column for its text to clear the IEEEpubid mark.
	
	\maketitle
	
	\begin{abstract}
		Impedance-matching networks affect power transfer from  the radio frequency (RF) chains to the antennas. Their design impacts the signal to noise ratio (SNR) and the achievable rate.
		% and impact the achievable rate. 
%		Impedance-matching networks enable efficient power transfer from the radio frequency (RF) chains to the antennas and impact the achievable rate. 
		In this paper, we maximize the information-theoretic achievable rate of a multiple-input-single-output (MISO) system with  wideband matching constraints. Using a multiport circuit theory approach with frequency-selective scattering parameters, we propose a general framework for optimizing the MISO achievable rate that incorporates Bode-Fano wideband matching theory.
%		maximizing the achievable rate of a MISO system under Bode-Fano matching network constraints.
		 We express the solution to the achievable rate optimization problem in terms of the optimized transmission coefficient and the Lagrangian parameters corresponding to the Bode-Fano inequality constraints. We apply this framework to a single electric Chu's   antenna and an array of two electric Chu's antennas. We compare the optimized achievable rate obtained numerically with other benchmarks like the ideal achievable rate computed by disregarding matching constraints and the achievable rate obtained by using sub-optimal matching strategies like conjugate matching and frequency-flat transmission. We also propose a practical methodology to approximate the  achievable rate bound by using the  optimal transmission coefficient to derive a physically realizable matching network through the ADS software.
	\end{abstract}
	
	\begin{IEEEkeywords}
		Bode-Fano matching theory, achievable rate maximization,  scattering parameters,  matching network design
	\end{IEEEkeywords}
	
	\section{Introduction}
	
	As wireless systems exploit higher bandwidths, it is crucial to design matching networks that  support the desired  power transfer in the band of interest to achieve the target data rate\cite{buon_kiong_lau_impact_2006, nie_bandwidth_2017,taluja_bandwidth_2011,saab_optimizing_2022}. For narrowband systems, matching networks are optimized for power transfer between source and load at a  single  frequency. For wideband arrays, it is challenging to design matching networks  because the load depends on the  frequency-selectivity of 
	the  array including mutual coupling between antennas~\cite{saab_optimizing_2022}.
%	and  the wireless propagation channel\cite{saab_optimizing_2022}.
		The Bode-Fano theory captures these practical  matching constraints  with a frequency-selective circuit theory approach based on scattering parameters\cite{nie_bandwidth_2017,steer2019microwave}.  
		In this paper, we incorporate these constraints in achievable rate  analysis unlike conventional wideband MIMO literature which does not treat matching networks as a part of the analysis\cite{park_dynamic_2017, heath_jr_foundations_2018}.

		The problem of matching a source impedance to a load impedance in conventional RF literature is mostly based on power transfer based metrics\cite{steer2019microwave, bode1945network, fano1950theoretical, 8031276, nie_systematic_2014}. In general,
		the transmit matching network is designed to maximize power transfer efficiency  while the receive matching network to minimize  the noise figure.		
%		 impedance matching networks are designed locally for transmitter  by maximizing power transfer efficiency and for the receiver by minimizing the noise figure. 	
	%	The problem of matching a source impedance to a load impedance  is well studied in the  RF literature \cite{steer2019microwave, bode1945network, fano1950theoretical, 8031276, nie_systematic_2014}. 
	For a narrowband system, the conjugate matching network  is designed such that the effective load impedance  equals  complex conjugate of the source impedance. 
	For broadband matching, the constant quality factor circle technique can be used~\cite{steer2019microwave}. Designing matching networks for systems operating at higher fractional bandwidths is challenging because of the frequency-selectivity of the load.
	%	which confines the trajectory connecting load and conjugate source impedances within a fixed arc on the Smith chart defined by the Q value~\cite{steer2019microwave}. 
	%	There are infinitely many possible such trajectories and hence infinite number of matching network designs.
	Recently, a globally optimal approach to designing wideband matching networks defined a unique trajectory connecting source and load on Smith chart using the power transfer efficiency metric
	\cite{8031276}.  
	%	That work transformed the Smith chart matching problem to a graph problem connecting source and load impedance nodes through the shortest path optimizing a particular metric and solved using Dijkstra’s algorithm.
	%	The metric 
	%	used in \cite{8031276} is the power transfer efficiency. 
	Although power transfer efficiency is important, it only quantifies the power transfer from the RF chain to the  antennas within  a transmitter.
	From a communication theoretic perspective, the most relevant metric is the end-to-end achievable data rate. The communication rate depends on  factors like the bandwidth, wireless  propagation channel, beamforming response at both receiver and transmitter, mutual coupling between antenna elements, and their radiation patterns. The achievable rate metric captures all factors.
	Hence, wireless devices should optimize the  matching network to maximize the rate rather than the power transfer efficiency.

	For analysis of wideband systems, it is essential to understand the fundamental design tradeoffs between gain and bandwidth\cite{10002944}. 
	In large phased-arrays operating at higher fractional bandwidths, there is a  phase mismatch between the frequency-flat phase-shifter and frequency-selective array response\cite{7841766}. 
%	Conventional narrowband system analysis use a frequency-flat   array response  modeled  only at the center frequency\cite{ayach_spatially_2014}. When transitioning from narrowband to wideband systems, it becomes more appropriate to use a 
%	frequency-selective model for the array response\cite{8354789}. The frequency-selective array response model helps to unravel a fundamental limitation of a narrowband phased-array, popularly known as the beam squint effect\cite{7841766}. 
%As modern wireless devices operate in both near-field and far-field\cite{9903389}, we recently proposed a wideband generalization of the near-field region for extremely large phased arrays\cite{10002944}. We showed that  a wideband system can leverage
%	a larger bandwidth at the expense of a larger near-field region.
		The frequency-selectivity of antennas and matching networks was considered for analysis of dense array wideband massive MIMO\cite{deshpande2023analysis}.  The results in \cite{deshpande2023analysis,taluja_diversity_2013} showed that  for  systems which use matching networks based on the   conjugate matching strategy, the SNR drops drastically for frequencies away from the center frequency. 
			A  matching network based on a narrowband assumption is sub-optimal in a wideband setting. Hence, it is necessary to optimize physically realizable matching networks over the bandwidth of interest.
		These examples show that the shift from frequency-flat to frequency-selective models is necessary as wireless systems transition from narrowband to wideband operation\cite{8354789, 7841766, 10002944, deshpande2023analysis,taluja_diversity_2013  }.
		
		The circuit theory approach to modeling wireless communication systems enables  incorporating physically consistent frequency-selective models for the antennas, arrays, wireless channel, and the  RF chain components in the analysis\cite{ivrlac_toward_2010}.
%	The circuit theory approach to modeling communication systems is an effective method to obtain physically consistent insights\cite{ivrlac_toward_2010}.
% Although Maxwell's equations provide the most accurate description of a wireless system,  it is computationally expensive to solve Maxwell's equations in complicated radio propagation environments.  The information theoretic abstraction of a wireless system, although simple and mathematically tractable, does not provide an accurate description of the physics of radio propagation\cite{sarkar_mimo_2019}. 
	This approach  captures effects like mutual coupling in the form of impedance or scattering matrices thus making the system analysis more realistic and tractable\cite{saab_optimizing_2022, 9097141, 9048753,ivrlac_toward_2010,taluja_diversity_2013, wallace_mutual_2004, kundu_enhancing_2017,  shyianov_achievable_2022, akrout2022super, akrout2023physically}. 
	Although circuit theoretic abstractions have been used for decades for the design of individual RF components like antennas\cite{chu1948physical, balanis2015antenna}, matching networks\cite{steer2019microwave}, and amplifiers\cite{gray2009analysis}, the application of circuit models for MIMO communication systems is more recent.
	Phenomena like super-directivity\cite{9048753}  and super-bandwidth\cite{ akrout2022super} that occur in  tightly coupled arrays can be explained with the circuit theory approach.
	Hardware effects like amplifier current constraints\cite{deshpande2023analysis} and matching network limitations\cite{taluja_diversity_2013} can also be incorporated through circuit models.
	Hence, the circuit theory approach to communication is useful to design matching networks for optimizing achievable rate.
%	One early work that proposed a  multiport circuit theoretic model of a  communication system assumed narrow bandwidth and modeled the network parameters at the center frequency~\cite{ivrlac_toward_2010}. That work modeled the RF chain in terms of currents and voltages. One of the merits of this approach is that the input-output signal model, SNR, power constraints can be described in terms of physical quantities like voltages, currents, and impedances as opposed to the usage of dimensionless variables used in conventional MIMO literature\cite{heath_jr_foundations_2018}. 
%	 In \cite{taluja_diversity_2013}, a wideband circuit model for MIMO systems was proposed. 
%	 The key observation in \cite{deshpande2023analysis,taluja_diversity_2013} is that a multiport matching network based on narrowband assumption is sub-optimal in a wideband setting. Hence, it is necessary to optimize physically realizable matching networks over the bandwidth of interest.

	Prior work has studied achievable rate maximization through impedance-matching only for specific matching network topologies\cite{saab_optimizing_2022,kundu_enhancing_2017,shyianov_achievable_2022,saab_capacity_2019,saab_capacity_2019-1}.
%	Prior work on solving the impedance-matching problem from a communication theoretic capacity perspective has certain limitations\cite{saab_optimizing_2022}\cite{kundu_enhancing_2017}\cite{shyianov_achievable_2022}\cite{saab_capacity_2019}\cite{saab_capacity_2019-1}.
%	 In \cite{kundu_enhancing_2017}, the capacity of a  MIMO-OFDM system was optimized in terms of the impedance-matching network at the receiver. 
	In \cite{kundu_enhancing_2017},  an upper bound on the MIMO-OFDM capacity was proposed by optimizing the receiver matching network parameters based on a T-network topology.	 
%	 That work proposed an upper bound on the capacity by optimizing the matching network parameters independently for each subcarrier frequency and then used the optimized network parameters to design a realistic matching network based on the  T-network  topology. 
	 In \cite{saab_optimizing_2022}, the achievable rate of a MISO and SIMO system was optimized in terms of the  inductances and transformer turns ratio of a single port  matching network.
%	The values for the inductances and transformer turns ratio were directly optimized by fixing the achievable rate as the objective\cite{saab_optimizing_2022}.
	 Although \cite{saab_optimizing_2022} and \cite{kundu_enhancing_2017} used a communication theoretic objective, the methods used for optimizing the matching network parameters are specific to a given topology and do not guarantee optimality over a general family of passive and linear matching networks.
	% any physically realizable matching network. 
	From a  circuit theory perspective, there exists a fundamental limit on the wideband performance of  a passive matching network, popularly known as the Bode-Fano limits\cite{steer2019microwave, bode1945network,fano1950theoretical}.
%	The upper bound on the capacity proposed in \cite{kundu_enhancing_2017} disregards the Bode-Fano limit. 
Recent work derived an upper bound on the single-input-single-output (SISO) achievable rate by applying the Bode-Fano wideband matching constraints at the receiver\cite{shyianov_achievable_2022} and transmitter\cite{nitish_globecom23}.
%	Another recent work bridged this gap by using the Bode-Fano wideband matching constraints to derive a new upper bound on the achievable rate\cite{shyianov_achievable_2022}. However, the results in \cite{shyianov_achievable_2022} are limited to a SISO system and impedance-matching only at the receiver.
	Recently, a multiport extension of the Bode-Fano matching  theory proposed new bounds applicable to a system with multiple transmit antennas driven by multiple sources\cite{nie_bandwidth_2017,nie_bandwidth_2017-1}.
	The application of the improved Bode-Fano matching  limits  to a MIMO system from an achievable rate perspective is  not investigated in prior work.
	
	In this paper, we analyze a MISO system from a joint circuit and communication theoretic perspective. 
	We answer two fundamental questions.
	The first question is ``What is the upper bound on the  achievable rate of a MISO system over all physically realizable linear and passive matching networks that satisfy the Bode-Fano wideband matching constraints?"
	We demonstrate how ignoring the Bode-Fano constraints 
	leads to an over-estimation of the rate for wideband systems.
%	 Prior work on achievable rate analysis of MISO systems has disregarded these wideband matching constraints which lead to an over-estimation of the rate for wideband systems. 
	 The second question is ``How to design impedance-matching networks
	 that achieve rate close to the proposed upper bound?"
	 % to attain this achievable rate bound?" 
	 In contrast to prior work, we design realizable matching networks that maximize achievable rate.
%	 Prior work on designing matching networks do not try to attain this bound, but instead focus on optimizing other metrics like maximizing power transfer efficiency. 
	The main contributions of this paper are  as follows.
	\begin{itemize}
		\item We derive a frequency-selective circuit theoretic model of a MISO system with a single RF chain at the transmitter that supplies power to the antenna array through an impedance-matching network and an analog beamforming network. The system model formulation is in terms of the scattering parameters, which enables a direct application to any RF system whose scattering parameters can be measured. 
		\item We propose a     	
		general framework for optimizing the achievable rate of wideband MISO systems as a function of the impedance-matching network. The constraints are based on a  generalized version of the Bode-Fano wideband matching theory recently proposed in \cite{nie_bandwidth_2017,nie_bandwidth_2017-1}. 
		For deriving the constraints, we use a rational and passive approximation of the equivalent load comprising of the analog beamforming network and the transmit antennas.
		The maximum achievable rate is expressed in terms of the optimized transmission coefficient and Lagrangian parameters associated with the Bode-Fano inequalities.  The  transmission coefficient depends not only  on the antenna and array parameters but also on the wireless propagation channel and the analog beamformer.
		\item We propose a three step procedure to  design circuits that approximate the desired optimal response obtained through  the achievable rate optimization solution. Our simulation results show that the  matching network designed using this procedure achieves rates close to  the maximum achievable rate bound.
		\item We demonstrate this three step procedure for two specific models: a single Chu's electric antenna and an array of two Chu's antennas.  We use a practical LC ladder matching network topology whose components are numerically optimized in ADS to fit the corresponding optimal transmission coefficient.
		\item We compare our proposed bound and the performance of the designed matching network with the ideal Shannon's bound, frequency-flat transmission, conjugate matching at center frequency, and the no matching case.
		We also analyze the achievable rate trend with  bandwidth. We show the existence of an optimal bandwidth  for the achievable rate  bound obtained with Bode-Fano constraints and the corresponding circuit simulations.
		%    	 \item \red{TODO: Special cases of electric dipole and dipole array}
		%    	 \item \red{TODO: A practical approach to design the matching network circuit based on the optimized transmission coefficient.}
		%    	 \item \red{TODO: Performance comparison with other matching network benchmarks}
	\end{itemize}
	
	This paper is organized as follows: In Section~\ref{System model}, we discuss the choice of the modeling framework in comparison with other frameworks used in literature.  Choosing a circuit theoretic modeling methodology, we formulate a frequency-selective model for a MISO system where the linear network parameters are described using the scattering parameter notation. In Section~\ref{sec opt ach rate bf con}, we discuss the general form of the Bode-Fano matching constraints followed by the achievable rate optimization problem formulation and derivation of the optimal  transmission coefficient. In Section~\ref{sec matching network design and illustrations}, we propose a methodology to design matching networks based on the derived transmission coefficient supported by  circuit illustrations  using  ADS software. In Section~\ref{sec: numerical results}, we present numerical results for SNR  and achievable rate using the derived theoretical bounds, circuit simulations, and comparison with conventional matching benchmarks. In Section~\ref{sec: conclusion}, we summarize the key takeaways  and discuss future research directions.
	 The simulation code  for generating achievable rate optimization results and the corresponding circuit ADS files are made publicly available to facilitate reproducibility\footnote{{https://github.com/nvdeshpa/AchievableRateWidebandMatching}}.
	
	\textit{Notation}:  A bold lowercase letter $\bm{\sfa}$ denotes a column vector, 
	a bold uppercase letter $\bm{\sfA}$ denotes a matrix,  $(\cdot)^{\ast}$ denotes conjugate,  $(\cdot)^T$ denotes transpose,   $|\cdot|$  indicates absolute value, $ \bI_N$ represents the identity matrix of size $N$, $\mathbf{0}_{N}$ represents an all zero matrix of size $N$,
	%the complex frequency is $s= \sigma + j \omega$  where $\sigma$ is the real component and $j \omega$ is the imaginary component, $\omega$ is the angular frequency in radians/sec which can also be expressed in terms of the frequency $f$  (in Hz), as $\omega = 2\pi f$,
	$\cR(z)$ denotes the real part of a complex number $z$, $ \{i\} _{1}^{N}$ is shorthand for $i = \{ 1, 2, \dots, N\}$, $[z]^{+}= \mbox{max}(0,z)$.
	
	%  The first systematic approach to study the bandwidth limitation of a matching network was proposed by Bode in  \cite{bode1945network} for a special type of reactive load.  His research was generalized by Fano in \cite{fano1950theoretical} for arbitrary reactive loads. These fundamental results on the bandwidth limitation of matching networks are popularly known as Bode-Fano  limits\cite{steer2019microwave}. Recently, a multiport extension of the Bode-Fano matching  limits was proposed in \cite{nie_bandwidth_2017}\cite{nie_bandwidth_2017-1} for application to MIMO systems.    

	\section{System model}\label{System model}
%	\begin{table*}
%%	\begin{table}[htbp]
%		\centering
%		\caption{Summary of modeling frameworks for MIMO systems}
%		
%		\label{tab: summary MIMO model frameworks}
%		{
%			\begin{tabular}{|c|c|c|c|c|}
%				\hline
%				\backslashbox{\thead{Frequency  dependence}}{\thead{Methodology}} & \multirow{2}{*}{\thead{Dimensionless \\ (No circuit or EM theory)}}& \multirow{2}{*}{\thead{Physically consistent \\ (EM theory)}} & \multicolumn{2}{c|}{\thead{Physically consistent \\ (Circuit theory)} }  \\ \cline{4-5}
%				& & & \thead{Impedance\\ parameters} & \thead{Scattering\\ parameters} \\ \hline
%				\thead{Frequency-flat} & \cite{6779222, 1622630,891214} &\cite{9987524, 10016303} &\cite{ivrlac_toward_2010} &  \cite{wallace_mutual_2004} \\ \hline
%				\thead{Frequency-selective} & \cite{1095939,1146527} & -&  \cite{saab_optimizing_2022},\cite{kundu_enhancing_2017,shyianov_achievable_2022,akrout2022super,akrout2023physically} & \thead{\cite{buon_kiong_lau_impact_2006},\cite{taluja_diversity_2013},\\ This paper} \\ \hline
%		\end{tabular}}
%		
%%	\end{table}
%	\end{table*}

	\subsection{Modeling frameworks for wireless communication system}
	
%		\begin{figure*}[t!]
%		\centering
%		\begin{subfigure}[t]{0.28\textwidth}   
%			\centering 
%			\includegraphics[width=\textwidth]{communication_link_2port_v4.pdf}
%			\caption
%			{A communication system with one RF chain at both receiver and transmitter is modeled as a two-port  network. }    
%			\label{fig: 2 port network}
%		\end{subfigure}
%	\hfill
%		\begin{subfigure}[t]{0.65\textwidth}   
%			\centering 
%			\includegraphics[width=\textwidth]{MISO_system_ckt_exact_V5.pdf}
%			\caption
%			{For a MISO system with one transmit RF chain connected to $N$ antennas, the two-port network model from Fig.~\ref{fig: 2 port network} comprises of a  transmit impedance matching network, analog beamforming network, transmit array, wireless channel, and receive antenna.}
%			\label{fig:ckt model full} 
%		\end{subfigure}
%		\caption{In Section~\ref{subsec 2 port linear}, the SNR corresponding to a generic  two-port network model in Fig.~\ref{fig: 2 port network} is defined. In Section~\ref{subsec ckt model MISO}, the SNR is derived in terms of the scattering parameters of each linear block of a MISO system shown in  Fig.~\ref{fig:ckt model full}. }
%		\label{fig: ckt model}
%	\end{figure*}

\begin{figure*}[t!]
	\begin{subfigure}[t]{0.32\textwidth}   
		\centering 
		\includegraphics[width=0.875\textwidth]{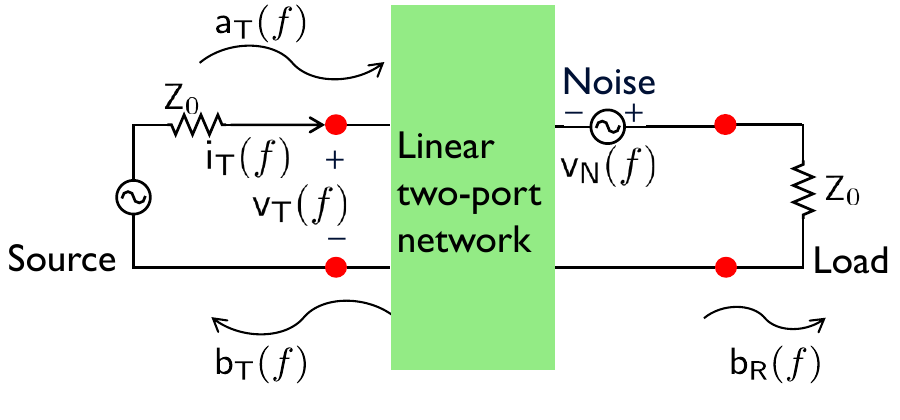}
		\caption
		{A general two-port network model where  $\mathsf{v}_{\mathsf{T}}(f)$ is the voltage, $\mathsf{i}_{\mathsf{T}}( f)$ is the current at the input, and $\mathsf{v}_{\mathsf{N}}( f)$ is noise source at output. Alternate representation using root power waves 
			$\mathsf{a}_{\mathsf{T}}( f)$, $\mathsf{b}_{\mathsf{T}}( f)$, and $\mathsf{b}_{\mathsf{R}}( f)$. } 
%			incident root power wave $\mathsf{a}_{\mathsf{T}}( f)$ and reflected root power wave $\mathsf{b}_{\mathsf{T}}( f)$. Noise is modeled as a voltage source $\mathsf{v}_{\mathsf{N}}( f)$. }    
		\label{fig: 2 port network}
	\end{subfigure}
\hfill
	\begin{subfigure}[t]{0.65\textwidth}   
		\centering
		\includegraphics[width=\textwidth]{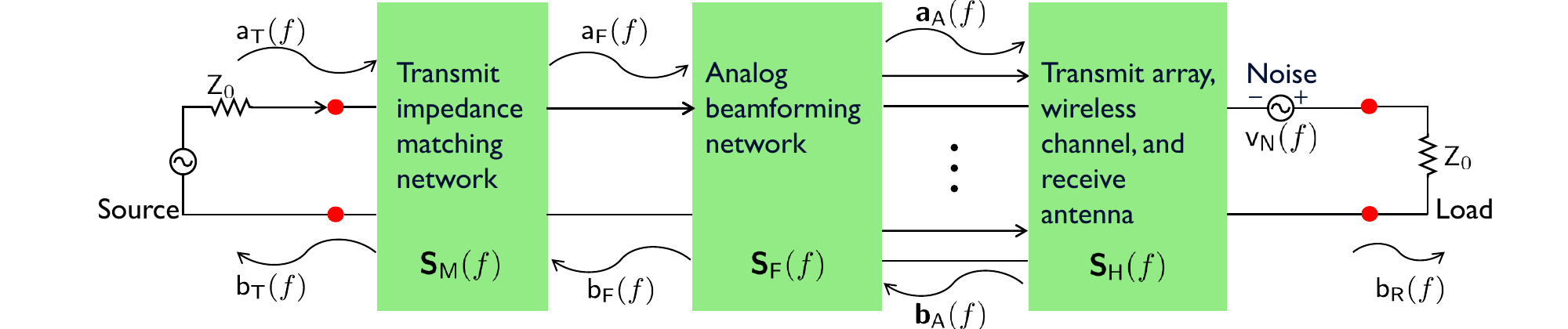}
		\caption
		{Circuit model of a MISO system where  $	\mathsf{a}_{\mathsf{F}}(f)$ is the incident root power wave and $	\mathsf{b}_{\mathsf{F}}(f)$ is the reflected root power wave from the analog beamforming network,  $\bm{\sfa}_{\mathsf{A}}( f) \in \bbC^{N\times 1}$ is the incident root power wave vector and  $\bm{\sfb}_{\mathsf{A}}( f) \in \bbC^{N\times 1}$ is the reflected root power wave vector from the transmit antenna array. The linear blocks are described using scattering parameter matrices $\bm{\sfS}_{\mathsf{M}}(f)$, $\bm{\sfS}_{\mathsf{F}}(f)$, and $\bm{\sfS}_{\mathsf{H}}(f)$.  }
		\label{fig:ckt model full} 
	\end{subfigure}
	\caption{In (a), a communication system with one RF chain at both receiver and transmitter is modeled as a two-port  network.  In (b),  the two-port network model is specified for a MISO system with one transmit RF chain connected to $N$ antennas through impedance-matching network and analog beamforming network. }
	\label{fig: ckt model}
\end{figure*}

%	\begin{figure*}[htbp]
%		\centering
%		\includegraphics[width=0.7\textwidth]{MISO_system_ckt_exact_V4.pdf}
%		\caption{Model of the MISO communication system using a two-port circuit model consisting of a transmit impedance-matching network, analog beamforming network, transmit array, wireless channel, and receive antenna.  }
%		\label{fig:ckt model full}
%	\end{figure*}

With every wireless generation, there has been evolution in the communication system modeling approach. 
%	Models can significantly affect 
	The analysis of a wireless system is impacted by the choice of model.
It is important to choose a model that encompasses the proper assumptions and
	ensures the validity and applicability of the  insights to the target wireless application.
%	It is crucial to carefully choose 
%	a  modeling framework for wireless communication systems to ensure the validity and applicability of the  insights to the target wireless application.

We overview different frameworks that are used for modeling  physical layer wireless communication.
%	The modeling of wireless communication systems is based on different frameworks  categorized in Table~\ref{tab: summary MIMO model frameworks}. 
	In terms of frequency dependence, the system model is classified as frequency-flat or frequency-selective.
	For frequency-flat models, the network parameters are evaluated at a specific center frequency and assumed to be fixed over the narrow bandwidth of interest.
	For  narrowband wireless applications like sensor networks, RFID\cite{decarli2018phase}, and narrowband Internet of Things\cite{9214385},  it suffices to use a frequency-flat modeling framework.
	For wideband wireless applications like satellite communication, Wi-Fi, and cellular\cite{6998944},    frequency-selective models are required to guarantee that the analytical or numerical results  from the model are useful for the desired frequency range.

	In terms of the modeling methodology, the models can be classified as   dimensionless  (non-circuit theoretic models) and  physically consistent  (circuit theoretic models).
	Non-circuit theoretic models have been useful for analyzing achievable rate, interference analysis, power allocation\cite{heath_jr_foundations_2018}, and beamforming optimization.
	However, the definition of power with these models is based on a single complex dimensionless variable.
The power definition using circuit models  is based on a pair of variables like the current and voltage or incident and reflected root power waves\cite{ivrlac_toward_2010},\cite{steer2019microwave}.
	For problems like impedance-matching network design which relate to power flow\cite{saab_optimizing_2022}, or analysis of new array architectures like dynamic metasurface antennas\cite{9762020},
	 it is essential to use a circuit model to capture the relevant hardware and electromagnetic effects like mutual coupling\cite{wallace_mutual_2004} and polarization\cite{9723156}.
%	These models simplify the description of mutual coupling\cite{wallace_mutual_2004}, polarization\cite{9723156}, and radiation modes, which can be used to improve performance of MIMO systems\cite{akrout2022super}.
%	 Circuit models simplify the description of mutual coupling\cite{wallace_mutual_2004}, polarization\cite{9723156}, and radiation modes, which can be used to improve performance of MIMO systems\cite{akrout2022super}.
%	For problems like impedance-matching network design\cite{saab_optimizing_2022}  and analysis of new array architectures like dynamic metasurface antennas\cite{9762020}, it is essential to use a circuit model to capture the relevant hardware and electromagnetic effects.
	
	%     For solving problems like beamforming optimization or  resource allocation, it is mathematically convenient to use simple abstractions  of the MIMO system without resorting to circuit models for antennas, arrays, or matching networks.

	The circuit theoretic MIMO models can be further classified based on  impedance/admittance versus scattering parameters. Although impedance and scattering parameters can be converted to each other through algebraic transformations, the scattering parameters are more applicable because they can be easily measured for any general load and directly relate to the flow of power\cite{steer2019microwave}. Scattering parameters are widely recognized in the RF  community for design of individual RF components like antennas\cite{balanis2015antenna}, matching networks\cite{steer2019microwave}, and amplifiers\cite{gray2009analysis}.
	The use of scattering parameters for the analysis of wireless communication systems allows us to  leverage several results on matching network analysis developed in the microwave and antenna community\cite{nie_bandwidth_2017,nie_bandwidth_2017-1}. 
	%     The use of scattering parameters for the analysis of wireless communication systems not only allows us to leverage several results developed in the microwave and antenna community\cite{nie_bandwidth_2017}\cite{nie_bandwidth_2017-1} but also makes our work generalizable to any practical RF system. 
	It also makes our work generalizable to any practical RF system. 
As we target the achievable rate analysis and matching network design problem for wideband systems, 
	we use the circuit theoretic frequency-selective model with scattering parameters.
	
%	\begin{table}
%		%	\begin{table}[htbp]
%			\centering
%			\caption{Summary of modeling frameworks for MIMO systems}
%			
%			\label{tab: summary MIMO model frameworks}
%			\scalebox{0.8}{	{
%					\begin{tabular}{|c|c|c|c|}
%						\hline
%						\backslashbox{\thead{Frequency \\ dependence}}{\thead{Methodology}} & \multirow{2}{*}{\thead{Dimensionless}}& \multicolumn{2}{c|}{\thead{Physically consistent \\ circuit theory} }  \\ \cline{3-4}
%						&  & \thead{Impedance\\ parameters} & \thead{Scattering\\ parameters} \\ \hline
%						\thead{Frequency-flat} & \cite{6779222, 1622630}  &\cite{ivrlac_toward_2010} &  \cite{wallace_mutual_2004} \\ \hline
%						\thead{Frequency-selective} & \cite{1095939,1146527} &  \cite{saab_optimizing_2022},\cite{kundu_enhancing_2017,shyianov_achievable_2022,akrout2022super,akrout2023physically} & \thead{\cite{buon_kiong_lau_impact_2006},\cite{taluja_diversity_2013},\\ This paper} \\ \hline
%			\end{tabular}}}
%			
%			%	\end{table}
%	\end{table}

	\subsection{ A two-port linear network model of communication system}\label{subsec 2 port linear}

	In Fig.~\ref{fig: ckt model}(a), we represent a communication system with a single source and single load using a two-port network model\cite{shyianov_achievable_2022}. On the transmitter side, the source  generates the transmit signal obtained from the  output of the transmit RF chain, i.e., the signal obtained after   up-conversion and amplification.  This signal is input to a cascade of linear networks effectively modeled as a 	
	linear two-port network. The two-port network is used to model several linear blocks of a communication system like the 
	impedance-matching network, analog beamforming network, transmit antenna array network, wireless propagation channel, and receive antenna as shown in Fig.~\ref{fig: ckt model}(b) and described in Section~\ref{subsec ckt model MISO}. 
	We model the noise contribution  from background radiation at the receive antenna by  a voltage source at the output of the linear two-port network\cite{ivrlac_toward_2010}.  The 
	receiver RF chain is  modeled as a load. For simplifying the analysis, we do not model the low noise amplifier, receive matching network, and intrinsic noise source.
	
%	We assume that the dominant noise contribution  is from background radiation at the receive antenna and the intrinsic noise is ignored\cite{ivrlac_toward_2010}. Hence, at the output of the linear two-port network,  a voltage source is used to model the noise at the receiver port\cite{ivrlac_toward_2010}. The 
%	receiver RF chain is  modeled as a load.
	%      The output  of the linear two-port network is input to the receiver RF chain modeled as a load.
	%We assume that the dominant noise contribution  is from background radiation at the receive antenna modeled as a voltage source on the receiver port\cite{ivrlac_toward_2010}.
	%\begin{figure}[htbp]
	%	\centering
	%	\includegraphics[width=0.7\textwidth]{communication_link_2port_v3.pdf}
	%	\caption{A two-port network model of a communication system with  source, load, and characteristic impedance $\sfZ_0$, root power waves $\mathsf{a}_{\mathsf{T}}( f)$, $\mathsf{b}_{\mathsf{T}}( f)$, $\mathsf{b}_{\mathsf{RS}}( f)$, $\mathsf{b}_{\mathsf{RN}}( f)$, and   $\mathsf{b}_{\mathsf{R}}( f)$, current $ \mathsf{i}_{\mathsf{T}}( f)$, and voltage $ \mathsf{v}_{\mathsf{T}}( f)$. }
	%	\label{fig:circuit src load}
	%\end{figure}
	% We formulate the circuit theoretic communication model in terms of the scattering parameters notation.   All parameters are defined as a function of  $ \omega$ which is the angular frequency in radians/sec. In terms of the frequency, $f$,  in Hertz, we write $\omega = 2\pi f$.

	We use a frequency domain representation for describing the signal flow through the two-port network. The subscript $``\mathsf{T}"$ indicates transmitted signal while subscript $``\mathsf{R}"$ indicates received signal.
	Let the voltage of the real-valued passband transmit signal in time domain at the input of the two-port network be $v_{\mathsf{T}}(t)$. Let the current entering the two-port network be $i_{\mathsf{T}}(t)$.
	We assume that the stochastic signals $v_{\mathsf{T}}(t)$ and $i_{\mathsf{T}}(t)$ are  Gaussian wide-sense stationary random processes so that these signals are completely described by their mean and second-order moments\cite{davenport1958introduction}.
	As these signals do not have finite {energy},   a windowed Fourier transform with interval $T_0$ is used for defining the spectrum\cite{russer2014modeling}.
	For frequency domain representation, we use frequency $f$ in Hertz.
	% $ \omega$ which is the angular frequency in radians/sec. In terms of the frequency, $f$,  in Hertz, we write $\omega = 2\pi f$.
	Let  $\mathsf{v}_{\mathsf{T}}(f)= \int_{-\frac{T_0}{2}}^{\frac{T_0}{2}} v_{\mathsf{T}}(t) e^{-\sfj 2 \pi  f t} \mathrm{d}t \big[\frac{\text{V}}{\text{Hz}}\big]$ and 
	$\mathsf{i}_{\mathsf{T}}( f)= \int_{-\frac{T_0}{2}}^{\frac{T_0}{2}} i_{\mathsf{T}}(t) e^{-\sfj 2 \pi  f t} \mathrm{d}t \big[\frac{\text{A}}{\text{Hz}}\big]$.
	The main purpose of using the frequency domain representation is to simplify the analysis.
%	The frequency domain representation for voltage and current is shown in Fig.~\ref{fig:ckt model full}.

	For further simplification, we use a root power wave representation of the signals which  directly relates to the flow of power\cite{steer2019microwave}.
%	In this paper, the main focus is on the design of impedance-matching networks which influences the flow of power. 
%	Hence, we prefer to use the root power wave notation which  directly relates to the flow of power\cite{steer2019microwave}.
%	The voltages and currents at different ports in a network are related by the impedance parameters or admittance parameters.  
	The root power waves  at different ports are related by the scattering parameters, which can be easily measured compared to impedance or admittance parameters.
%	 which require open or short circuit conditions for measurement\cite{steer2019microwave}. 
	 The root power wave is  a stochastic process which can be expressed as a linear combination of the current and voltage stochastic processes.
	In terms of the voltage and current, assuming characteristic impedance of the transmit circuit as  $\sfZ_0$, the root power wave incident on the two-port network is defined as  $\mathsf{a}_{\mathsf{T}}( f)= \frac{\mathsf{v}_{\mathsf{T}}( f)+ \sfZ_0 \mathsf{i}_{\mathsf{T}}( f)}{2 \sqrt{\cR \{\sfZ_0\}}} \big[\frac{\sqrt{\text{W}}}{\text{Hz}}\big]$ \cite[Eq 2.118]{steer2019microwave}.  The root power wave reflected back from the two-port network on the transmit side is  defined as  $\mathsf{b}_{\mathsf{T}}( f)= \frac{\mathsf{v}_{\mathsf{T}}( f)- \sfZ_0^{\ast} \mathsf{i}_{\mathsf{T}}( f)}{2 \sqrt{\cR \{\sfZ_0\}}} \big[\frac{\sqrt{\text{W}}}{\text{Hz}}\big]$ \cite[Eq 2.118]{steer2019microwave}. 
	The power spectral density of the power incident (or available) on the transmitter side of the two-port network is  \cite{steer2019microwave,shyianov_achievable_2022}
	\begin{equation}\label{eqn Pt j omega}
		P_{\mathsf{T}}(f) = \underset{T_0 \rightarrow \infty}{\mbox{lim}}\frac{1}{T_0} \bbE[ |\mathsf{a}_{\mathsf{T}}(f)|^2] \left[\frac{\text{W}}{\text{Hz}}\right].
	\end{equation}
	%Although $\mathsf{a}_{\mathsf{T}}(j \omega)$ and  $\mathsf{b}_{\mathsf{T}}(j \omega)$  are not actual physical quantities, these are useful in visualizing power flow. 
	Although currents and voltage signals can be used for formulating the communication system model, 
	the root power wave notation is a good mathematical tool for simplifying the impedance-matching problem in terms of metrics like power loss ratio and transmission coefficient\cite{6668983}.
	
	On the receiver side, we assume an ideal load termination  $\sfZ_0$ to  avoid reflected root power wave from the load and simplify the analysis.  This load termination requires the use of a  receive matching network that transfers all incident power to the receive RF chain. The modeling of a  practical receive matching network was done in \cite{shyianov_achievable_2022}  while we  focus only on the practical transmit matching network analysis and design.
%	On the receiver side, we assume that there is no impedance mismatch, i.e., there is no reflected root power wave from the load. 
	The root power wave 
	at the output of the linear two-port network 
	represents the received signal component  $\mathsf{b}_{\mathsf{RS}}( f)$.
	The voltage source at the receiver port models the noise from background radiation.
	% Assuming load termination is $\sfZ_0$ and
	 The noise voltage source is denoted as $\mathsf{v}_{\mathsf{N}}( f)$.
	 The resulting root power wave is  $\mathsf{b}_{\mathsf{RN}}( f)= \frac{\mathsf{v}_{\mathsf{N}}( f)}{\sqrt{\cR \{\sfZ_0\}}}$\cite{wallace_mutual_2004}.
	Adding the signal and noise root power waves, the resulting 
	root power wave incident on the load is denoted as $\mathsf{b}_{\mathsf{R}}(f)$\cite{wallace_mutual_2004}.
	%Let the root power wave incident on the load be denoted as $\mathsf{b}_{\mathsf{R}}(\sfj 2 \pi  f)$. This root power wave is composed of the signal component  $\mathsf{b}_{\mathsf{RS}}(j \omega)$ and the noise component $\mathsf{b}_{\mathsf{RN}}(j \omega)$.
	By replacing $\mathsf{a}_{\mathsf{T}}(f)$ with $\mathsf{b}_{\mathsf{RS}}( f)$ in \eqref{eqn Pt j omega},  we obtain  the received signal power spectral density  $	P_{\mathsf{RS}}(f)$.  Similarly, by replacing $\mathsf{a}_{\mathsf{T}}(f)$ with  $\mathsf{b}_{\mathsf{RN}}( f)$ in \eqref{eqn Pt j omega}, we obtain the received noise spectral density  $	P_{\mathsf{RN}}(f)$.
	%Similar to \eqref{eqn Pt j omega}, we denote the received signal power spectral density as $	P_{\mathsf{RS}}(f)$ and the received noise spectral density as $	P_{\mathsf{RN}}(f)$.
	Let  $k_B$ be  the Boltzmann constant in J/K and  $T$ be temperature in K. We set $	P_{\mathsf{RN}}(f) = \mathsf{N}_0= k_B T \left[\frac{\text{W}}{\text{Hz}}\right]$.
	At the receiver, we define $\mathsf{SNR}(f)= \frac{P_{\mathsf{RS}}(f)}{\mathsf{N}_0}$.
	The $\mathsf{SNR}(f)$ is non-zero in the band for which  $P_{\mathsf{T}}(f)$  is  non-zero.
	
	Assuming Gaussian channel noise, the mutual information per unit time (bits/s) between the transmit and  received Gaussian random process is   $\int_{-\infty}^{\infty}\frac{1}{2}\log_2(1+ \mathsf{SNR}( f) ) \mathrm{d}f $. For a real-valued passband signal, $\mathsf{SNR}( f) $ is symmetric about $f=0$. This simplifies the definition to an integral over positive frequencies as $\int_{0}^{\infty}\log_2(1+ \mathsf{SNR}( f) ) \mathrm{d}f $\cite{shyianov_achievable_2022}.
	Although integration upper limit is unbounded,  SNR is positive only for a specific frequency range corresponding to the signal bandwidth which results in a finite integral value.
	In Section~\ref{subsec ckt model MISO}, we describe the  model of a MISO communication system and define  $\mathsf{SNR}(f)$ in terms of $	P_{\mathsf{T}}(f) $ and the scattering parameters of the individual linear sub-networks.
	%  \begin{figure}[htbp]
		% 	\centering
		% 	\includegraphics[width=\textwidth]{}
		% 	\caption{MISO system with $N$ antennas at transmitter. The focus  of our work is on the impedance matching network design at transmitter. }
		% 	\label{fig:MISO_block_diag}
		% \end{figure}
	
	\subsection{Circuit model of a MISO communication system in terms of scattering parameters}\label{subsec ckt model MISO}

	We analyze the achievable rate of a MISO wireless system consisting of a transmit array with $N$ antennas and a single receive antenna.
	The transmitter hardware consists of a single RF chain connected to $N$ antennas through an impedance-matching network and an analog beamforming network.
	The proposed model is applicable to any general type of antenna, array, analog beamformer, and matching network.  In this paper, we focus on the design of transmit matching network.  
%	We assume no impedance mismatch at the receiver. Joint design of transmit and receive impedance-matching  is deferred to future work.
	
	The circuit theoretic model of the MISO communication system is shown in Fig.~\ref{fig: ckt model}(b).
	The   transmit array with $N$ antennas and the  single receive antenna form an $(N+1)$ port network.  
	Let $\bm{\sfS}_{\mathsf{T}}( f) \in \bbC^{N\times N}$ be the scattering parameter matrix for the transmit array and $\sfS_{\mathsf{R}}( f) \in \bbC$ be the scattering parameter of the receive antenna. The wireless propagation channel scattering parameter is  $\bm{\sfs}_{\mathsf{RT}}^T( f) \in \bbC^{1\times N}$, which accounts for the  antenna gains and frequency-selective fading between receiver and transmitter. 
	Similarly, $\bm{\sfs}_{\mathsf{TR}}( f) \in \bbC^{N\times 1} $\cite{ivrlac_toward_2010}.
	% The $(N+1)$ port network  can be represented by the block scattering parameter matrix as 
	% \begin{equation}
		% 	\bS_{\mathsf{H}}( f) =  \begin{bmatrix}
			% 			\bm{\sfS}_{\mathsf{T}}( f) &  \bm{\sfs}_{\mathsf{TR}}( f)\\
			% 		\bm{\sfs}_{\mathsf{RT}}( f) & \mathsf{S}_{\mathsf{R}}( f)\\
			% 	\end{bmatrix}.
		% \end{equation}
	%The root power wave vectors at the receiver and transmitter ports are related using $	\bS_{\mathsf{H}}( f)$ as
	%$\begin{bmatrix}
		%	\bb_{\mathsf{A}}( f) \\
		%	\mathsf{b}_{\mathsf{RS}}( f)
		%\end{bmatrix}=  \bS_{\mathsf{H}}(f) \begin{bmatrix}
		%\ba_{\mathsf{A}}( f) \\
		%0
		%\end{bmatrix}  .$
		% 
		We assume that the transmit array is sufficiently far from the receive antenna such that the signal attenuation between them is large\cite{ivrlac_toward_2010,wallace_mutual_2004}. Hence, we can use the unilateral approximation by setting $\bm{\sfs}_{\mathsf{TR}}( f) = \mathbf{0}_{N\times 1}$, i.e., we assume that the transmitter is unaffected by the electromagnetic fields at the receiver.
		% We keep $\bm{\sfs}_{\mathsf{RT}}^T( f)$ as it is because the transmit signal is affected by the wireless propagation \cite{wallace_mutual_2004}.
		The $(N+1)$ port network block scattering parameter matrix is defined and simplified as
%		\begin{equation}\label{eqn: SHf}
%			\bS_{\mathsf{H}}( f) =  \begin{bmatrix}
%					\bm{\sfS}_{\mathsf{T}}(f) &  \bm{\sfs}_{\mathsf{TR}}( f)\\
%				\bm{\sfs}_{\mathsf{RT}}(f) & S_{\mathsf{R}}( f)\\
%			\end{bmatrix}= \begin{bmatrix}
%					\bm{\sfS}_{\mathsf{T}}( f) &  \mathbf{0}_{N\times 1} \\
%				\bm{\sfs}_{\mathsf{RT}}( f) & S_{\mathsf{R}}( f)\\
%			\end{bmatrix}.
%		\end{equation}
%		\begin{equation}\label{eqn: SHf}
%			\bS_{\mathsf{H}}( f) =  \begin{bmatrix}
%					\bm{\sfS}_{\mathsf{T}}(f) &  \bm{\sfs}_{\mathsf{TR}}( f)\\
%				\bm{\sfs}_{\mathsf{RT}}^T(f) & \mathsf{S}_{\mathsf{R}}( f)\\
%			\end{bmatrix}= \begin{bmatrix}
%					\bm{\sfS}_{\mathsf{T}}( f) &  \mathbf{0}_{N\times 1} \\
%				\bm{\sfs}_{\mathsf{RT}}^T( f) & \mathsf{S}_{\mathsf{R}}( f)\\
%			\end{bmatrix}.
%		\end{equation}
	\begin{equation}\label{eqn: SHf}
		\bm{\sfS}_{\mathsf{H}}( f) =  \begin{bmatrix}
			\bm{\sfS}_{\mathsf{T}}(f) &  \bm{\sfs}_{\mathsf{TR}}( f)\\
			\bm{\sfs}_{\mathsf{RT}}^T(f) & \sfS_{\mathsf{R}}( f)\\
		\end{bmatrix}= \begin{bmatrix}
			\bm{\sfS}_{\mathsf{T}}( f) &  \mathbf{0}_{N\times 1} \\
			\bm{\sfs}_{\mathsf{RT}}^T( f) & \sfS_{\mathsf{R}}( f)\\
		\end{bmatrix}.
	\end{equation}
	The incident  root power wave vector on the transmit antenna array is denoted as $\bm{\sfa}_{\mathsf{A}}( f) \in \bbC^{N\times 1}$  and the reflected root power wave vector is $\bm{\sfb}_{\mathsf{A}}( f)  \in \bbC^{N\times 1}$, as shown in Fig.~\ref{fig: ckt model}(b). 
%		The incident and reflected root power wave vector on the transmit antenna array, as shown in Fig.~\ref{fig:ckt model full}, is denoted as $\bm{\sfa}_{\mathsf{A}}( f) \in \bbC^{N\times 1}$ and $\bb_{\mathsf{A}}( f)  \in \bbC^{N\times 1}$ respectively.
		The root power wave vectors at the receiver and transmitter ports are related using $		\bm{\sfS}_{\mathsf{H}}( f)$ as 
		\begin{equation}\label{eqn: rx tx port relation}
		[\bm{\sfb}^T_{\mathsf{A}}( f), 	\mathsf{b}_{\mathsf{RS}}( f) ]^T=  \bm{\sfS}_{\mathsf{H}}(f) [	\bm{\sfa}^T_{\mathsf{A}}( f), 0]^T.
		\end{equation}
%		$\begin{bmatrix}
%			\bm{\sfb}_{\mathsf{A}}( f) \\
%			\mathsf{b}_{\mathsf{RS}}( f)
%		\end{bmatrix}=  \bS_{\mathsf{H}}(f) \begin{bmatrix}
%			\bm{\sfa}_{\mathsf{A}}( f) \\
%			0
%		\end{bmatrix}  .$
		With the unilateral approximation, we can isolate the transmitter circuit model 
		% as shown in Fig.~\ref{fig:ckt model tx}
		by writing $\bm{\sfb}_{\mathsf{A}}( f)=  	\bm{\sfS}_{\mathsf{T}}( f) \bm{\sfa}_{\mathsf{A}}( f) $.
%		For simplicity, we assume that the receiver impedance matching is perfect such that there is 
%		 no reflection from the load on the receiver side. 
Using \eqref{eqn: SHf} and \eqref{eqn: rx tx port relation}, the received signal root power wave is
		\begin{equation}\label{eqn b RS}
			\mathsf{b}_{\mathsf{RS}}( f)= \bm{\sfs}_{\mathsf{RT}}^T( f)\bm{\sfa}_{\mathsf{A}}( f).
		\end{equation}
		The unilateral approximation  enables simplification of the communication system analysis and design. It is also reasonable from a practical perspective because the signal attenuates heavily from transmitter to the receiver~\cite{ivrlac_toward_2010}.
		
		%\begin{figure}[htbp]
		%	\centering
		%	\includegraphics[width=0.8\textwidth]{}
		%	\caption{Circuit model of the transmitter obtained after applying unilateral approximation to the model in Fig.~\ref{fig:ckt model full}. The transmit source transfers power to the analog beamforming network and antenna array (equivalent load) through an impedance matching network. }
		%	\label{fig:ckt model tx}
		%\end{figure}
		The transmitter network is characterized through the transmit impedance-matching network and analog beamforming network.		
%		On the transmitter side, the analog beamforming network and the transmit impedance-matching network are described  using scattering parameters.
		The scattering parameter matrix of the analog beamforming network is denoted as the $(N+1)\times (N+1)$ complex matrix $	\bm{\sfS}_{\mathsf{F}}(f) =\begin{bmatrix}
			\sfS_{{\mathsf{F}},11}(f) & \bm{\sfs}_{{\mathsf{F}},12}^T(f)\\
			\bm{\sfs}_{{\mathsf{F}},21}(f) & \bm{\sfS}_{{\mathsf{F}},22}(f)\\
		\end{bmatrix}$, where $	\sfS_{{\mathsf{F}},11}(f)\in \bbC$, $\bm{\sfs}_{{\mathsf{F}},12}^T(f) \in \bbC^{1\times N}$, $	\bm{\sfs}_{{\mathsf{F}},21}(f)  \in \bbC^{N\times 1} $, and $\bm{\sfS}_{{\mathsf{F}},22}(f) \in \bbC^{N\times N}$. 
		%For a reciprocal beamforming network, $\bs_{F,12}(j \omega)=\bs_{F,21}^T(j \omega) $ denotes the beamforming vector. For example,  an ideal 2-way  Wilkinson power divider has $\bs_{F,12}(j \omega) = \frac{-j}{\sqrt{2}}[ 1, 1]$, $	S_{F,11}(s)=0$, and $\bS_{F,22}(j \omega)=\mathbf{0}_2$\cite{steer2019microwave}.
		The combination of the antenna array and the analog beamforming network can be treated as an equivalent load with scattering parameter denoted as $\sfS_{\mathsf{eq}}(f)$. We express $\sfS_{\mathsf{eq}}(f)$ in terms of the scattering parameter matrix elements of the array and the analog network as \cite{nie_bandwidth_2017}
		\begin{align}\label{eqn S eq}
			\sfS_{\mathsf{eq}}(f) &= 	\sfS_{{\mathsf{F}},11}(f)  + \bm{\sfs}_{{\mathsf{F}},12}^T(f) 	\bm{\sfS}_{\mathsf{T}}(f) \\ \nonumber &\times (\bI-\bm{\sfS}_{{\mathsf{F}},22}(f)	\bm{\sfS}_{\mathsf{T}}(f) )^{-1}	\bm{\sfs}_{{\mathsf{F}},21}(f).
		\end{align}
		The scattering parameter matrix of the transmit impedance-matching network, which connects the transmit source to the equivalent load,  is denoted as $\bm{\sfS}_{\mathsf{M}}(f) =\begin{bmatrix}
			\sfS_{{\mathsf{M}},11}(f) & 	\sfS_{{\mathsf{M}},12}(f)\\
			\sfS_{{\mathsf{M}},21}(f)& 	\sfS_{{\mathsf{M}},22}(f)\\
		\end{bmatrix} $. For the single antenna case, the transmitter network only consists of the matching network. As there is no analog beamformer for the single antenna, the scattering parameter of the equivalent load is the scattering parameter of the antenna, i.e., $	\sfS_{\mathsf{eq}}(f) = \sfS_{\mathsf{T}}(f) $.
	
	To establish a linear relationship between the received signal root power wave $	\mathsf{b}_{\mathsf{RS}}(f)$ and the transmit signal  root power wave $\mathsf{a}_{\mathsf{T}}(f)$, we apply the scattering parameter definition for each block shown in Fig.~\ref{fig: ckt model}(b).
		Let the incident root power wave on the combined load of antennas and analog beamforming network be denoted as $	\mathsf{a}_{\mathsf{F}}(f)$. The reflected root power wave from  the combined load is $\mathsf{b}_{\mathsf{F}}(f) =\sfS_{\mathsf{eq}}(f) 	\mathsf{a}_{\mathsf{F}}(f) $.
		%Let the incident and reflected root power wave on the combined load of antennas and analog beamforming network be denoted as $	\mathsf{a}_{\mathsf{F}}(f)$ and $	\mathsf{b}_{\mathsf{F}}(f)$ respectively where $\mathsf{b}_{\mathsf{F}}(f) =\sfS_{\mathsf{eq}}(f) 	\mathsf{a}_{\mathsf{F}}(f) $. 
		We express $\bm{\sfa}_{\mathsf{A}}(f) $ in terms of $\mathsf{a}_{\mathsf{F}}(f)$ using the scattering parameter matrix elements as 
		\begin{equation}\label{eqn  aA aF}
			\bm{\sfa}_{\mathsf{A}}(f) =(\bI-\bm{\sfS}_{{\mathsf{F}},22}(f)	\bm{\sfS}_{\mathsf{T}}(f) )^{-1}	\bm{\sfs}_{{\mathsf{F}},21}(f)	\mathsf{a}_{\mathsf{F}}(f).
		\end{equation}
		Finally,  $	\mathsf{a}_{\mathsf{F}}(f)$  is expressed in terms of the incident root power wave on the transmit impedance-matching network 
		%We express $a_{\mathsf{F}}(s)$ in terms of $a_{\mathsf{M}}(s)$ as 
		\begin{equation}\label{eqn  aF aM}
			\mathsf{a}_{\mathsf{F}}(f) = \frac{\sfS_{{\mathsf{M}},21}(f)}{1- \sfS_{{\mathsf{M}},22}(f)	\sfS_{\mathsf{eq}}(f)}	\mathsf{a}_{\mathsf{T}}(f).
		\end{equation}
		Using \eqref{eqn b RS}, \eqref{eqn S eq}, \eqref{eqn  aA aF}, \eqref{eqn  aF aM}, we relate $\mathsf{b}_{\mathsf{RS}}(f)$  to $\mathsf{a}_{\mathsf{T}}(f)$ using a channel coefficient corresponding to an equivalent SISO channel 
		\begin{equation}\label{eqn: b RS in terms of aT}
	H_{\mathsf{SISO}}(f)=	\frac{ \bm{\sfs}_{\mathsf{RT}}^T(f)  (\bI-\bm{\sfS}_{{\mathsf{F}},22}(f)	\bm{\sfS}_{\mathsf{T}}(f) )^{-1}	\bm{\sfs}_{{\mathsf{F}},21}(f) \sfS_{{\mathsf{M}},21}(f)}{1- \sfS_{{\mathsf{M}},22}(f)	\sfS_{\mathsf{eq}}(f)}	.
		\end{equation}
%		\begin{equation}\label{eqn: b RS in terms of aT}
%			\mathsf{b}_{\mathsf{RS}}(f)= \underbrace{ \bm{\sfs}_{\mathsf{RT}}(f)  (\bI-\bm{\sfS}_{{\mathsf{F}},22}(f)	\bm{\sfS}_{\mathsf{T}}(f) )^{-1}	\bm{\sfs}_{{\mathsf{F}},21}(f) \frac{\sfS_{{\mathsf{M}},21}(f)}{1- \sfS_{{\mathsf{M}},22}(f)	\sfS_{\mathsf{eq}}(f)}}_{H_{\mathsf{SISO}}(f)}	\mathsf{a}_{\mathsf{T}}(f).
%		\end{equation}
	Hence, we have 	$	\mathsf{b}_{\mathsf{RS}}(f)= H_{\mathsf{SISO}}(f)\mathsf{a}_{\mathsf{T}}(f)$. The equivalent channel in \eqref{eqn: b RS in terms of aT} depends not only on the wireless propagation channel but also captures the  frequency-selectivity effect of antennas and matching network.

		The equivalent channel expression is used for relating the received signal power spectral density to that available from the transmitter side.
		Using \eqref{eqn: b RS in terms of aT} and the definition of the power spectral density in \eqref{eqn Pt j omega}, we relate $	P_{\mathsf{RS}}(f)$ to $	P_{\mathsf{T}}(f)$ as 
		\begin{equation}
			P_{\mathsf{RS}}(f) = |H_{\mathsf{SISO}}(f)|^2 P_{\mathsf{T}}(f)\left[\frac{\text{W}}{\text{Hz}}\right].
		\end{equation}
		%The equivalent channel in \eqref{eqn: b RS in terms of aT} depends not only on the wireless propagation channel but also captures the  frequency-selectivity effect of antennas and matching network.
		The  $\mathsf{SNR}(f)$ in terms of $	P_{\mathsf{T}}(f) $ and the equivalent channel is 
		\begin{equation}\label{eqn: SNR omega}
			\mathsf{SNR}(f)= |H_{\mathsf{SISO}}(f)|^2\frac{ P_{\mathsf{T}}(f)}{\mathsf{N}_0}.
		\end{equation}
		The mutual information per unit time is $\int_0^{\infty}\log_2\left(1+ |H_{\mathsf{SISO}}(f)|^2\frac{ P_{\mathsf{T}}(f)}{\mathsf{N}_0} \right) \mathrm{d}f$.
		It depends on the  design of the matching network and the transmit power allocation at each frequency.
		We assume a bandlimited source that supplies a maximum power per frequency $\mathsf{E}_{\mathsf{s}}\left[\frac{\text{W}}{\text{Hz}}\right]$  for $f\in [f_{\mathsf{min}}, f_{\mathsf{max}}]$\cite{shyianov_achievable_2022}.  
		 For the bandwidth $B =f_{\mathsf{max}} - f_{\mathsf{min}}$,  we assumed a fixed total supplied power $B  \mathsf{E}_{\mathsf{s}} \text{W}$.		
		Assuming  that the source supplies the maximum available power at each frequency, 
		we define the achievable rate  in bits/s as 
		\begin{equation}\label{eqn: achievable rate def}
			\mathsf{R}=\int_{f_{\mathsf{min}}}^{f_{\mathsf{max}}} \log_2\left(1+ |H_{\mathsf{SISO}}(f)|^2\frac{ \mathsf{E}_{\mathsf{s}}}{\mathsf{N}_0} \right) \mathrm{d}f.
		\end{equation}
		In Section~\ref{sec opt ach rate bf con}, we optimize the achievable rate $	\mathsf{R}$ by optimally designing the matching network.

					\section{Optimizing achievable rate under Bode-Fano matching constraints}
					\label{sec opt ach rate bf con}

					In this section, we propose a general framework for optimizing the achievable rate as a function of the matching network.  The rate depends on the matching network through the term $H_{\mathsf{SISO}}(f)$ as shown in \eqref{eqn: b RS in terms of aT}
					which depends on the matching network scattering parameter matrix elements $\sfS_{{\mathsf{M}},21}(f)$ and $\sfS_{{\mathsf{M}},22}(f)$. The values for these elements at each frequency are hardware specific, i.e., dependent on the actual physical elements like the inductors and capacitors used for building the matching network circuit. One approach of optimizing rate is to define the rate objective in terms of the  physical component values for a fixed structure\cite{saab_optimizing_2022}. A limitation of this approach is that it does not guarantee that the specific circuit gives better performance theoretically  than any other physically realizable matching network. In this paper, instead of optimizing the rate in terms of the physical components of a specific matching circuit, we formulate and solve a general problem that applies to any  matching network made of passive and linear elements.  In Section~\ref{subsec general form of BF MC}, we describe the constraints associated with any general  passive matching network.
					
					%The achievable rate defined in \eqref{eqn: rate} depends on the design of the matching network, antenna array, and analog beamforming network. In this section, we investigate the question, ``What is the maximum achievable rate obtained by optimally designing impedance matching networks for a given antenna array and analog beamforming network?" In our work, we focus only on the optimal design of impedance matching networks. The joint optimization of matching network, antennas, and analog beamforming is complicated and deferred to future work.
					%
					%
					%The achievable rate metric in \eqref{eqn: rate} depends on the  bandwidth defined by the power allocation in \eqref{eqn: Ps raised cosine}. Although the bandwidth allocated for the input transmit signal is $B$, the effective bandwidth over which the power gets transferred to the antennas can be different than $B$. This effective bandwidth depends on the frequency-selectivity of the individual RF components. In our work, we focus on the bandwidth limitation of the impedance matching network between the PA and the equivalent load.

					\subsection{General form of the Bode-Fano matching constraint}\label{subsec general form of BF MC}
					
%					A practical matching network can effectively transfer power from source to load only for a specific bandwidth.
					The first systematic approach to study the bandwidth limitation of a matching network was proposed by Bode  for a special type of reactive load\cite{bode1945network}.  The work by Bode was generalized by Fano  for arbitrary reactive loads\cite{fano1950theoretical}. These results on the bandwidth limitation of matching networks are popularly known as Bode-Fano  limits\cite{steer2019microwave}. 
					Recently, a generalization of the Bode-Fano matching  limits was proposed  for arbitrary loads like an antenna array with an analog beamforming network\cite{nie_bandwidth_2017,nie_bandwidth_2017-1}.  We use these results from \cite{nie_bandwidth_2017,nie_bandwidth_2017-1} to formulate the  matching network constraints in the achievable rate optimization problem.
					
					The Bode-Fano constraints place a bound on the power loss ratio metric~\cite{nie_bandwidth_2017}. 
%					The  power loss ratio metric, defined in \cite{nie_bandwidth_2017}, is used to express the Bode-Fano constraints. 
				The constraints are expressed in terms of the power loss ratio, which indicates the ratio of	  expected power lost (due to reflection and dissipation) to the expected input power  where the expectation is over the input random signal.
					Mathematically, it is given as 
					\begin{align}\label{eqn:  power loss ratio}
						r^2(f)&=  1- \frac{\bbE[  |\sfa_{\mathsf{F}}(f)|^2-  |\sfb_{\mathsf{F}}(f)|^2 ]}{\bbE[|\sfa_{\mathsf{T}}(f)|^2]}\\ \nonumber &\stackrel{(a)}{=} 1-\frac{|\sfS_{{\mathsf{M}},21}(f) |^2}{|1- \sfS_{{\mathsf{M}},22}(f)	\sfS_{\mathsf{eq}}(f)  |^2}(1 - |	\sfS_{\mathsf{eq}}(f)|^2),
					\end{align}
					% Here, the expectation is over the input random signal. This metric indicates the ratio of   expected power lost to the expected input power. In terms of the scattering parameters of the network, we simplify  \eqref{eqn:  power loss ratio} as 
					% \begin{align}\label{eqn:  power loss ratio simplified}
						% 		r^2(\omega)\stackrel{(a)}{=} 1-\frac{|S_{M,21}(j\omega) |^2}{|1- S_{M,22}(j\omega)	\sfS_{\mathsf{eq}}(j\omega)  |^2}(1 - |	\sfS_{\mathsf{eq}}(j\omega)|^2) ,
						% \end{align}
					where equality $(a)$ follows from  the definition of  $\mathsf{b}_{\mathsf{F}}(f) $ and \eqref{eqn  aF aM}.
					We also define the transmission coefficient as
					\begin{equation}\label{eqn:  tau omega}
						\cT(f)= 1- 	r^2(f).
					\end{equation} 
					A lower value of $	r^2(f)$ or a higher value of $\cT(f)$ for a specified bandwidth indicates a better power transfer to the equivalent load in the desired band. 
%					Note that $	r^2(f)$ is also referred to as the power reflection ratio when the matching network is assumed to be lossless\cite{nie2015broadband}. 
%					The classical Bode-Fano papers also used the term reflection coefficient as those papers assumed purely reactive networks\cite{bode1945network,fano1950theoretical}.  
%%					In practice, some power is also dissipated in the matching network itself. 
%					It is more appropriate to use the power loss ratio metric which accounts for losses due to reflection and dissipation.

					%% TODO
					%% Justify the extension of j omega to s by using Fano's reference and use of Cauchy's relations
					%% state that the assumption on the load is that it should be realized with finite number of lumped elements
					%% Give example for the simple RLC circuit where K depends on the number of independent parameters in the circuit.
					%% Number of necessary conditions depend on the number of independent parameters used to completely describe the load

%					The derivation of Bode-Fano constraints is based on the application of 
				Cauchy's integral relations in complex variable calculus can be applied to any linear circuit model for deriving Bode-Fano constraints\cite{fano1950theoretical}. Mathematically, it is convenient to analyze the circuit model  as a function of the  complex frequency $s=\sigma+ \sfj 2 \pi f$\cite{fano1950theoretical}\cite{nie_bandwidth_2017-1}.
					Similar to the original Bode-Fano constraints, we  assume that the load should be realizable by means of finite passive lumped elements\cite{fano1950theoretical}.
					Therefore,  we use a rational approximation of   $\sfS_{\mathsf{eq}}(f) $  defined in the whole complex plane and denoted as  $\hat{\sfS}_{\mathsf{eq}}(s) $\cite{nie_bandwidth_2017-1}.
					We assume that $\hat{\sfS}_{\mathsf{eq}}(s)$ should be in the rational form and satisfy the passivity condition\cite{nie_bandwidth_2017-1}.
					The guidelines for obtaining  $\hat{\sfS}_{\mathsf{eq}}(s) $ from   $\sfS_{\mathsf{eq}}(f) $ are discussed in detail in \cite{nie_bandwidth_2017-1}.
					Note that $\hat{\sfS}_{\mathsf{eq}}(s) $ is not unique and depends on the technique used to approximate $\sfS_{\mathsf{eq}}(f) $.  We 
					briefly  summarize the approximation techniques from \cite{nie_bandwidth_2017-1} in Appendix~\ref{app: Seq s}.
					%  If the expression for $\sfS_{\mathsf{eq}}(f) $ can be analytically computed in terms of $\omega$, then $\sfS_{\mathsf{eq}}(s) $ is obtained by replacing $j \omega $ with $s$. If simulated or measured values of $\sfS_{\mathsf{eq}}(j \omega) $ are known for different frequencies in a band, an approximate expression for $\sfS_{\mathsf{eq}}(s) $  is obtained by using numerical fitting tools\cite{fano1950theoretical}\cite{nie_bandwidth_2017-1}. 
					
					The Bode-Fano theory provides a set of constraints on the power loss ratio $	r^2(f)$ for any passive and linear impedance-matching network terminated with a passive load realized using lumped elements.
					These constraints
					are expressed as bounds on the integral of logarithm of the power loss ratio \cite{nie_bandwidth_2017,taluja_diversity_2013,fano1950theoretical}.
					For a simple load of resistor $R$ and capacitor $C$ in parallel, there is only one Bode-Fano constraint expressed  as $\int_0^{\infty} \log\left(\frac{1}{	r^2(  f)}\right) \mathrm{d}f \leq \frac{1}{RC}$~\cite{steer2019microwave}. 
%					\begin{equation}
%							\int_0^{\infty} \log\left(\frac{1}{	r^2(  f)}\right) \mathrm{d}f \leq \frac{1}{RC}.
%					\end{equation}				
					For an arbitrary load,  the number of necessary constraints for the physical realizability of $	r^2(f)$ is determined using a  Darlington equivalent representation of the load.  
					From Darlington's theory, any physically realizable impedance is equivalent to the input impedance of a reactive two-port network terminated with a $1\Omega$  resistor\cite{darlington1939synthesis}. 
%					(assumed as 1 $\Omega$ without loss of generality)\cite{darlington1939synthesis}.
					The number of such necessary constraints on $	r^2(f)$ equals  the number of independent parameters used to define the Darlington equivalent network of the load\cite{fano1950theoretical}. 
%					For example, the Darlington equivalent of a  simple load of resistor and capacitor in parallel can be described in terms of the Q value\cite{steer2019microwave}. Hence, there is only one Bode-Fano bound on $	r^2(f)$\cite{steer2019microwave}. 
For example, for a load of resistor, inductor, and capacitor in series, the Darlington equivalent is specified using  the quality factor value and the resonant frequency which results in 
two Bode-Fano bounds\cite[Eq 11]{taluja_diversity_2013}.  
%There are two Bode-Fano bounds corresponding to two independent parameters\cite[Eq 11]{taluja_diversity_2013}.  

					For an equivalent load with rational approximation $\hat{\sfS}_{\mathsf{eq}}(s) $, we assume there are $N_{\mathsf{BF}}$ number of distinct constraints for describing the physical realizability of the power loss ratio. 
					%From a system modeling point of view, $N_{\mathsf{BF}}$ is not unique for a given load because the approximation $\hat{\sfS}_{\mathsf{eq}}(s) $ is not unique. 
					%It depends on the rational model $\hat{\sfS}_{\mathsf{eq}}(s) $  obtained from the measured, simulated, or analytical expression of  $\sfS_{\mathsf{eq}}(j \omega) $.
					% We discuss the choice of  $N_{\mathsf{BF}}$  in Section~\ref{sec matching network design and illustrations}.
					For the $i$th constraint where $i\in \{1, 2, \dots, N_{\mathsf{BF}}\}$,  we define two positive terms $\xi_{\mathsf{BF}, i}( f)$ and  $B_{\mathsf{BF},i}$.
					The term $\xi_{\mathsf{BF}, i}(f)$  is a prelog term in the integrand which is multiplied by the 
					logarithm of the power loss ratio. The term $B_{\mathsf{BF},i}$ is an upper bound on the Bode-Fano integral.  
%					These constraints are mathematically stated as follows.
						For a load whose scattering parameter after rational approximation is $\hat{\sfS}_{\mathsf{eq}}(s) $, the $N_{\mathsf{BF}}$ distinct constraints required for the realizability of  power loss ratio  $r^2( f)$ are\cite{nie_bandwidth_2017}
					\begin{equation}\label{eqn: N BF constraints}
						\int_0^{\infty}\!\!\! \xi_{\mathsf{BF}, i}(  f) \log\left(\frac{1}{	r^2(  f)}\right) \mathrm{d}f \leq B_{\mathsf{BF}, i}, \text{for } \{i\} _{1}^{N_{\mathsf{BF}}} ,
					\end{equation}
					where  $\xi_{\mathsf{BF}, i}( f)$  and $B_{\mathsf{BF},i}$ are positive terms evaluated using   $\hat{\sfS}_{\mathsf{eq}}(s) $  as discussed in Appendix~\ref{app: fbfi bfi}.
%					We now mathematically state the constraints on the power loss ratio as follows.
%					\begin{theorem}\label{thm: BF constraints}
%						For a load whose scattering parameter after rational approximation is $\hat{\sfS}_{\mathsf{eq}}(s) $, the $N_{\mathsf{BF}}$ distinct constraints required for the realizability of  power loss ratio  $r^2( f)$ are\cite{nie_bandwidth_2017}
%						\begin{equation}\label{eqn: N BF constraints}
%							\int_0^{\infty}\!\!\! \xi_{\mathsf{BF}, i}(  f) \log\left(\frac{1}{	r^2(  f)}\right) \mathrm{d}f \leq B_{\mathsf{BF}, i}, \text{for } \{i\} _{1}^{N_{\mathsf{BF}}} ,
%						\end{equation}
%						where  $\xi_{\mathsf{BF}, i}( f)$  and $B_{\mathsf{BF},i}$ are positive terms evaluated using   $\hat{\sfS}_{\mathsf{eq}}(s) $  as discussed in Appendix~\ref{app: fbfi bfi}.
%					\end{theorem}
%					\begin{proof}
%						The detailed proof of the general form of the Bode-Fano constraint is given in \cite{nie_bandwidth_2017-1}. 
%					\end{proof}
				The detailed proof of the general form of the Bode-Fano constraint is given in \cite{nie_bandwidth_2017-1}. 
					
					In Section~\ref{subsec rate opt formulation}, we will use the constraints defined in \eqref{eqn: N BF constraints} for formulating the achievable rate optimization problem as a function of the transmission coefficient  $\cT(f)$ defined in \eqref{eqn:  tau omega}. Before proceeding to the problem formulation, we first rewrite the achievable rate  in terms of the transmission coefficient so that both constraints and objective in the optimization can be expressed as function of the  variable 
					$\cT(f)$.
					%  For writing the Bode-Fano constraints, the 
					% The Bode-Fano constraint is generally expressed in the form of an inequality, \cite[Eq. 10]{nie_bandwidth_2017}. The left hand side is expressed as integral over $f$ using the power loss ratio metric $	r^2(2\pi f)$ and a positive function $f_{\mathsf{BF}}( 2\pi f)$ whose form depends on the equivalent load $\sfS_{\mathsf{eq}}(s)$. The right hand side is denoted as $B_{\mathsf{BF}}$ whose value depends on the equivalent load too. Mathematically, the  general form of the Bode-Fano constraint is given as 
					% \begin{equation}
						% 	\int_0^{\infty} f_{\mathsf{BF}}( 2\pi f) \log\left(\frac{1}{	r^2(\sfj 2 \pi  f)}\right) \mathrm{d}f \leq B_{\mathsf{BF}}.
						% \end{equation}
					
					%As we are only concerned with the design of optimal matching network, the optimization problem formulation and solution approach do not depend on the  exact form of $f_{\mathsf{BF}}(f)$ and $B_{\mathsf{BF}}$. 
					%The actual solution will depend on the exact form though.
					%The general rules for determining the exact form of   $f_{\mathsf{BF}}(f)$ and $B_{\mathsf{BF}}$ are discussed in  \cite{nie_bandwidth_2017}\cite{nie_bandwidth_2017-1}. In Section \ref{sec matching network design and illustrations}, we use these rules to evaluate the Bode-Fano bounds, solve the achievable rate optimization problem, and design a matching network circuit   for a given antenna array and beamforming network.
					
					%\subsection{Achievable rate objective in terms of the transmission coefficient}\label{subsec achievable rate obj in terms of the transmission coefficient }
					
					\subsection{Achievable rate optimization problem formulation}\label{subsec rate opt formulation}
					
					The achievable rate metric depends on the matching network through the scattering parameters $\sfS_{{\mathsf{M}},21}(f)$ and $\sfS_{{\mathsf{M}},22}(f)$ as shown through  \eqref{eqn: b RS in terms of aT} and \eqref{eqn: achievable rate def}. 
					Using  \eqref{eqn: b RS in terms of aT}, \eqref{eqn: SNR omega},   \eqref{eqn:  power loss ratio},   and \eqref{eqn:  tau omega},  we express $\mathsf{SNR}(f)$ in terms of $\cT(f)$ as 
					\begin{align}\label{eqn: SNR omega in terms of tau}
						\mathsf{SNR}(f)&= |\bm{\sfs}_{\mathsf{RT}}^T(f)  (\bI-\bm{\sfS}_{{\mathsf{F}},22}(f)	\bm{\sfS}_{\mathsf{T}}(f) )^{-1}	\bm{\sfs}_{{\mathsf{F}},21}(f)|^2  \\ \nonumber &\times\frac{ P_{\mathsf{T}}(f) \cT(f)}{(1 - |	\sfS_{\mathsf{eq}}(f)|^2)\mathsf{N}_0}    .
					\end{align}
					From \eqref{eqn: SNR omega in terms of tau}, the  achievable rate expression is 
					\begin{align}\label{eqn R in terms of Tau}
						\mathsf{R}=&\int_{f_{\mathsf{min}}}^{f_{\mathsf{max}}}\!\!\!\! \log_2\bigg(1+ \frac{|\bm{\sfs}_{\mathsf{RT}}^T( f)  (\bI-\bm{\sfS}_{{\mathsf{F}},22}( f)	\bm{\sfS}_{\mathsf{T}}(f) )^{-1}	\bm{\sfs}_{{\mathsf{F}},21}( f)|^2}{(1 - |	\sfS_{\mathsf{eq}}( f)|^2)} \\ \nonumber&\times \frac{ \mathsf{E}_{\mathsf{s}}}{\mathsf{N}_0} \cT( f) \bigg) \mathrm{d}f.
					\end{align}
					In the ideal matching network case, i.e., no power loss due to reflection or dissipation,  $\cT(f)=1$, and the ideal lossless SNR is expressed as
					\begin{equation}\label{eqn: SNR ideal}
						\mathsf{SNR}_{\text{ideal}}(f)= \frac{|\bm{\sfs}_{\mathsf{RT}}^T(f)  (\bI-\bm{\sfS}_{{\mathsf{F}},22}(f)	\bm{\sfS}_{\mathsf{T}}(f) )^{-1}	\bm{\sfs}_{{\mathsf{F}},21}(f)|^2}{(1 - |	\sfS_{\mathsf{eq}}(f)|^2)}  \frac{ \mathsf{E}_{\mathsf{s}}}{\mathsf{N}_0} .
					\end{equation}
					The achievable rate for the ideal case is $\mathsf{R}_{\text{ideal}}=\int_{f_{\mathsf{min}}}^{f_{\mathsf{max}}}  \log_2\left(1+ \mathsf{SNR}_{\text{ideal}}( f) \right) \mathrm{d}f$. The ideal SNR depends on the wireless propagation channel, the scattering parameters of the antenna array, and the analog beamforming network but does not depend on the matching network.
					%The achievable rate in \eqref{eqn: rate} can be expressed in terms of the ideal $\mathsf{SNR}$ and the transmission coefficient as 
					%\begin{equation}\label{eqn: rate in terms of transmission coefficient}
					%	R= \int_0^{\infty} \log_2(1+ \mathsf{SNR}_{\text{ideal}}(\sfj 2 \pi  f)  \cT(\sfj 2 \pi  f)) \mathrm{d}f.
					%\end{equation}
					For a physically realizable matching network, $\cT(f)\leq 1$. Therefore, $\mathsf{R} \leq  \mathsf{R}_{\text{ideal}}$ meaning that the achievable rate is over-estimated when Bode-Fano constraints are disregarded.

					We formulate the achievable rate optimization problem to  optimally design the transmission coefficient $\cT(f)$.
%					, similar to \cite{shyianov_achievable_2022}. 
					In \eqref{eqn R in terms of Tau}, we defined the optimization objective in terms of $\cT( f)$. Similarly, the Bode-Fano inequalities from \eqref{eqn: N BF constraints} can be expressed in terms of $\cT(  f)$ using  \eqref{eqn:  tau omega}.
					%\begin{equation}\label{eqn bf inequality tau f}
					%\int_0^{\infty} f_{\mathsf{BF}}(\sfj 2 \pi  f) \log\left(\frac{1}{	1- \cT(\sfj 2 \pi  f)}\right) \mathrm{d}f \leq B_{\mathsf{BF}}.
					%\end{equation}
					%  For a given equivalent load, there can be multiple Bode-Fano inequality constraints.  The number of these constraints depends on the  scattering parameter of the equivalent load $\sfS_{\mathsf{eq}}(s)$. Let us assume that there are $K$ such constraints for a given $\sfS_{\mathsf{eq}}(s)$.  For the $i$th constraint, we define $\xi_{\mathsf{BF},i}(\sfj 2 \pi  f)$ and $B_{\mathsf{BF},i}$. 
					The achievable rate optimization problem  in terms  of   $\cT(f)$ using \eqref{eqn: N BF constraints},  \eqref{eqn R in terms of Tau}, and  \eqref{eqn: SNR ideal} is 
					\begin{subequations}\label{problem1}
						\begin{alignat}{3}
							&\mathsf{R}_{\text{max}}= \underset{ \cT(  f) }{\mbox{ max }}
							\int_{f_{\mathsf{min}}}^{f_{\mathsf{max}}} \log_2(1+ \mathsf{SNR}_{\text{ideal}}( f)  \cT( f)) \mathrm{d}f
							, \\
							&\text{s.t. }\!\!\!
							\int_0^{\infty}\!\!\! \xi_{\mathsf{BF},i}( f) \log\left(\frac{1}{	1- \cT(  f)}\right) \mathrm{d}f \leq B_{\mathsf{BF},i}	,  \text{for } \{i\} _{1}^{N_{\mathsf{BF}}}  \label{problem1_a} \\
							&0\leq \cT(f) \leq 1 .\label{problem1_b}
						\end{alignat}
					\end{subequations}
					%The solution approach to this problem does not depend on the value of $K$.
					%The general rule for determining the exact value of $K$ is given in \cite{nie_bandwidth_2017}\cite{nie_bandwidth_2017-1}.
					%In Section~\ref{sec matching network design and illustrations}, we 
					% provide examples for some specific cases.
					The constraint in \eqref{problem1_b} follows from the definition of the transmission coefficient in \eqref{eqn:  tau omega}. 
%					Although this mathematical formulation in terms of the transmission coefficient is similar to \cite{shyianov_achievable_2022}, 
Comparing this  formulation to \cite{shyianov_achievable_2022},
					the key difference is that $\cT(f)$ was defined on the receiver side between a single antenna and the low-noise amplifier in \cite{shyianov_achievable_2022}. This led to $\cT(f)$ appearing in both signal power and the extrinsic noise power in \cite[Eq 21]{shyianov_achievable_2022}.
					The problem formulation in our work expressed in \eqref{problem1} uses the $\cT(f)$ defined between the transmit RF chain and the equivalent load of  multiple transmit antennas and analog beamforming network. So $\cT(f)$ appears only in the signal power leading to an  optimal solution expression different from \cite[Eq 25]{shyianov_achievable_2022}.

					\subsection{Optimal transmission coefficient}

%					We formulate the Lagrangian function associated with \eqref{problem1}.
	We use  the
Lagrangian  to solve the optimization problem in \eqref{problem1}. 
					 The total number of constraints  in  \eqref{problem1_a} and \eqref{problem1_b} is $N_{\mathsf{BF}}+2$. For the $i$th constraint, we denote the Lagrangian parameter  as $\mu_i$. The Lagrangian is \cite{boyd2004convex}
					\begin{align}\label{eqn: lagrangian}
						&	\cL\left(\cT(  f), \mu_i |_{i=1}^{N_{\mathsf{BF}}+2}\right)=	- \int_{f_{\mathsf{min}}}^{f_{\mathsf{max}}} \log_2(1+ \mathsf{SNR}_{\text{ideal}}( f)  \cT(  f)) \mathrm{d}f \nonumber\\ &+\sum_{i=1}^{N_{\mathsf{BF}}} \mu_i \left(\int_0^{\infty} \xi_{\mathsf{BF},i}(  f) \log\left(\frac{1}{	1- \cT( f)}\right) \mathrm{d}f - B_{\mathsf{BF},i}  \right)\nonumber\\ &-\mu_{N_{\mathsf{BF}}+1} \cT( f) + \mu_{N_{\mathsf{BF}}+2} (\cT( f)-1).
					\end{align}
%					We use  the
%					Lagrangian function in \eqref{eqn: lagrangian} to solve the optimization problem in \eqref{problem1}.  
The solution to \eqref{problem1} is obtained after applying \textit{Karush-Kuhn-Tucker} (KKT) conditions in Appendix~\ref{app: proof thm2}\cite{shyianov_achievable_2022}. The maximum rate is defined in terms of the optimal transmission coefficient $\cT^{\star}( f)$ as 
\begin{equation}
\mathsf{R}_{\text{max}}= 
\int_{f_{\mathsf{min}}}^{f_{\mathsf{max}}} \log_2(1+ \mathsf{SNR}_{\text{ideal}}( f) \cT^{\star}( f)) \mathrm{d}f.
\end{equation}
%$$. 
The expression for $\cT^{\star}( f)$ is in terms of the optimal Lagrangian parameters  $\mu^{\star}_i |_{i=1}^{N_{\mathsf{BF}}} $ described as follows.
					\begin{theorem}\label{thm: t and mu}
%						Using the Lagrangian function in \eqref{eqn: lagrangian}, the solution to \eqref{problem1} is expressed in terms of $(N_{\mathsf{BF}}+1)$ variables which include $\cT^{\star}(f) $ and $\mu^{\star}_i |_{i=1}^{N_{\mathsf{BF}}} $   that satisfy the  conditions 	
The relationship between the variables $\cT^{\star}(f) $ and $\mu^{\star}_i |_{i=1}^{N_{\mathsf{BF}}} $ corresponding to the optimal solution of the
optimization problem in \eqref{problem1} is as follows.
						\begin{subequations}\label{eqn thm}
							\begin{alignat}{3}
								&\cT^{\star}( f) = \left[\frac{1- \ln 2 \frac{\sum_{i=1}^{{N_{\mathsf{BF}}}} \mu^{\star}_i \xi_{\mathsf{BF},i}( f)}{\mathsf{SNR}_{\text{ideal}}( f) } }{1+ \ln 2\sum_{i=1}^{N_{\mathsf{BF}}} \mu^{\star}_i \xi_{\mathsf{BF},i}(f)}\right]^{+}, \quad  \mu^{\star}_i |_{i=1}^{N_{\mathsf{BF}}} \geq 0, \label{eqn: thm1}\\
								& \mu_i^{\star}	\!\left(\int_0^{\infty}\!\!\! \xi_{\mathsf{BF},i}( f) \log\left(\frac{1}{	1- 	\cT^{\star}(  f) }\right) \mathrm{d}f   - B_{\mathsf{BF},i}\right)=0,  \{i\} _{1}^{N_{\mathsf{BF}}},\label{eqn: thm2}\\
								&\left(\int_0^{\infty} \!\!\!\xi_{\mathsf{BF},i}( f) \log\left(\frac{1}{	1- 	\cT^{\star}( f) }\right) \mathrm{d}f   - B_{\mathsf{BF},i}\right)\leq 0,\{i\} _{1}^{N_{\mathsf{BF}}}\label{eqn: thm3}.
							\end{alignat}
						\end{subequations}
						
					\end{theorem}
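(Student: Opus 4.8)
The plan is to treat \eqref{problem1} as a convex program over the function variable $\cT(\cdot)$ and to read off the stated relations as its Karush--Kuhn--Tucker (KKT) system. First I would check the structural hypotheses. The objective $\int_{f_{\mathsf{min}}}^{f_{\mathsf{max}}}\log_2(1+\mathsf{SNR}_{\text{ideal}}(f)\,\cT(f))\,\mathrm{d}f$ is concave in $\cT(\cdot)$ because $t\mapsto\log_2(1+at)$ is concave for $a\ge0$; each Bode--Fano functional $\cT(\cdot)\mapsto\int_0^{\infty}\xi_{\mathsf{BF},i}(f)\log(1/(1-\cT(f)))\,\mathrm{d}f$ is convex because $t\mapsto-\log(1-t)$ is convex on $[0,1)$ and $\xi_{\mathsf{BF},i}(f)\ge0$; and the box constraints in \eqref{problem1_b} are affine. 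Hence \eqref{problem1} maximizes a concave functional over a convex feasible set, so any $\cT(\cdot)$ satisfying the KKT conditions is globally optimal. I would also note that Slater's condition holds (e.g.\ $\cT\equiv\epsilon$ for small $\epsilon>0$ is strictly feasible whenever the $B_{\mathsf{BF},i}$ are positive), so strong duality holds and the KKT conditions are necessary at the optimum as well.

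Next I would form the Lagrangian \eqref{eqn: lagrangian} with multipliers $\mu_i\ge0$ and compute its pointwise (first-variation) derivative with respect to $\cT(f)$ frequency by frequency. Using $\frac{\partial}{\partial\cT}\log_2(1+\mathsf{SNR}_{\text{ideal}}\cT)=\frac{1}{\ln2}\frac{\mathsf{SNR}_{\text{ideal}}}{1+\mathsf{SNR}_{\text{ideal}}\cT}$ and $\frac{\partial}{\partial\cT}\log(1/(1-\cT))=\frac{1}{1-\cT}$, stationarity reads
\[ -\frac{1}{\ln2}\,\frac{\mathsf{SNR}_{\text{ideal}}(f)}{1+\mathsf{SNR}_{\text{ideal}}(f)\,\cT^{\star}(f)}+\frac{\sum_{i=1}^{N_{\mathsf{BF}}}\mu^{\star}_i\,\xi_{\mathsf{BF},i}(f)}{1-\cT^{\star}(f)}-\mu^{\star}_{N_{\mathsf{BF}}+1}+\mu^{\star}_{N_{\mathsf{BF}}+2}=0. \]
On the set of frequencies where neither box constraint is active, complementary slackness forces $\mu^{\star}_{N_{\mathsf{BF}}+1}=\mu^{\star}_{N_{\mathsf{BF}}+2}=0$; writing $g(f):=\sum_{i=1}^{N_{\mathsf{BF}}}\mu^{\star}_i\,\xi_{\mathsf{BF},i}(f)\ge0$ and solving the resulting scalar equation $\frac{1}{\ln2}\frac{\mathsf{SNR}_{\text{ideal}}}{1+\mathsf{SNR}_{\text{ideal}}\cT}=\frac{g}{1-\cT}$ for $\cT$ yields $\cT^{\star}(f)=\frac{1-\ln2\,g(f)/\mathsf{SNR}_{\text{ideal}}(f)}{1+\ln2\,g(f)}$, which is exactly \eqref{eqn: thm1}. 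I would then verify the box constraints are handled automatically: since $g(f)\ge0$ the numerator is $\le1$ while the denominator is $\ge1$, so the expression is always $\le1$ and the upper constraint stays inactive; when the expression is negative, the active lower constraint $\cT^{\star}(f)\ge0$ pins it to $0$, which is precisely the effect of the $[\,\cdot\,]^{+}$ operator. The remaining relations \eqref{eqn: thm2} and \eqref{eqn: thm3} are just complementary slackness and primal feasibility for the $N_{\mathsf{BF}}$ Bode--Fano constraints, and $\mu^{\star}_i\ge0$ is dual feasibility, which closes the characterization.

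The step I expect to be the main obstacle is the rigorous handling of the infinite-dimensional (functional) optimization: justifying differentiation under the integral sign, showing that the pointwise stationarity condition is equivalent to vanishing of the first variation of $\cL$ over the admissible cone of perturbations of $\cT(\cdot)$, and pinning down the function class (measurable $\cT$ with all the relevant integrals finite) in which the KKT machinery applies. This is addressed in the literature by restricting to that class and invoking standard convex KKT theory in normed spaces, following \cite{shyianov_achievable_2022,boyd2004convex}; I would carry this out in Appendix~\ref{app: proof thm2}. A secondary, purely algebraic point is to organize the case split (upper box always inactive, lower box possibly active) cleanly so that the single closed-form expression with $[\,\cdot\,]^{+}$ in \eqref{eqn: thm1} covers all frequencies uniformly.
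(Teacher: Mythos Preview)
Your proposal is correct and follows essentially the same route as the paper's proof in Appendix~\ref{app: proof thm2}: form the Lagrangian \eqref{eqn: lagrangian}, take the first variation in $\cT(\cdot)$, read off the pointwise stationarity equation, solve it for $\cT^{\star}(f)$, and then absorb the lower box constraint into the $[\,\cdot\,]^{+}$. The paper carries out the variation explicitly via $\cT^{\star}(f)+\epsilon\chi(f)$ and the chain rule, but the resulting scalar equation and its solution are identical to yours; your additional remarks on concavity/convexity, Slater's condition, and the automatic inactivity of the upper box constraint are more careful than what the paper records but do not change the argument.
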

					\begin{proof}
						Refer to Appendix~\ref{app: proof thm2} for proof.
					\end{proof}
					
					The expression for the optimal transmission coefficient computed using \eqref{eqn thm} can be interpreted as waterfilling in the frequency domain which is a fundamental result by Shannon in information theory\cite{1697831}.  From \eqref{eqn: thm1}, we observe that $\cT^{\star}( f)$ is higher for frequencies with better  $\mathsf{SNR}_{\text{ideal}}( f)$.
					As $\mathsf{SNR}_{\text{ideal}}( f) $ is inversely proportional to bandwidth, the  peak gain in $\cT^{\star}( f)$ for higher bandwidths is lower and vice versa.  This shows that the  fundamental gain-bandwidth tradeoff of matching networks is captured in the expression of $\cT^{\star}( f)$ in  \eqref{eqn: thm1}.

					The variables $\cT^{\star}(f) $ and $\mu^{\star}_i |_{i=1}^{N_{\mathsf{BF}}} $ corresponding to the optimal solution of the
					optimization problem in \eqref{problem1} are tightly coupled in the equations \eqref{eqn: thm1}, \eqref{eqn: thm2}, and \eqref{eqn: thm3}.
%					As \eqref{eqn: thm2} requires numerical integration, it is difficult to exactly satisfy   \eqref{eqn: thm2} for all $\{i\} _{1}^{N_{\mathsf{BF}}}$.
					We use a numerical approach to compute a sub-optimal solution.
					We set all but one Lagrangian parameters to 0, apply a bisection search on the non-zero parameter till \eqref{eqn: thm2} is satisfied within a threshold, and repeat this process for all parameters to obtain the values of $\mu^{\star}_i |_{i=1}^{N_{\mathsf{BF}}} $ that maximize the rate.	This low-complexity numerical approach ensures that 	\eqref{eqn: thm2} is satisfied  for $N_{\mathsf{BF}}-1$ parameters and   within a specific tolerance for one parameter.			
%					The values for  $\mu^{\star}_i |_{i=1}^{N_{\mathsf{BF}}} $ can be obtained numerically by applying a  bisection search on each 	Lagrangian parameter in a fixed search space  $\mu^{\star}_i\in [0, \mu_{\mathsf{max}}]$. 
					 The approximate solution for  $\cT^{\star}(f) $ is obtained by substituting the optimized values of $\mu^{\star}_i |_{i=1}^{N_{\mathsf{BF}}} $ in 
					 % the closed-form expression 
					  \eqref{eqn: thm1}.
%					  The optimized transmission coefficient $\cT^{\star}(f) $ from \eqref{eqn: thm1} is substituted in \eqref{eqn R in terms of Tau} to compute the upper bound on the achievable rate $R_{\text{max}}$.  
%					An upper bound on $R_{\text{max}}$ is then given by substituting
%					$\cT^{\star}(f) $ from \eqref{eqn: thm1} in \eqref{eqn R in terms of Tau}.
					% The value of $R_{\text{max}}$ will always be less than $R_{\text{ideal}}$ because of the condition $\cT^{\star}(  f)  - 1 \leq  0$. 

					All passive and linear matching networks will provide an achievable rate less than the value of $\mathsf{R}_{\text{max}}$.
					This maximum achievable rate based on Bode-Fano bounds is more accurate than $\mathsf{R}_{\text{ideal}}$ which disregards the matching theory.					
					This rate $\mathsf{R}_{\text{max}}$ is a new benchmark for designing  matching networks instead of a metric like power transfer efficiency which does not capture the effect of wireless propagation channel.
%					A merit of using this benchmark is that the optimized matching network circuit can be designed not only for a specific antenna and array but also for a  given wireless propagation channel and the analog beamforming as shown by the dependence of $\cT^{\star}(  f)$ on $\mathsf{SNR}_{\text{ideal}}( f)$ in \eqref{eqn: thm1}. 

%					\begin{remark} Water-filling interpretation-
%						The mathematical formulation in  \eqref{problem1} in terms of the transmission coefficient and its optimal expression in \eqref{eqn: thm1} has the interpretation of water-filling which is a fundamental result by Shannon in information theory\cite{1697831}. The intuition is that $\cT(f)$ should be higher for frequencies with better  $\mathsf{SNR}_{\text{ideal}}( f)$.
%						\end{remark}
					
%					Unfortunately, the optimal transmission coefficient does not directly provide an optimal matching network circuit.  It is  a benchmark that can be used for designing the matching network.
%					In Section IV, we show a practical method to attain this benchmark based on the matching network circuit design using the ADS software.

					\section{Matching network circuit design methodology and illustrations }\label{sec matching network design and illustrations}
					
					%In Section~\ref{sec opt ach rate bf con}, we addressed the  first question, ``What is the upper bound on the  achievable rate of a MISO system over all physically realizable matching networks which satisfies the Bode-Fano wideband matching constraints?" 
					From a system design perspective, it is crucial to provide a practical methodology to approximate the theoretical achievable rate bound from Section~\ref{sec opt ach rate bf con}. 
					In this section, we  address the second question, ``How to design impedance-matching networks to approximate this achievable rate bound?"
					We propose a practical matching network design approach using $\cT^{\star}( f ) $ from \eqref{eqn: thm1}.
					%We first describe the general approach in  Section~\ref{subsec: general methodology}, its relation to the state-of-the-art in Section~\ref{eqn relation sota}, and finally
					%demonstrate the design methodology through some specific examples in Section~\ref{subsec: single dipole} and Section~\ref{subsec: two dipoles}.  
					
					\subsection{General methodology to design matching network }\label{subsec: general methodology}
					
					We provide a three step procedure to design a matching network based on the  achievable rate upper bound as follows.
					\begin{enumerate}
						\item \textbf{Evaluation of Bode-Fano constraints }
						\begin{enumerate}\label{step: 1a}
							\item For a given scattering matrix of an antenna $	\bm{\sfS}_{\mathsf{T}}( f)$ and analog beamforming network $	\bm{\sfS}_{\mathsf{F}}(f)$, obtain a passive rational approximation as a function of the complex frequency to evaluate the rational function of the scalar equivalent load $\hat{\sfS}_{\mathsf{eq}}(s)$.
							% This can be done analytically for special cases and numerically for general cases using the \textit{rationalfit} function in MATLAB\cite{gustavsen1999rational}. 
							\item
							Evaluate $N_{\mathsf{BF}}$  Bode-Fano constraints  using the expression of $\hat{\sfS}_{\mathsf{eq}}(s)$ based on Table~\ref{tab: eval bf functions} in Appendix~\ref{app: fbfi bfi}.		
							%		\item Determine the number of constraints $N_{\mathsf{BF}}$ based on the expression of $\sfS_{\mathsf{eq}}(s)$. Note that the number of constraints  depends on the number of solutions to 
							%		\begin{equation}\label{eqn: seq root}
								%		\sfS_{\mathsf{eq}}(-s_0)\sfS_{\mathsf{eq}}(s_0)-1=0.
								%		\end{equation}
							%	The value of  $s_0$ in \eqref{eqn: seq root}  can be obtained analytically or numerically by using \textit{vpasolve} function in MATLAB.
							%	 Note that $s_0$ and $-s_0$  or conjugate pair solutions of $s_0$  lead to same constraint inequality\cite{nie_bandwidth_2017}. Also, if there is a repeated root, then for each root there is one constraint \cite{nie_bandwidth_2017,nie_bandwidth_2017-1}. 
							%\item For each $s_0$ obtained by solving \eqref{eqn: seq root}, there exists a constraint on $\cT(\sfj 2 \pi  f)$ in the form  of \eqref{eqn bf inequality tau f}. The expression of $f_{\mathsf{BF}}(\sfj 2 \pi  f)$ and $B_{\mathsf{BF}}$ depends on the location of $s_0$ in the complex plane as shown in \cite[Table 1]{nie_bandwidth_2017}. Formulate $K$ Bode-Fano constraints for \eqref{problem1_a} by following the guidelines in \cite{nie_bandwidth_2017,nie_bandwidth_2017-1}.
						\end{enumerate}
						\item\textbf{Optimal transmission coefficient}
						\begin{enumerate}
							\item  Solve the optimization problem \eqref{problem1} for the $N_{\mathsf{BF}}$  Bode-Fano constraints by numerically solving the system of equations and inequalities given by \eqref{eqn thm}.
							\item Compute  $\cT^{\star}( f ) $ for the optimized $\mu^{\star}_i |_{i=1}^{N_{\mathsf{BF}}} $ using \eqref{eqn: thm1}.
							%		\item Determine a piece-wise approximation for the continuous function $\cT^{\star}( f) $ denoted as $\cT^{\star}_{\mathsf{piece-wise}}( f) $.
						\end{enumerate}
						\item \textbf{Approximating $\cT^{\star}( f ) $ with a practical matching network topology}
						\begin{enumerate}
							\item Choose a general reactive ladder circuit with a fixed order.
							%	For example, for a lumped element based matching network, an LC ladder, i.e., a ladder consisting of inductors and capacitors can be used.  In ADS software, define the design variables for the chosen matching network topology.
							%	\item Set an optimization goal in ADS using the piece-wise function $\cT^{\star}_{\mathsf{piece-wise}}( f ) $. 
							%	\item Optimize the matching network topology for the chosen design variables and optimization goal to minimize the fitting error.
							\item Optimize the component
							values of the matching network circuit topology to fit the
							desired frequency response of the optimal transmission
							coefficient $\cT^{\star}( f ) $.
						\end{enumerate}
						
					\end{enumerate}

%					In the first step, a precaution to be taken is to avoid model over-fitting. 
					In the first step, 	overfitting when approximating $\sfS_{\mathsf{eq}}(f) $ with $\hat{\sfS}_{\mathsf{eq}}(s) $ can result in loose Bode-Fano bounds.
					Sometimes, there exists poles and zeros in the rational approximation which are  close to each other.
					As observed from expression of $B_{\mathsf{BF}}$  from Table~\ref{tab: eval bf functions} in Appendix~\ref{app: fbfi bfi}, overfitting may result in higher bounding values of $B_{\mathsf{BF}}$\cite{nie_bandwidth_2017-1}.  
					Overfitting is an issue if the computed value of  $\cT^{\star}( f )$ is close to one even for higher bandwidths. The gain-bandwidth tradeoff will not be captured due to overfitting.
					This issue can be avoided by reducing the model order.

						\begin{figure}[htbp]
						\centering
						\includegraphics[width=0.5\textwidth]{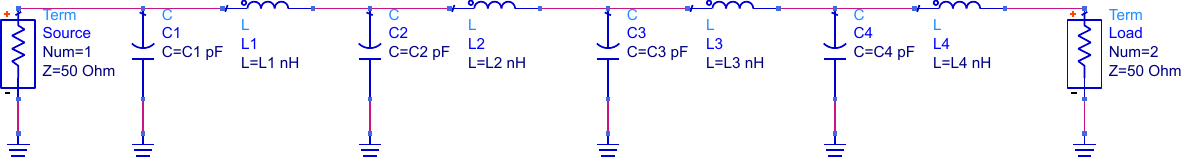}
						\caption{ Circuit model of the 4th order LC ladder  in ADS software.  This simple topology can be  used to approximate the transmission coefficient response obtained from the optimization problem.}
						\label{fig: SISO schematic order 4 MN}
					\end{figure}

					\subsection{Application of matching network design methodology to a single Chu's antenna }\label{subsec: single dipole}
					
					We present numerical illustrations for the matching network design methodology.
					For simulations, we use  Chu's antenna at  receiver and transmitter, similar to \cite{shyianov_achievable_2022}, to  provide generic insights without using a specific antenna design.
					Although Bode-Fano bounds depend on the antenna type, the achievable rate optimization methodology is general and can be applied to any antenna design.					
					
				\noindent \textbf{Chu's antenna model}: 					
					Let the Chu's antenna structure be enclosed in a spherical volume with radius $a$. Let the speed of light be denoted as $c$ and characteristic resistance  be denoted as $R$. The Chu's electric antenna is modeled with an equivalent circuit consisting of a capacitor $C= \frac{a}{c R}$ in series with a parallel combination of an inductor $L=\frac{a R}{c}$ and resistor $R$ \cite[Fig. 2]{shyianov_achievable_2022}.
					The input impedance is $\sfZ_{\mathsf{T}}(f)= \sfZ_{\mathsf{R}}(f)= \frac{R}{\sfj 2 \pi  f \frac{a}{c}}+ \frac{R}{1+({\sfj 2 \pi  f \frac{a}{c}})^{-1}}$. The scattering parameter in rational form is $\hat{\sfS}_{\mathsf{T}}(s)= ({2 s^2 \frac{a^2}{c^2}+ 2 s \frac{a}{c}+1})^{-1}.$
					%\begin{equation}\label{eqn: S T Chu}
					%\hat{\sfS}_{\mathsf{T}}(s)= \frac{1}{2 s^2 \frac{a^2}{c^2}+ 2 s \frac{a}{c}+1}.
					%\end{equation}
					For a single  antenna, there is no analog beamforming network, hence, $	\hat{\sfS}_{\mathsf{eq}}(s) = 	\hat{\sfS}_{\mathsf{T}}(s)$.
					
				\noindent	\textbf{Bode-Fano bounds for a single Chu's antenna}:		
						%The scattering parameter function is in the rational form. Hence, there is no need of further approximation. We can find the roots of \eqref{eqn: seq root} analytically. 
					Substituting $\hat{\sfS}_{\mathsf{eq}}(s)$ in \eqref{eqn: seq root}, we obtain $s^4=0$. For repeated roots with multiplicity 4, we apply \cite[Eq. 22]{nie_bandwidth_2017-1} to derive the  bounds 
					\begin{subequations}\label{eqn: bf chu antenna}
						\begin{alignat}{3}
							&\int_0^{\infty} \frac{1}{2 \pi^2 f^2} \log\left( \frac{1}{1 - \cT( f )}\right) \mathrm{d}f \leq \frac{2a}{c}, \label{eqn: bf chu antenna1} \\ 
							& \int_0^{\infty} \frac{1}{8\pi^4 f^4} \log\left( \frac{1}{1 - \cT(f )}\right) \mathrm{d}f \leq \frac{4a^3}{3c^3}.\label{eqn: bf chu antenna2}
						\end{alignat}
					\end{subequations}
					The bounds are in the form specified in \eqref{problem1_a} with $N_{\mathsf{BF}}=2$.
					
					\begin{figure}[htbp]
						\centering
						\begin{subfigure}[htbp]{0.5\textwidth}   
							\centering 
							\includegraphics[width=\textwidth]{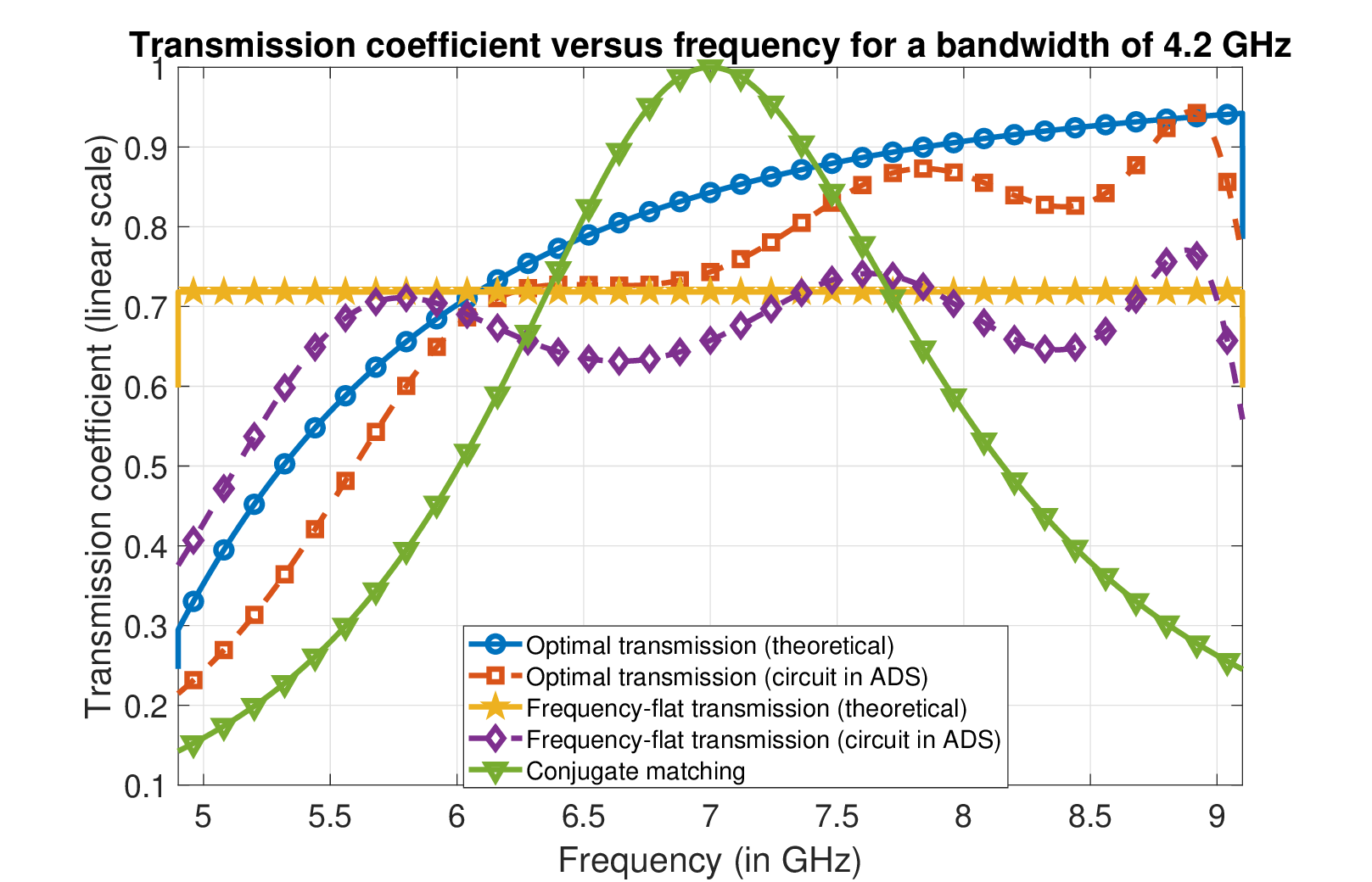}
							\caption
							{{Bandwidth of 4.2 GHz. }}      
							\label{fig: transmission coeff chu antenna 4200}
						\end{subfigure}
						\hfill
						\begin{subfigure}[htbp]{0.5\textwidth}   
							\centering 
							\includegraphics[width=\textwidth]{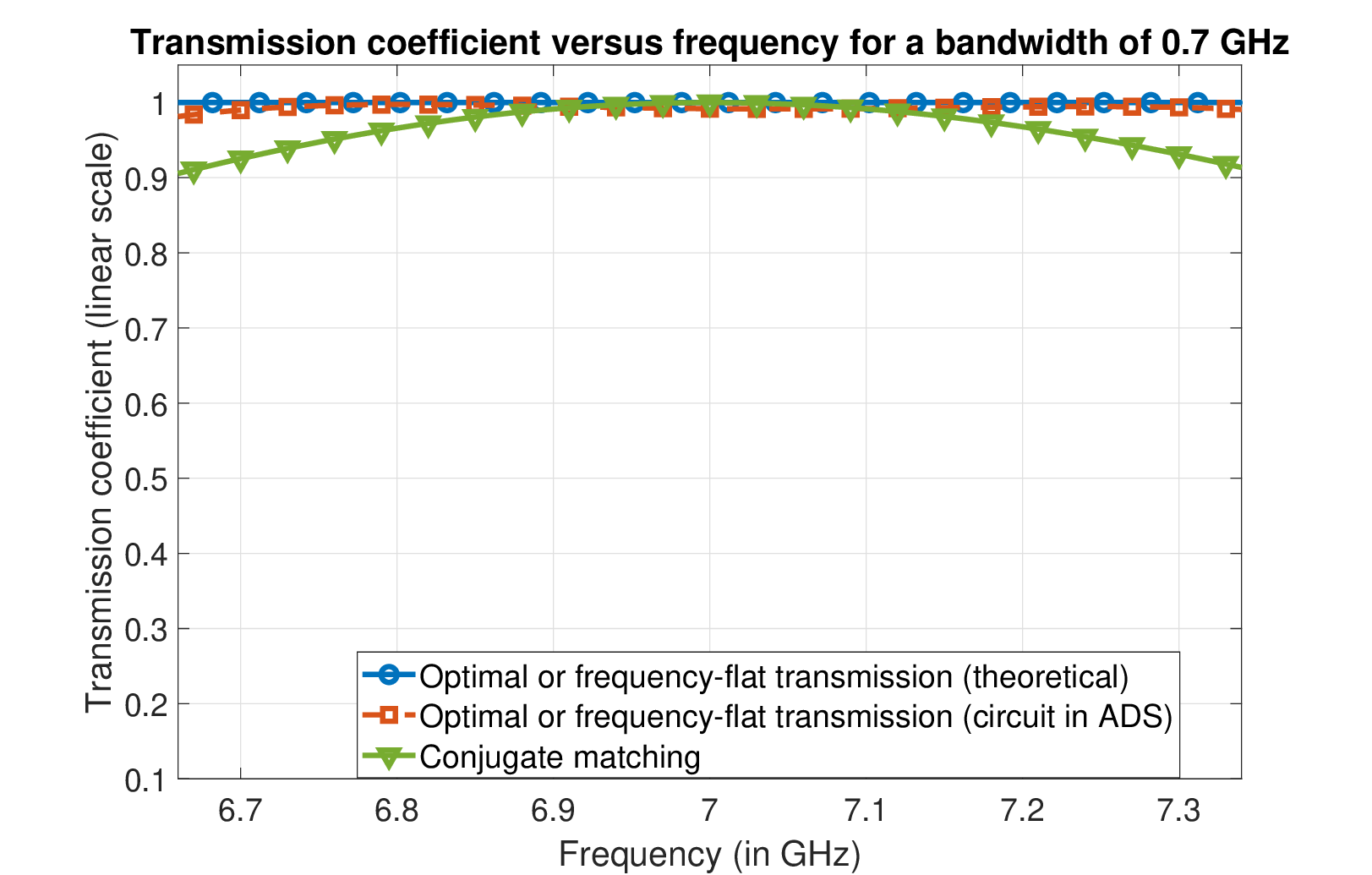}
							\caption
							{{ Bandwidth of 0.7 GHz. }} 
							\label{fig: transmission coeff chu antenna 700} 
						\end{subfigure}
						\caption{ For (a), the optimal transmission curve is higher than the frequency-flat transmission curve for frequencies greater than 6.1 GHz. For (b), the optimal transmission coincides with the frequency-flat transmission curves.							
							Conjugate matching  response is same for both bandwidths whereas the proposed optimal transmission and frequency-flat transmission responses change with the bandwidth.}
						\label{fig: transmission coeff chu antenna}
					\end{figure}
					
				\noindent	\textbf{Parameter setup}:				
						Let $f_{\mathsf{c}}= 7$ GHz. The corresponding wavelength $\lambda_{\mathsf{c}}=4.29 $ cm. Let $a=\frac{\lambda_{\mathsf{c}}}{10} = 4.29$ mm, bandwidth $B \in  \{0.1 f_{\mathsf{c}}, 0.6 f_{\mathsf{c}}\} = \{0.7, 4.2\}$ GHz,  $\mathsf{E}_{\mathsf{s}}= \frac{0.25}{B}\left[\frac{\text{W}}{\text{Hz}}\right] $, and $R=50 \Omega$. Let the distance between receiver and transmitter be $d_{\mathsf
						tx-rx}= 500$m and the  antenna gain $G=1.5$. 
					The wireless channel is  known at the transmitter and defined as 
					$\sfS_{\mathsf{RT}}(f)= \frac{1-\sfS_{\mathsf{T}}(f) }{\sfZ_0 + Z_{\mathsf{R}}(f) } \frac{c}{2 \pi f d_{\mathsf
							tx-rx}} G \cR(Z_{\mathsf{T}}( f))$\cite{wallace_mutual_2004}.
					%The wireless channel $z_{\mathsf{ART}}(f)= \frac{c}{2 \pi f d_{\mathsf
							%	tx-rx}} G \cR(Z_{\mathsf{T}}( f))$.   
					Using  $k_B= 1.38 \times 10^{-23}$ J/K and $T=290 K$, we get
					$\mathsf{N}_0= 4 \times  10^{-21}\left[\frac{\text{W}}{\text{Hz}}\right]$.
					% $\sigma^2=8 \times 10^{-12}$ W.
					
				\noindent	\textbf{Optimal transmission coefficient approximation with an  LC ladder}:					
					We compute the optimal transmission coefficient by solving the achievable rate optimization problem using  two Bode-Fano bounds in \eqref{eqn: bf chu antenna}. 
					To approximate this transmission coefficient, it suffices to use a simple 4th order LC ladder shown in Fig.~\ref{fig: SISO schematic order 4 MN}. 
					We use Keysight ADS which is a circuit simulation software for characterizing and optimizing RF systems.
					In ADS, we define 8 design variables, $L_i$ and $C_i$  for $ \{i\} _{1}^{4}$, and set the optimization goal based on the optimal transmission coefficient. The output is the optimized values of $L_i$ and $C_i$. The transmission coefficient corresponding to the optimized circuit is used as a comparison benchmark.

					%\begin{figure}[htbp]
					%	\centering
					%	\includegraphics[width=\textwidth]{}
					%	\caption{Circuit model of the 3rd order LC ladder used to approximate the transmission coefficient in ADS. There are 6 design variables: $L_1, C_1, L_2, C_2, L_3, C_3$.}
					%	\label{fig:LC order 3}
					%\end{figure}

					%\begin{figure}[htbp]
					%	\centering
					%	\includegraphics[width=0.8\textwidth]{}
					%	\caption{Transmission coefficient as a function of frequency is plotted for the desired case (achievable rate optimization solution) and the actual circuit simulated in ADS.  }
					%	\label{fig: transmission coeff chu antenna}
					%\end{figure}
					
%					\begin{figure}[htbp]
%						\centering
%						\includegraphics[width=0.4\textwidth]{tx_coeff_vs_freq_BW4200M_fc7G_a_lambda_by_10.eps}
%						\caption{ For a single Chu's antenna, transmission coefficient versus frequency  for the optimal transmission and frequency-flat transmission for the theoretical case and circuit designed in ADS. The optimal transmission curve is higher than the frequency-flat transmission curve for frequencies greater than 6.1 GHz which also leads to the corresponding SNR being higher as shown in Fig.~\ref{fig: SNR single antenna}. }
%						\label{fig: transmission coeff chu antenna}
%					\end{figure}	

				\noindent	\textbf{Frequency-flat transmission coefficient approximation with an LC ladder}:					
					For comparison with the box-car matching approach in \cite{taluja_diversity_2013}, 
					we assume frequency-flat transmission coefficient in a band spanning $f_{\mathsf{min}}$ to $f_{\mathsf{max}}$, i.e.,  $\cT(f)=\cT_{\mathsf{ff}}$ for $f\in [f_{\mathsf{min}}, f_{\mathsf{max}}]$. 
					The value of $\cT_{\mathsf{ff}}$ 
%					needs to be computed such that it satisfies 
				should satisfy	both \eqref{eqn: bf chu antenna1} and \eqref{eqn: bf chu antenna2}.
%					We solve for $\cT_{\mathsf{ff}}$ such that it satisfies both constraints in \eqref{eqn: bf chu antenna}.
					We define $	r_1= \exp\bigg(\frac{- 2a/c }{\int_{f_{\mathsf{min}}}^{f_{\mathsf{max}}}{1}/({2 \pi^2 f^2})\mathrm{d}f} \bigg)$ and  $	r_2= \exp\left(\frac{- {4a^3}/({3c^3}) }{\int_{f_{\mathsf{min}}}^{f_{\mathsf{max}}}{1}/({8\pi^4 f^4})\mathrm{d}f} \right)$.
					%\begin{subequations}
					%	\begin{alignat}{3}
						%		r_1= \exp\left(\frac{- \frac{2a}{c} }{\int_{f_{\mathsf{min}}}^{f_{\mathsf{max}}}\frac{1}{2 \pi^2 f^2}\mathrm{d}f} \right), \\
						%		r_2= \exp\left(\frac{- \frac{4a^3}{3c^3} }{\int_{f_{\mathsf{min}}}^{f_{\mathsf{max}}}\frac{1}{8\pi^4 f^4}\mathrm{d}f} \right).
						%	\end{alignat}
					%\end{subequations}
					The value of $\cT_{\mathsf{ff}}$ satisfying both  constraints 
				%	in \eqref{eqn: bf chu antenna} 
					is  $\cT_{\mathsf{ff}} = 1- \mbox{Max}\{r_1, r_2\}$.
					The frequency-flat transmission coefficient is approximated similarly using a 4th order LC ladder 
					in ADS.

					%\begin{figure}[htbp]
					%	\begin{subfigure}[htbp]{0.5\textwidth}   
						%		\centering 
						%		\includegraphics[width=\textwidth]{Transmission_coeff_vs_freq.eps}
						%		\caption
						%		{{Single Chu's antenna with 3rd order LC ladder matching network }}      
						%		\label{fig: transmission coeff chu antenna}
						%	\end{subfigure}
					%	\hfill
					%	\begin{subfigure}[htbp]{0.5\textwidth}   
						%		\centering 
						%		\includegraphics[width=\textwidth]{}
						%		\caption
						%		{{ Chu's antenna array with 7th order LC ladder matching network}} 
						%		\label{fig: transmission coeff chu array} 
						%	\end{subfigure}
					%	\caption{Transmission coefficient as a function of frequency is plotted for the desired case (achievable rate optimization solution) and the actual circuit simulated in ADS.}
					%	\label{fig: transmission coeff }
					%\end{figure}

					In Fig.~\ref{fig: transmission coeff chu antenna}, we plot the  transmission coefficient versus frequency for the theoretical case and the 4th order LC ladder circuit in Fig.~\ref{fig: SISO schematic order 4 MN} optimized in ADS.
					%The optimized values of the components obtained after the convergence of the fitting error are: $L_1=1.36, L_2=1.3, L_3=2.55, C_1=0.5, C_2=1.52, C_3=1.42$.
					We see that the circuit implemented in ADS provides a good approximation of the desired transmission coefficient in the bandwidth of interest. This shows that with a simple matching network topology, it is possible to approximate the  transmission coefficient. 
					We also show the conjugate matching transmission coefficient benchmark in Fig.~\ref{fig: transmission coeff chu antenna} which remains the same irrespective of the bandwidth. The proposed optimal transmission coefficient curves are dependent on the bandwidth. In  Fig.~\ref{fig: transmission coeff chu antenna 4200}, although conjugate matching transmission has higher $\cT(f)$  than the optimal transmission  for frequencies  6.5 GHz to 7.4 GHz, this is because the optimal transmission is optimized for a larger band from 4.9 GHz to 9.1 GHz. In  Fig.~\ref{fig: transmission coeff chu antenna 700}, the optimal transmission  is optimized for 6.65 GHz to 7.35 GHz and is higher than  fixed conjugate matching. 					
					We use these transmission coefficients in Section~\ref{sec: numerical results} to compute the SNR, achievable rate, and its comparison with other  benchmarks.

					\subsection{Application of  matching network design methodology to an array of two Chu's antennas}\label{subsec: two dipoles}
					 We present numerical illustrations for the matching network design methodology applied to an array of two Chu's antennas.
					
				\noindent	\textbf{Chu's antenna array model}:					
					We assume an array of two parallel Chu's antennas, each enclosed in a spherical volume of radius $a$ and  separated by a distance $d$. 
					The self impedance for each antenna is $\sfZ_{\mathsf{T11}}(f) = \sfZ_{\mathsf{T22}}(f)= \frac{R}{\sfj 2 \pi  f \frac{a}{c}}+ \frac{R}{1+\frac{1}{\sfj 2 \pi  f \frac{a}{c}}}$.
					The mutual impedance between  two antennas is  \cite{akrout2022super}
					\begin{align}
						&\sfZ_{\mathsf{T12}}(f)=\sfZ_{\mathsf{T21}}(f) = -1.5 \sqrt{\cR(\sfZ_{\mathsf{T11}}(f)) \cR(\sfZ_{\mathsf{T22}}(f))}  \\ \nonumber &\times\left(\!\frac{1}{\sfj 2 \pi  f \frac{d}{c}}    - \frac{1}{(2 \pi f \frac{d}{c})^2} +\frac{j}{ ( 2 \pi f \frac{d}{c})^3} \! \right)\! e^{-\sfj 2 \pi  f \frac{d}{c}}.
					\end{align}
					The  array impedance matrix is defined as $\bm{\sfZ}_{\mathsf{T}}(s) = \begin{bmatrix}
						\sfZ_{\mathsf{T11}}(s) & \sfZ_{\mathsf{T12}}(s) \\
						\sfZ_{\mathsf{T21}}(s) & \sfZ_{\mathsf{T22}}(s)
					\end{bmatrix}.$  The scattering matrix is $	\bm{\sfS}_{\mathsf{T}}(s)= (\bm{\sfZ}_{\mathsf{T}}(s) + R \bI_2)^{-1} (\bm{\sfZ}_{\mathsf{T}}(s) - R \bI_2)$.
					
				\noindent	\textbf{Analog beamforming network model}:					
					%We assume an ideal 2-way Wilkinson power divider.
					For the scattering matrix of the analog beamforming network,
					we assume $	\sfS_{{\mathsf{F}},11}(f)=0$, $\bm{\sfS}_{{\mathsf{F}},22}(f)=\mathbf{0}_2 $, and $\bm{\sfs}_{{\mathsf{F}},12}(f) = \bm{\sfs}_{{\mathsf{F}},21}^T(f)$ represents the beamforming vector corresponding to an ideal frequency-flat phased array.
					We simulate  two beamforming modes similar to \cite{nie_bandwidth_2017-1}. The even mode corresponds to $\bm{\sfs}_{{\mathsf{F}},21}(f) = \frac{-\sfj}{\sqrt{2}}[1, 1]^T$ and the odd mode corresponds to 
					$\bm{\sfs}_{{\mathsf{F}},21}(f) = \frac{-\sfj}{\sqrt{2}}[1, -1]^T$. 		
					We also assume no insertion loss.
%					In practice, with higher phase-shifter resolution, there is more insertion loss. The   loss modeling for different  modes is a  future direction.
					%  is $	\bm{\sfS}_{\mathsf{F}}(f) = \begin{bmatrix}
						% 	0 & \bff^T(f) \\
						% \bff(f) & \mathbf{0}_2
						% \end{bmatrix}.$
					%\begin{equation}\label{eqn Sf wilkinson}
					%	\bm{\sfS}_{\mathsf{F}}(s) = \begin{bmatrix}
						%	0 & \frac{-j}{\sqrt{2}} & \frac{-j}{\sqrt{2}} \\
						%	\frac{-j}{\sqrt{2}} & 0 & 0 \\
						%	\frac{-j}{\sqrt{2}} & 0 & 0
						%\end{bmatrix}.
						%\end{equation}
						\begin{figure}[htbp]
							\centering
							\begin{subfigure}[htbp]{0.5\textwidth}   
								\centering 
								\includegraphics[width=\textwidth]{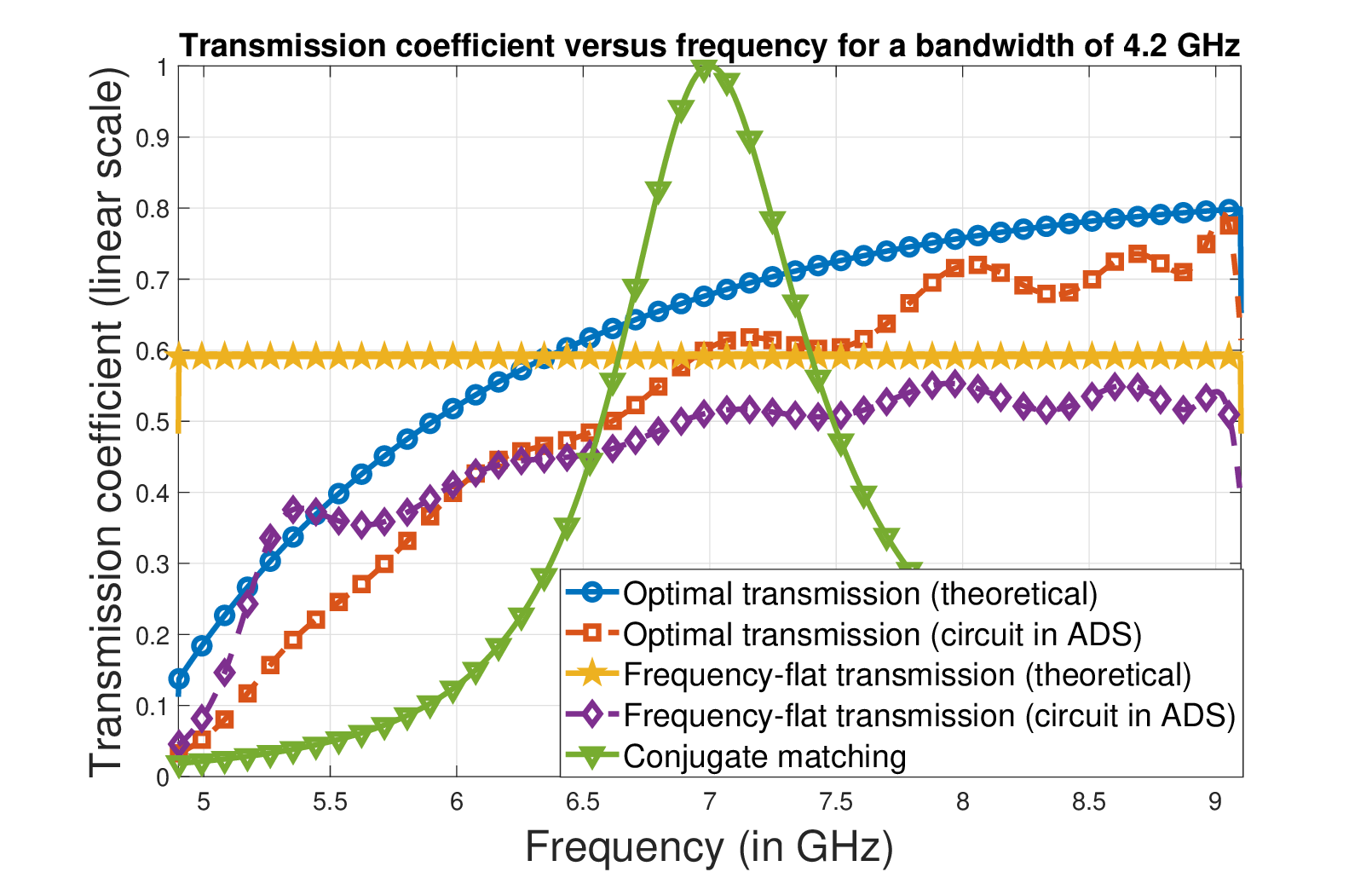}
								\caption
								{{Odd mode beamforming for $\theta=\frac{\pi}{2}$ }}      
								\label{fig: tx miso odd}
							\end{subfigure}
							\hfill
							\begin{subfigure}[htbp]{0.5\textwidth}   
								\centering 
								\includegraphics[width=\textwidth]{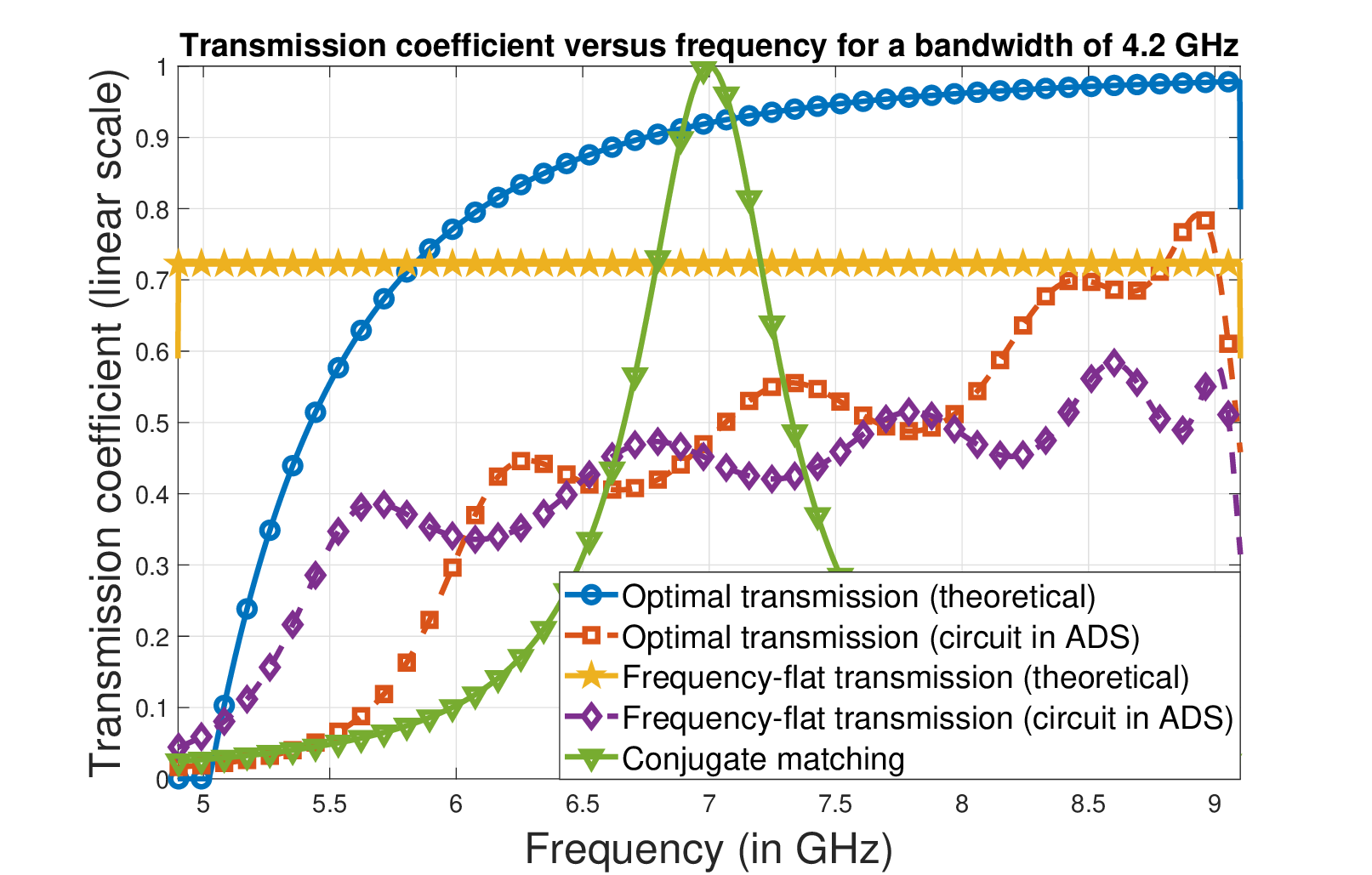}
								\caption
								{{ Even mode beamforming for $\theta=0$ }} 
								\label{fig: tx miso even} 
							\end{subfigure}
							\caption{For an array of two Chu's antennas with even and odd mode analog beamforming, the optimal transmission curves are higher than the frequency-flat transmission curves for a major portion of the bandwidth.  As matching network order is fixed for both beamforming modes, the gap between theoretical and circuit  response is different.
							}
							\label{fig:tx miso even odd }
						\end{figure}

					\noindent	\textbf{Parameter setup}:						
						For $f_{\mathsf{c}}= 7$ GHz, let $d=\frac{\lambda_{\mathsf{c}}}{2} $  and $a=\frac{\lambda_{\mathsf{c}}}{15} $,
						 bandwidth $B= 0.6 f_{\mathsf{c}} = 4.2$ GHz, and  $\mathsf{E}_{\mathsf{s}}= \frac{0.25}{B}\left[\frac{\text{W}}{\text{Hz}}\right] $.
						For receiver at angle $\theta$ from  broadside,
						the wireless channel is \cite{wallace_mutual_2004}
						\begin{equation}\label{eqn sRT}
						\bm{\sfs}_{\mathsf{RT}}(f) =  \frac{c G \cR(\sfZ_{\mathsf{T}}( f))   (\bI_2-	\bm{\sfS}_{\mathsf{T}}(f)  )[1, \exp(\sfj 2 \pi  f \frac{d}{c} \sin(\theta))]^T}{2 \pi f d_{\mathsf
								tx-rx} (\sfZ_0 + \sfZ_{\mathsf{R}}(f))} .
						\end{equation}
					In \eqref{eqn sRT}, the mutual coupling effect between antennas is captured through the dependence on $	\bm{\sfS}_{\mathsf{T}}(f)$.

						\noindent\textbf{Bode-Fano bounds for Chu's antenna array}:						
						For both even and odd beamforming modes, we compute  $\hat{\sfS}_{\mathsf{eq}}(s)$ 
						and substitute it in \eqref{eqn: seq root} to solve for $s$. We obtain two unique roots with $\cR(s_0)>0$ which correspond to two Bode-Fano inequalities computed  using  Table~\ref{tab: eval bf functions} in Appendix~\ref{app: fbfi bfi}.
						%Let $ z_{\mathsf{rat}, \ell} $ for $\ell=\{1,2,3\}$ be the three zeros of $S_{\mathsf{eq, rat}}$.
						% For $i=\{1,2\}$, the Bode-Fano inequality is given as \cite{nie_bandwidth_2017}
						%\begin{equation}\label{eqn: bf chu array}
						%\int_0^{\infty}\!\! \cR\{ (s_{0,i}- \sfj 2 \pi  f)^{-1} +(s_{0,i}+\sfj 2 \pi  f)^{-1} \}\log\left( \frac{1}{1 - \cT(f)}\right) \mathrm{d}f \leq  -\log\left(\left|S_{\mathsf{eq, rat}}(s_{0,i})\frac{\prod_{\ell=1}^3 (s_{0,i} + z_{\mathsf{rat}, \ell})}{\prod_{\ell=1}^3 (s_{0,i} - z_{\mathsf{rat}, \ell})}\right|\right).
						%\end{equation}

						%\begin{figure}[htbp]
						%	\begin{subfigure}[htbp]{0.5\textwidth}   
							%		\centering 
							%		\includegraphics[width=\textwidth]{}
							%		\caption
							%		{{Scattering parameter magnitude response }}      
							%		\label{fig: Scattering parameter magnitude response}
							%	\end{subfigure}
						%	\hfill
						%	\begin{subfigure}[htbp]{0.5\textwidth}   
							%		\centering 
							%		\includegraphics[width=\textwidth]{}
							%		\caption
							%		{{ Scattering parameter phase response}} 
							%		\label{fig: Scattering parameter phase response} 
							%	\end{subfigure}
						%	\caption{Comparison of responses for the actual scattering parameter  and its approximation based on a rational function with three zeros and three poles.}
						%	\label{fig:Scattering plots }
						%\end{figure}

					\noindent	\textbf{Optimal transmission coefficient approximation with an LC ladder}:
						We compute the optimal transmission coefficient by solving the achievable rate optimization problem using  two Bode-Fano constraints. 
						For approximating  $\cT^{\star}( f ) $, it suffices to choose a 7th order LC ladder. In ADS, we define 14 design variables: $L_i$ and $C_i$ for $ \{i\} _{1}^{7}$ and set the optimization goal based on  $\cT^{\star}( f ) $. Note that the choice of model order can be changed depending on other design requirements. 
%						A designer may have to meet a  threshold for given corner frequencies.
						
						\noindent \textbf{Frequency-flat transmission coefficient approximation with an LC ladder}:
							For comparison with the box-car matching approach in \cite{taluja_diversity_2013}, 
						let $\cT_{\mathsf{ff}}$  satisfy both Bode-Fano constraints. Using  Table~\ref{tab: eval bf functions} in Appendix~\ref{app: fbfi bfi}, for $ \{i\} _{1}^{2} $, we define 
						\begin{equation}
							r_i= \exp\left(\!\!\frac{\log\left(\left|\hat{\sfS}_{\mathsf{eq}}(s_{i})\frac{\prod_{\ell=1}^{N_{\mathsf{z}}} (s_{i} + z_{\mathsf{eq}, \ell})}{\prod_{\ell=1}^{N_{\mathsf{z}}} (s_{i} - z_{\mathsf{eq}, \ell})}\right|\right)}{\int_{f_{\mathsf{min}}}^{f_{\mathsf{max}}}\cR\{ (s_{i}- \sfj 2 \pi  f)^{-1} +(s_{i}+\sfj 2 \pi  f)^{-1} \} \mathrm{d}f}\!\!\right)\!\!.
						\end{equation}
						The value of $\cT_{\mathsf{ff}}$ satisfying both  constraints is  $\cT_{\mathsf{ff}} = 1- \mbox{Max}\{r_1, r_2\}$. The frequency-flat transmission coefficient can  be approximated similarly using a 7th order LC ladder in ADS.
						
						In Fig.~\ref{fig:tx miso even odd }, we plot the  transmission coefficient versus frequency for the theoretical case and the 7th order LC ladder circuit  optimized in ADS. For $\theta=\frac{\pi}{2}$, i.e., endfire incidence, we use the odd mode beamforming as shown in Fig.~\ref{fig:tx miso even odd }(a), whereas even mode beamforming is used for the broadside incidence as shown in Fig.~\ref{fig:tx miso even odd }(b).
						We observe that for a major portion of the 4.2 GHz bandwidth for both beamforming modes, optimal transmission response is higher than   frequency-flat transmission  response.
%						the response of the ADS circuit based on the optimal transmission is higher than that of the  frequency-flat transmission circuit for a major portion of the 4.2 GHz bandwidth. 
						This leads to a higher SNR and achievable rate.
						% as will be shown subsequently in Section~\ref{sec: numerical results}.
						
						%we plot the optimized transmission coefficient obtained as a solution to the achievable rate optimization problem and its corresponding approximation obtained by fitting the piece-wise function to the response of the 7th order LC ladder.
						%The optimized values of the components obtained after the convergence of the fitting error are: $L_1=1.06, L_2=0.82, L_3=0.42, L_4=0.35, L_5=0.38, L_6=1.25$ and $ C_1=0.49, C_2=1.14, C_3=1.61, C_4=0.92, C_5=1.04, C_6=1.38$.
						%\begin{figure}[htbp]
						%	\centering
						%	\includegraphics[width=0.6\textwidth]{MISO_Transmission_coeff_vs_freq.eps}
						%	\caption{Transmission coefficient as a function of frequency is plotted for the desired case (achievable rate optimization solution) and the actual circuit simulated in ADS.  }
						%	\label{fig: miso transmission coeff chu antenna}
						%\end{figure}

						\section{SNR and achievable rate  simulations}\label{sec: numerical results}

						%In this section, we plot the spectral efficiency, average spectral efficiency, and achievable rate corresponding to the following six cases.
						In this section, we present the results for  SNR and achievable rate corresponding to the following six cases.
							\begin{figure}[htbp]
							\centering
							\begin{subfigure}[htbp]{0.5\textwidth}   
								\centering 
								\includegraphics[width=\textwidth]{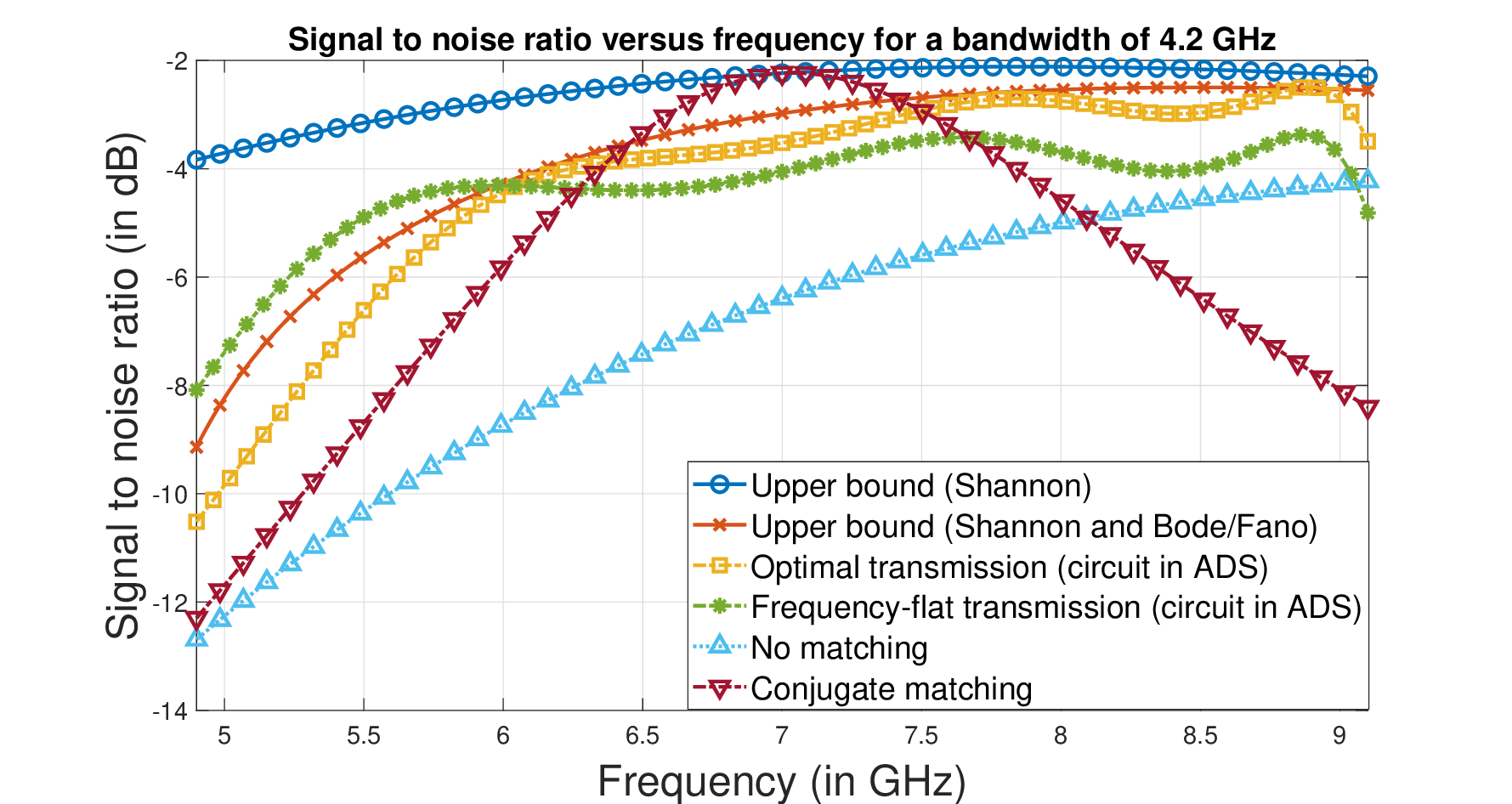}
								\caption
								{{Bandwidth of 4.2 GHz.  }}      
								\label{fig: SNR single antenna 4200}
							\end{subfigure}
							\hfill
							\begin{subfigure}[htbp]{0.5\textwidth}   
								\centering 
								\includegraphics[width=\textwidth]{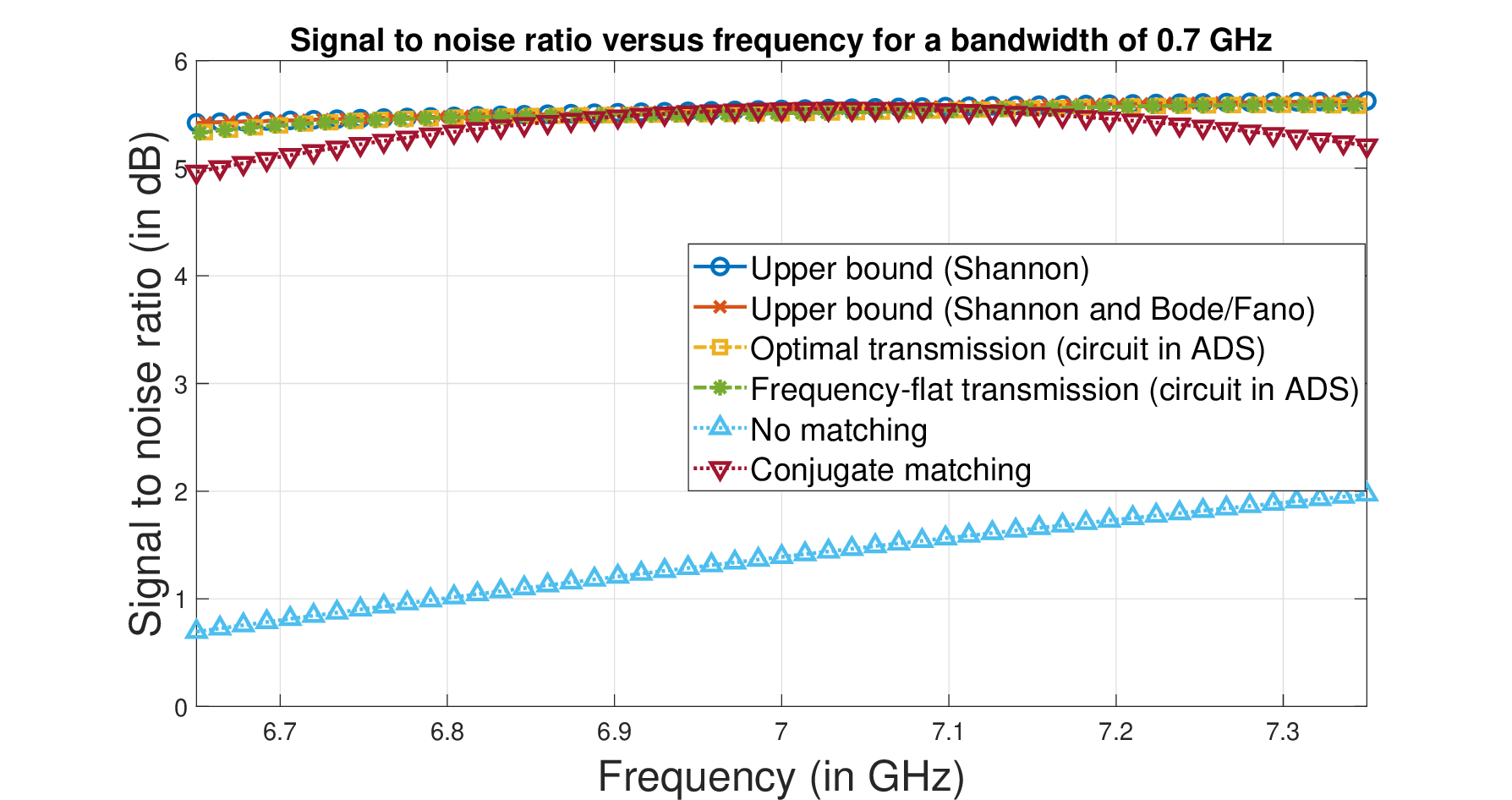}
								\caption
								{{ Bandwidth of 0.7 GHz.  }} 
								\label{fig: SNR single antenna 700} 
							\end{subfigure}
							\caption{For a single Chu's antenna -  In (a), the SNR curve for the optimal transmission ADS circuit is higher than the frequency-flat transmission curve for frequencies above 6.1 GHz. In (b), the SNR curves for the optimal and frequency-flat transmission coincide. }
							\label{fig: SNR single antenna}
						\end{figure}
					
						\noindent	\textbf{Upper bound (Shannon)}:  This bound corresponds to the case when the Bode-Fano constraints are ignored, i.e., $\cT(f)=1 \forall f$.  
							The ideal SNR denoted as  $\mathsf{SNR}_{\text{ideal}}(f)$ is defined in 
							\eqref{eqn: SNR ideal}.
							% 		The spectral efficiency for this case is defined  as $\mathsf{SE}_{\text{ideal}}(f)$ in Section~\ref{subsec achievable rate obj in terms of the transmission coefficient }.
							
							\noindent	\textbf{Proposed upper bound (Shannon and  Bode/Fano)}:  This upper bound is obtained as a  solution to problem \eqref{problem1} which maximizes the achievable rate over all physically realizable matching networks at  transmitter.
							
							\noindent \textbf{Conjugate matching at center frequency}:  The matching network is designed such that the load impedance gets transformed to the complex conjugate of the source impedance.
							
							\noindent \textbf{Proposed optimal transmission (circuit in ADS)}:  This case corresponds to the optimized matching network design obtained through the three step procedure  in Section~\ref{subsec: general methodology}.
							
							\noindent \textbf{Benchmark of frequency-flat transmission (circuit in ADS)}: 
							The matching network  is designed to approximate the  frequency-flat coefficient $\cT_{\mathsf{ff}}$  satisfying  Bode-Fano constraints.
							% 	This is a simple technique used for wideband matching. The transmission coefficient $\cT(f)$ is assumed to be frequency-flat in the desired bandwidth $B$.  The value of $\cT(f)$ is set such that it satisfies all $K$ Bode-Fano constraints defined in \eqref{problem1_a}. As $\cT(f)$ is uniform and non-zero only in the specified bandwidth, this technique is also known as box-car matching\cite{taluja_diversity_2013}.
							
								\noindent \textbf{Benchmark of no matching}: Finally, we also compare with the case when matching network is absent, i.e., the source is directly connected to the antenna array.
							The resulting SNR is defined as $\mathsf{SNR}_{\text{No-match}}(f)= {|\bm{\sfs}_{\mathsf{RT}}^T(f)  (\bI-\bm{\sfS}_{{\mathsf{F}},22}(f)	\bm{\sfS}_{\mathsf{T}}(f) )^{-1}	\bm{\sfs}_{{\mathsf{F}},21}(f)|^2}  \frac{ \mathsf{E}_{\mathsf{s}}}{\mathsf{N}_0}$.
							% 	The spectral efficiency for this case is defined as $\mathsf{SE}_{\text{NoMatch}}(f)$ in Section~\ref{subsec SNR and rate}.

						\subsection{Simulation results for a single Chu's antenna}\label{subsec: sim result single antenna}

						In Fig.~\ref{fig: SNR single antenna}(a),  we use the  parameter setup and the transmission coefficient based on circuit design in Section~\ref{subsec: single dipole} for
						computing the SNR versus frequency for the six cases. We observe that the SNR solely based on Shannon's upper bound  is higher than that of the bound which incorporates Bode-Fano wideband matching theory.
						The Shannon upper bound technique overestimates the SNR. This upper bound cannot be attained by any practical matching network.  The bound proposed by combining Shannon's theory and Bode-Fano  theory is more realistic as it incorporates the gain-bandwidth tradeoff in matching networks. We show that this bound  can be approximated   using a practical matching network topology optimized using ADS as discussed in Section~\ref{subsec: single dipole}.
						\begin{figure}[htbp]
							\centering
							\includegraphics[width=0.5\textwidth]{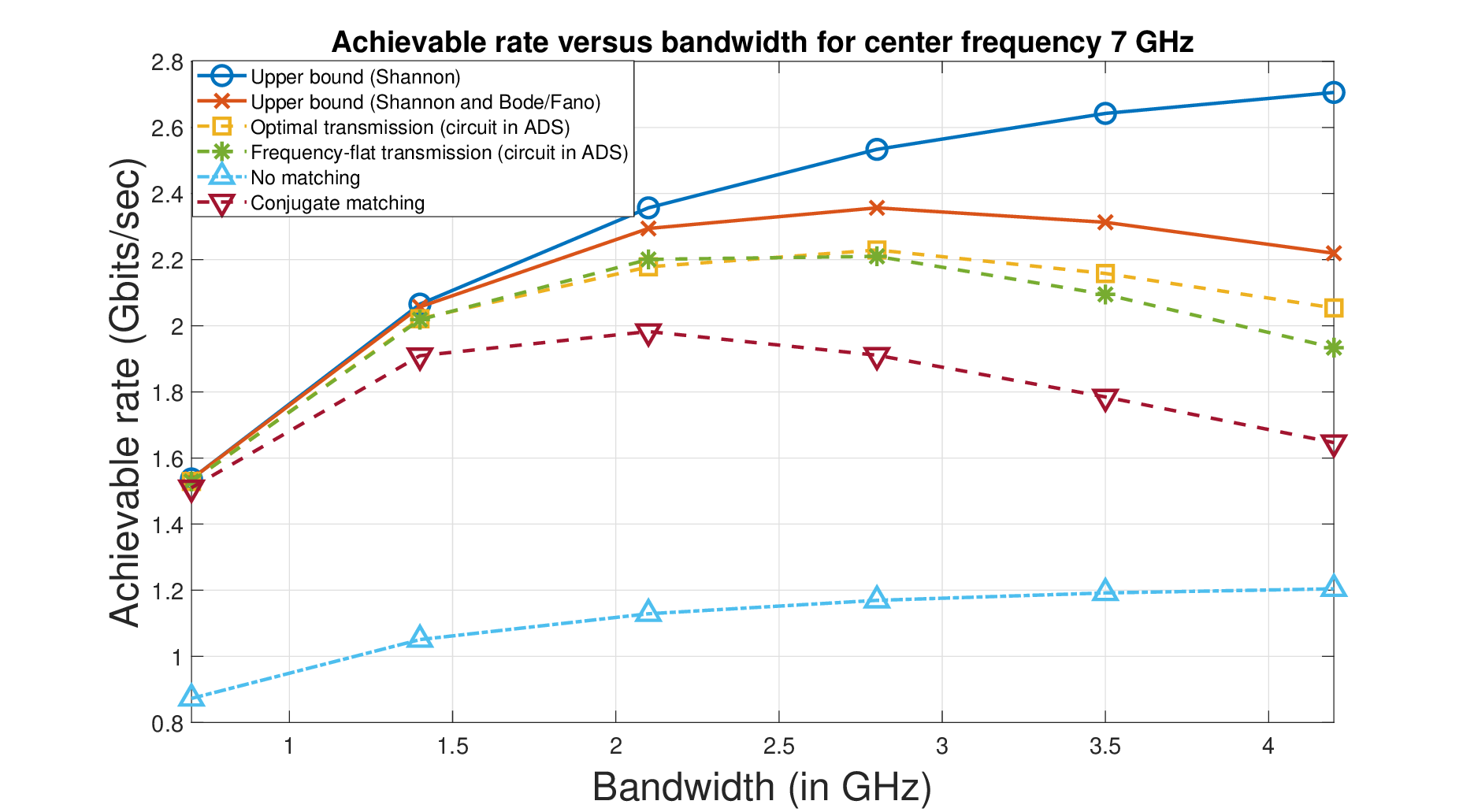}
							\caption{  Achievable rate versus bandwidth for a single Chu's antenna with center frequency 7 GHz.
								The  circuit based on optimal transmission significantly outperforms the frequency-flat transmission circuit for larger bandwidths. Also, the optimal bandwidth for getting the highest achievable rate is 2.8 GHz beyond which rate decreases.}
							\label{fig: rate vs bw single antenna}
						\end{figure}
						For a major portion of the 4.2 GHz bandwidth, the SNR for the ADS circuit corresponding to the optimal transmission is greater than that of the frequency-flat transmission circuit. This leads to a higher achievable rate as well for the optimal transmission based circuit.
%						as shown  in Fig.~\ref{fig: rate vs bw single antenna}.
					The SNR corresponding to the conjugate matching network  is higher than the optimal transmission for frequencies 6.5 GHz to 7.4 GHz but drastically decreases outside this range.
						As rate depends on the SNR for the whole band from 4.9 GHz to 9.1 GHz, the  achievable rate for conjugate matching is less than the rate for optimal transmission and frequency-flat transmission.
%						 This results in a lower achievable rate for conjugate matching as shown in Fig.~\ref{fig: rate vs bw single antenna}.						
%						We observe that for frequencies above 8.1 GHz, this SNR drops below that obtained in the absence of matching network.
					Even for a bandwidth of 0.7 GHz as shown in Fig.~\ref{fig: SNR single antenna}(b),  conjugate matching is still worse compared to the proposed approach because the optimal transmission  solution in \eqref{eqn: thm1} depends on  bandwidth unlike conjugate matching as shown  in Fig.~\ref{fig: transmission coeff chu antenna}.

					\begin{figure}[htbp]
						\centering
						\begin{subfigure}[htbp]{0.5\textwidth}   
							\centering 
							\includegraphics[width=\textwidth]{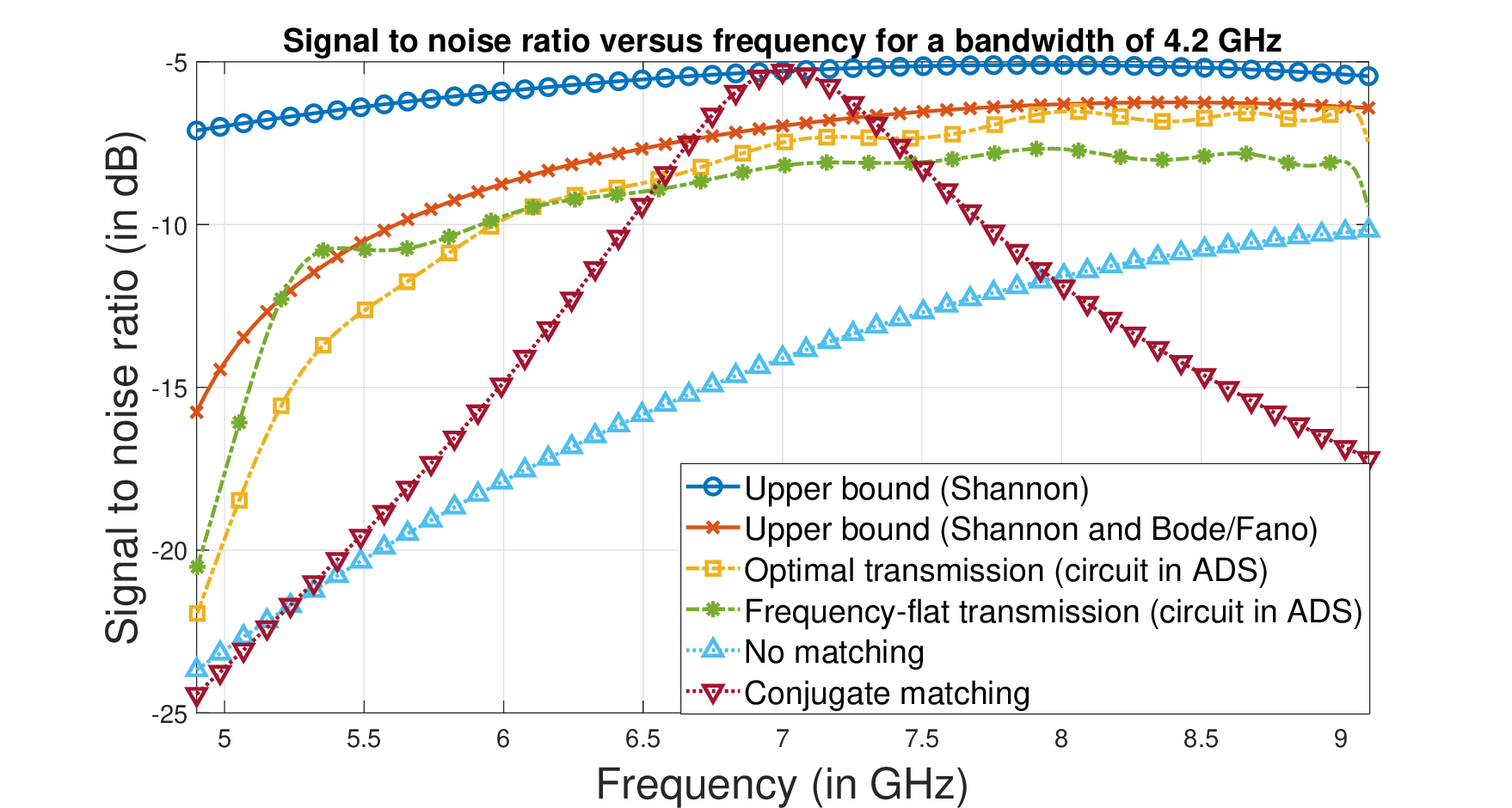}
							\caption
							{{Odd mode beamforming for $\theta=\frac{\pi}{2}$ }}      
							\label{fig: snr miso odd}
						\end{subfigure}
						\hfill
						\begin{subfigure}[htbp]{0.5\textwidth}   
							\centering 
							\includegraphics[width=\textwidth]{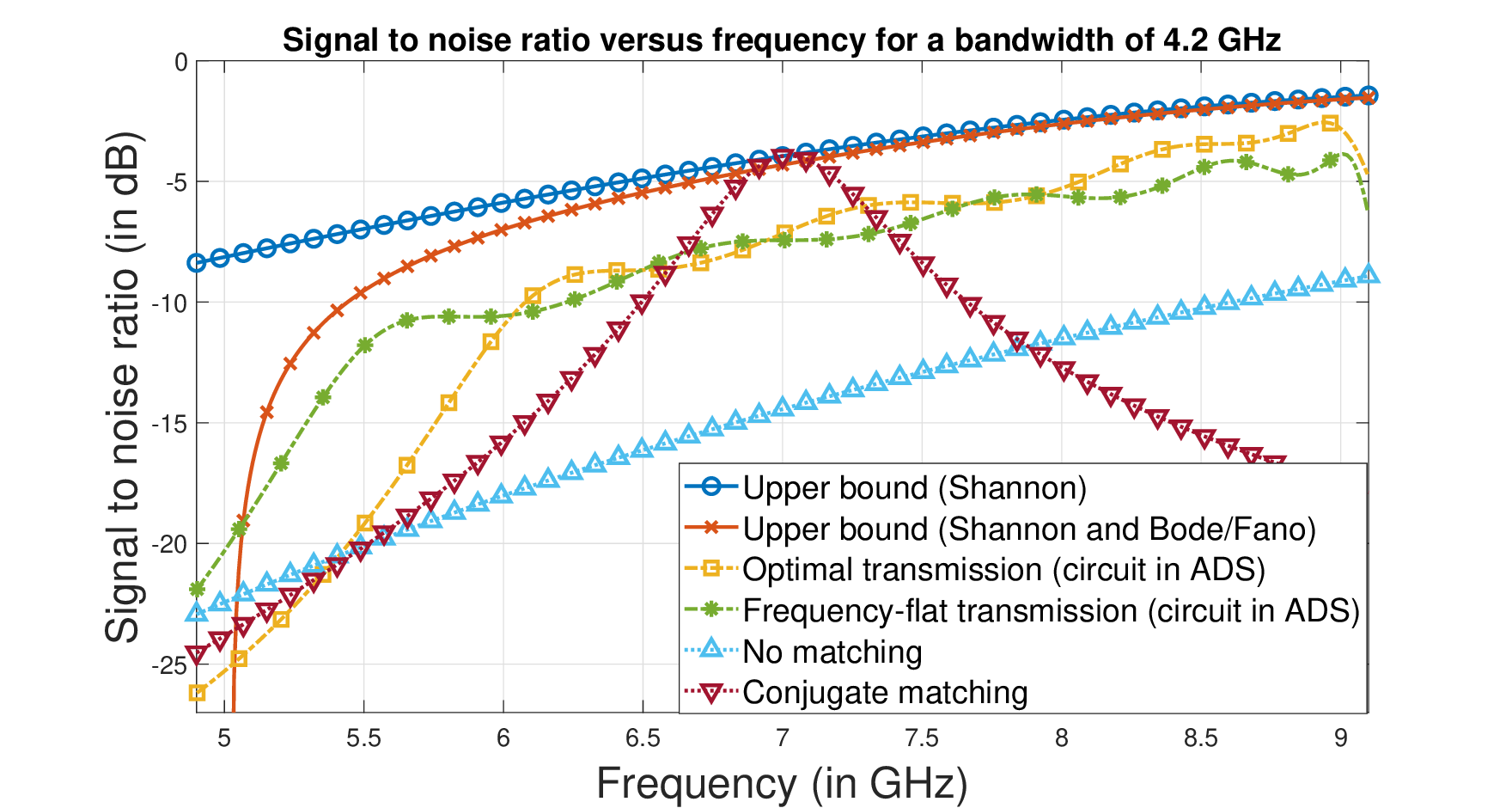}
							\caption
							{{ Even mode beamforming for $\theta=0$ }} 
							\label{fig: snr miso even} 
						\end{subfigure}
						\caption{For an array of two Chu's antennas with frequency-flat even and odd mode analog beamforming. 
							%								These plots correspond to the transmission coefficient plots from Fig.~\ref{fig:tx miso even odd }.  
							SNR degradation for   $\theta=\frac{\pi}{2}$ compared to $\theta=0$  because of beam squint effect.
							%								For both modes, the resulting achievable rate is higher for the optimal transmission based circuit as marked in Fig.~\ref{fig:rate miso even odd}.
						}
						\label{fig:snr miso even odd }
					\end{figure}
						% The spectral efficiency plot corresponding to the frequency-flat transmission coefficient benchmark (dashed violet) is better than conjugate matching  because it produces a nearly frequency-flat spectral efficiency profile suitable for a wideband system.
						% However, it is sub-optimal compared to that of the matching network optimized in ADS.

						%\begin{figure}[htbp]
						%	\centering
						%	\includegraphics[width=0.6\textwidth]{}
						%	\caption{ Average spectral efficiency (bits/sec/Hz) as a function of fractional bandwidth  }
						%	\label{fig: avg se siso  }
						%\end{figure}

						% \begin{figure}[htbp]
							% 	\begin{subfigure}[htbp]{0.5\textwidth}   
								% 		\centering 
								% 		\includegraphics[width=\textwidth]{Avg_SE_vs_fbw.eps}
								% 		\caption
								% 		{{Average SE  as a function of fractional bandwidth  }}      
								% 		\label{fig: avg se single chu}
								% 	\end{subfigure}
							% 	\hfill
							% 	\begin{subfigure}[htbp]{0.5\textwidth}   
								% 		\centering 
								% 		\includegraphics[width=\textwidth]{}
								% 		\caption
								% 		{{ Achievable rate as a function of fractional bandwidth}} 
								% 		\label{fig:rate single chu} 
								% 	\end{subfigure}
							% 	\caption{Comparison of 6 cases for the single Chu's antenna model with $\frac{a}{\lambda}=0.1$.}
							% 	\label{fig:tr}
							% \end{figure}
						% \subsubsection{Comparison of average spectral efficiency and achievable rate as a function of fractional bandwidth}
						
						% The fractional bandwidth is defined as the ratio of the bandwidth and the center frequency.
						In Fig.~\ref{fig: rate vs bw single antenna},  
						we plot the achievable rate as a function of bandwidth.
%						we vary the bandwidth from 700 MHz to 4.2 GHz and plot the achievable rate for each of the six cases.  
						The achievable rate plot based on the Shannon upper
						bound continuously increases with the bandwidth. A realistic
						trend is observed for the upper bound obtained after incorporating
						Bode-Fano theory, i.e., we observe that the rate decreases beyond a certain bandwidth.
						The results highlight how the matching network limits the bandwidth and achievable rate of the system.
	We show  the existence of an optimal bandwidth which gives the highest possible achievable rate because of the gain-bandwidth tradeoff of matching networks.
						From the proposed upper bound plot and the corresponding ADS circuit simulation, the optimal bandwidth for highest achievable rate is 2.8 GHz. For conjugate matching, this optimal bandwidth is 2.1 GHz. 
						It is lower than the optimal bandwidth of the proposed ADS circuit because conjugate matching response is invariant of the bandwidth.
						The proposed matching approach accounts for the bandwidth dependence and enables use of higher bandwidth for maximizing  rate.  
%						The ADS circuit obtained from the optimal transmission coefficient achieves higher rate than  frequency-flat transmission and  conjugate matching benchmarks. 

						\begin{figure}[htbp]
							\centering
							\begin{subfigure}[htbp]{0.5\textwidth}   
								\centering 
								\includegraphics[width=\textwidth]{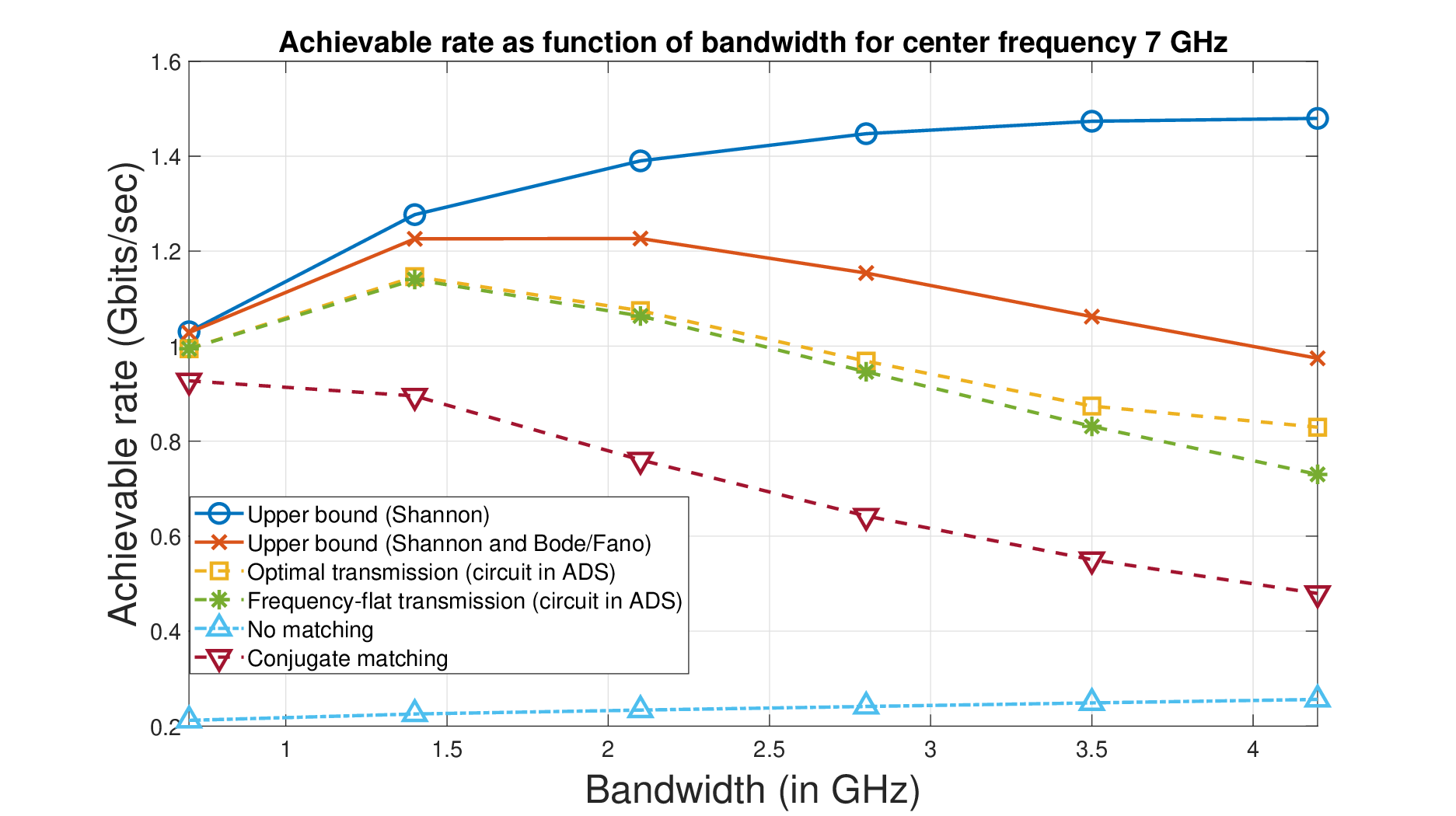}
								\caption
								{{Odd mode beamforming for $\theta=\frac{\pi}{2}$ }}      
								\label{fig:rate odd mode}
							\end{subfigure}
							\hfill
							\begin{subfigure}[htbp]{0.5\textwidth}   
								\centering 
								\includegraphics[width=\textwidth]{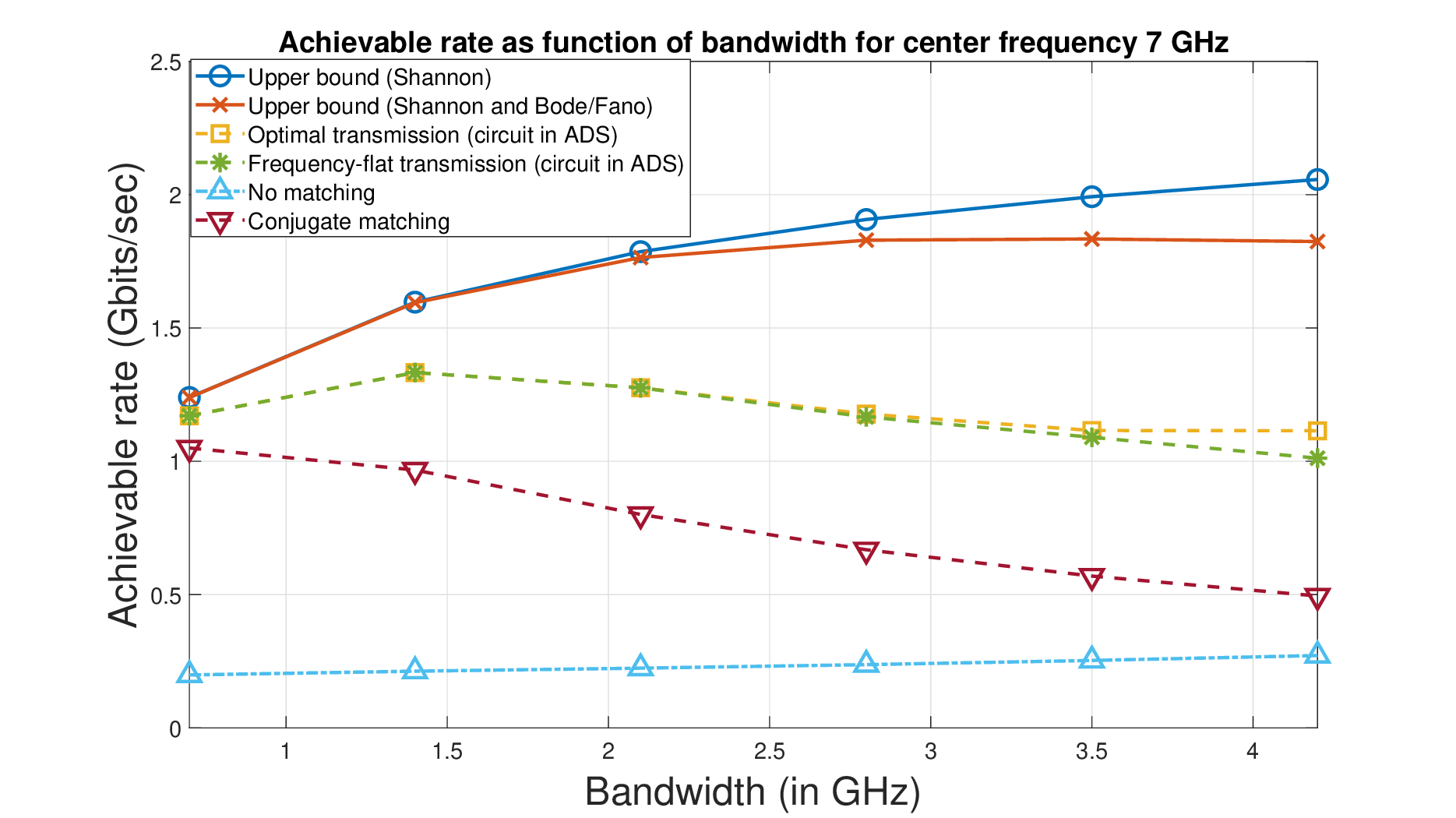}
								\caption
								{{Even mode beamforming for $\theta=0$}} 
								\label{fig:rate even mode} 
							\end{subfigure}
							\caption{Achievable rate versus bandwidth for an array of two Chu's antennas with even and odd mode analog beamforming. From the circuit based plots for both  beamforming modes, the achievable rate peak occurs at a bandwidth of 1.4 GHz. At larger bandwidths, odd mode achievable rates lower than even mode because of beam squint.}
							\label{fig:rate miso even odd}
						\end{figure}

						\subsection{Simulation results for an array of two Chu's antennas}

						In Fig.~\ref{fig:snr miso even odd },  we use the  parameter setup and  the transmission coefficient based on circuit design in Section~\ref{subsec: two dipoles} for
						computing the SNR versus frequency for the six cases.
						Most of the SNR comparison and trends are similar to that discussed in Section~\ref{subsec: sim result single antenna}.  
%						For the odd mode, the optimized ADS circuit attains near-optimal SNR performance. For the even mode, there is a significant gap between the theoretical upper bound and the SNR obtained by circuit simulation. 
						At broadside incidence, i.e.,  $\theta=0$, there is no relative phase difference between two antennas. Hence, even mode beamforming works well. For endfire incidence, i.e., $\theta=\frac{\pi}{2}$, the phase-difference between two antennas varies as a function of frequency as shown in \eqref{eqn sRT}. 
						Using a frequency-flat beamforming for endfire at higher bandwidths results in a phase mismatch and subsequent SNR reduction. This effect is commonly known as beam squint.
%						Hence, for higher bandwidths, there is phase mismatch (aka beam squint effect) due to the odd mode beamforming. 
						So, we observe SNR degradation for  endfire compared to broadside in Fig.~\ref{fig:snr miso even odd }.

						In Fig.~\ref{fig:rate miso even odd}, 
%						we vary the bandwidth from 700 MHz to 4.2 GHz and plot the achievable rate for each of the six cases. 
						for the odd mode, we observe that the upper bound obtained after incorporating Bode-Fano theory significantly deviates from the Shannon bound beyond 1.4 GHz bandwidth. This trend is also consistent with the rate obtained from the circuit simulations in ADS.
						For the even mode, the achievable rate increases faster with bandwidth compared to odd mode because there is no phase mismatch with even mode. Frequency-selective true time delay (TTD) beamforming can be used to mitigate the beam squint effect. The achievable rate variation with bandwidth for TTD  systems is a future direction.
%						 For the even mode, the bound is less tight but the circuit simulation curves for both beamforming modes have the same trend and optimal bandwidth behavior. 

						%\subsubsection{Comparison of average spectral efficiency and achievable rate as a function of fractional bandwidth}
						%In Fig.~\ref{fig:avg se chu array}, we plot the average spectral efficiency of the Chu's antenna array as a function of the fractional bandwidth. Unlike Fig.~\ref{fig: avg se single chu}, the curve for the frequency-flat transmission coefficient is very close to that of the Upper bound (Shannon+ Bode/Fano).
						%It was shown in \cite{nie_bandwidth_2017} that the bandwidth bound (right hand side of the Bode-Fano inequality) increases when the number of antennas are increased keeping same number of sources (RF chain). Hence, for the two element array, the bandwidth gets enhanced compared to the single antenna case. This enables efficient matching with a simple box-car matching scheme.
						%In Fig.~\ref{fig:rate chu array}, we observe a continuous increase in achievable rate with increase in the fractional bandwidth for all cases unlike Fig.~\ref{fig:rate single chu} where achievable rate starts to decrease after a certain fractional bandwidth.

						\section{Conclusion}\label{sec: conclusion}
						
						In this paper, we generalized the achievable rate analysis for a MISO system by incorporating constraints from Bode-Fano wideband matching theory.
						We proposed a general optimization framework which maximizes the achievable rate over all physically realizable linear and passive matching networks.   The proposed upper bound based on the combination of Shannon's theory and Bode-Fano  theory is more realistic because it captures the gain-bandwidth tradeoff of matching networks. 
%						 and can be easily attained by practical matching networks.
						   We also proposed a simple three step procedure to design matching networks that approximate this bound. 
						We demonstrated this procedure for a single Chu's antenna and an array with two Chu's antennas. 
						From the derived theoretical bound and the ADS circuit simulations,  an optimal bandwidth behavior is observed in the achievable rate analysis as function of the bandwidth.
							In future work, we propose the application of this methodology  to other antenna types like dipoles or patch antenna. The main challenge is to numerically compute the upper bound because the number of Bode-Fano constraints increases for complicated antenna geometries.  In future work, we plan to extend this work to MIMO systems with  multiple RF chains  and study the bandwidth-multiplexing tradeoff\cite{nie_bandwidth_2017-1}. The main challenge in extension to MIMO is the joint optimization of multiport transmit and receive impedance-matching networks.
%						This framework can also be applied to frequency-selective TTD beamforming  networks.

						\appendices
							\begin{table*}[htbp]
							\centering
							\caption{Evaluation of  $\xi_{\mathsf{BF}, i}(f)$ and $B_{\mathsf{BF},i}$  (Based on \cite[Table 1]{nie_bandwidth_2017}) }
							\label{tab: eval bf functions}
							{
								\begin{tabular}{|c|c|c|}
									\hline
									Location in WCP	&  $\xi_{\mathsf{BF}, i}( f)$ &   $B_{\mathsf{BF},i}$  \\ \hline
									$s_{i}=\sfj 2 \pi  f_{i} $ 	 & $\frac{1}{4 \pi^2}  \left[\frac{1}{(f_i - f)^2}+ \frac{1}{(f_i+ f)^2}\right]$ &  - $\left[\sum_{\ell=1}^{N_{\mathsf{p}}} (p_{\mathsf{eq}, \ell} - \sfj 2 \pi  f_i)^{-1}+ \sum_{m=1}^{N_{\mathsf{z}}} (z_{\mathsf{eq}, m} + \sfj 2 \pi  f_i)^{-1} \right]$ \\ \hline
									$\cR\{{s_{i}}\}>0$ 		  & $\cR\{  (s_i - \sfj 2 \pi  f_i)^{-1}  +  (s_i + \sfj 2 \pi  f_i)^{-1} \}$   & $-\log\left(\left|\hat{\sfS}_{\mathsf{eq}}(s_{i})\frac{\prod_{m=1}^{N_{\mathsf{z}}} (s_{i} + z_{\mathsf{eq}, m})}{\prod_{m=1}^{N_{\mathsf{z}}} (s_{i} - z_{\mathsf{eq}, m})}\right|\right)$  \\ \hline  
									$s_i= \infty$	&   1   & $\frac{-1}{2}  \left[ \sum_{\ell=1}^{N_{\mathsf{p}}} p_{\mathsf{eq}, \ell}  +  \sum_{m=1}^{N_{\mathsf{z}}} z_{\mathsf{eq}, m} \right]$    \\ \hline
							\end{tabular}}
						\end{table*}
						
						\section{Computing the  scattering parameter $\hat{\sfS}_{\mathsf{eq}}(s) $ in rational form \cite{nie_bandwidth_2017-1} }\label{app: Seq s}

					\noindent	 \textbf{Case 1}:  The impedance parameter of the load is analytically known in the rational form in the whole complex plane and denoted as $\hat{\sfZ}_{\mathsf{eq}}(s) $.  The corresponding scattering parameter of the load in the rational form is $\hat{\sfS}_{\mathsf{eq}}(s) = \frac{\hat{\sfZ}_{\mathsf{eq}}(s) - \sfZ_0}{\hat{\sfZ}_{\mathsf{eq}}(s) + \sfZ_0}$.
					
							 \noindent \textbf{Case 2}:   The measured value of the scattering parameter of the load   ${\sfS}_{\mathsf{eq}}(f) $ is available for the frequency $f$ in the range of interest $[f_1, f_2]$. A passive and rational approximation $\hat{\sfS}_{\mathsf{eq}}(s)$ is obtained such that $\hat{\sfS}_{\mathsf{eq}}(\sfj 2 \pi  f)$ is close to  ${\sfS}_{\mathsf{eq}}(f) $ for $f \in [f_1, f_2]$ within a specified error tolerance.
							This can be done numerically using the \textit{rationalfit} function in MATLAB\cite{gustavsen1999rational}.

						\section{Computing $\xi_{\mathsf{BF}, i}(f)$  and $B_{\mathsf{BF},i}$ for $\{i\} _{1}^{N_{\mathsf{BF}}} $ from $\hat{\sfS}_{\mathsf{eq}}(s) $ }\label{app: fbfi bfi}

						% For a given $\sfS_{\mathsf{eq}}(s) $,  assume there exists $s_{0, \ell}$ with $\cR\{s_{0, \ell}\}\geq 0$, a positive integer $m$, and $A_m \neq 0$ such that \cite{nie_bandwidth_2017-1}
						% \begin{equation}
							%  	1 - \sfS_{\mathsf{eq}}(s) \sfS_{\mathsf{eq}}(-s) = \begin{cases}
								%  		A_m (s-s_{0, \ell})^m +   	A_{m+1} (s- s_{0, \ell})^{m+1} + \dots, \quad \text{ if } s_{0, \ell} \text{ is finite}, \\
								%  		A_m s^{-m} + 	A_{m+1} s^{-(m+1)}+ \dots,  \quad \text{ if } s_{0, \ell} =\infty.
								%  	\end{cases}
							% 	\end{equation}
						% 
						% If $m=1$, then for each $s_{0, \ell}$, 
						% \begin{equation}
							% 	f_{\mathsf{BF}, \ell}(2\pi f) = \begin{cases}
								% 		 \cR\{ (s_{0,\ell}- \sfj 2 \pi  f)^{-1} +(s_{0,\ell}+\sfj 2 \pi  f)^{-1} \}, \quad \text{ if } s_{0, \ell} \text{ is finite}, \\
								% 		 
								% 	\end{cases}
							% \end{equation}

						Using closed-form expression of $	\hat{\sfS}_{\mathsf{eq}}(s)$, we first solve  the following for $s$\cite{nie_bandwidth_2017}.
						\begin{equation}\label{eqn: seq root}
							\hat{\sfS}_{\mathsf{eq}}(-s)\hat{\sfS}_{\mathsf{eq}}(s)-1=0.
						\end{equation}
						Let $s_{i}$ be a distinct root of \eqref{eqn: seq root}. 
						The value of  $s_i$ in \eqref{eqn: seq root}  can be obtained analytically or numerically by using \textit{vpasolve} function in MATLAB.
						Let $\{z_{\mathsf{eq}, 1}, \dots, z_{\mathsf{eq}, m}, \dots,  z_{\mathsf{eq},N_{\mathsf{z}}} \}$ be the zeros and  $\{p_{\mathsf{eq}, 1}, \dots, p_{\mathsf{eq}, \ell}, \dots,  p_{\mathsf{eq},N_{\mathsf{p}}} \}$ be the poles of the rational equivalent load $	\hat{\sfS}_{\mathsf{eq}}(s)$. 
						For each $s_{i}$, there is a corresponding $\xi_{\mathsf{BF}, i}(f)$  and $B_{\mathsf{BF},i}$ depending on the location of $s_{i}$ in the whole complex plane (WCP) categorized in Table~\ref{tab: eval bf functions}.
						For the case of  multiplicity of $s_i$  more than one, i.e. there are repeated roots,   $\xi_{\mathsf{BF}, i}( f)$ and $B_{\mathsf{BF},i}$   can be computed using \cite[Eq 21- Eq 23]{nie_bandwidth_2017-1} for each repeated root.

						\section{Proof of Theorem 2}\label{app: proof thm2}
						
						The variables $\cT^{\star}(  f) $ and $\mu^{\star}_i |_{i=1}^{N_{\mathsf{BF}}+2} $   satisfy the  KKT conditions\cite{shyianov_achievable_2022,boyd2004convex}  applied to \eqref{problem1}.
						%\begin{itemize}
						%\item l
						
						\textbf{Primal feasibility:}
						\begin{subequations}
							\begin{alignat}{3}
								&\int_0^{\infty} \!\!\!\xi_{\mathsf{BF},i}(  f) \log\left(\frac{1}{	1- \cT^{\star}(f)}\right) \mathrm{d}f - B_{\mathsf{BF},i} \leq 0	,   \{i\} _{1}^{N_{\mathsf{BF}}},  \label{eqn: pf1}\\
								&	\cT^{\star}( f)  - 1 \leq 0, \label{eqn: pf2}\\
								&  -\cT^{\star}(  f) \leq 0 \label{eqn: pf3}.
							\end{alignat}
						\end{subequations}
						
						%	\item 
						\textbf{Dual feasibility:}  $	\mu_i^{\star} \geq 0.$

						\textbf{Complementary slackness:}
						\begin{subequations}
							\begin{alignat}{3}
								&\mu_i^{\star} \!\!\left(\int_0^{\infty} \!\!\!\xi_{\mathsf{BF},i}(  f) \log\left(\frac{1}{	1- \cT^{\star}(  f)}\right) \mathrm{d}f - B_{\mathsf{BF},i} \!\! \right)\!=\!0,  \{i\} _{1}^{N_{\mathsf{BF}}}\!\!\!\!\!,\label{eqn: cs1} \\
								& \mu_{N_{\mathsf{BF}}+1}^{\star} \cT^{\star}(  f)\! =0,\label{eqn: cs2}\\
								& \mu_{N_{\mathsf{BF}}+2}^{\star}(\cT^{\star}( f) -1)=0\label{eqn: cs3}.
							\end{alignat}
						\end{subequations}
						
						\textbf{Stationarity:}
						Let $\chi( f)$ be an arbitrary shaped function and $\epsilon$ represents the magnitude of variation \cite{shyianov_achievable_2022}. Using variational calculus and the stationarity condition, we set $\frac{\mathrm{d}}{\mathrm{d} \epsilon}	\left[\cL\left(\cT^{\star}( f)+ \epsilon \chi( f)\right)\right]\bigg|_{\epsilon=0}\!\!\!\!\!\!=0, $
						% 	to obtain (check Appendix \ref{app 1} for proof of \eqref{eqn stationarity})
						% 	
						simplified using chain rule as 
						%We express the derivative using chain rule as
						\begin{align}\label{eqn: chain rule}
%							&\frac{\mathrm{d}}{\mathrm{d} \epsilon}	\left[\cL\left(\cT^{\star}(  f)+ \epsilon \chi( f)\right)\right]\bigg|_{\epsilon=0}\nonumber
 &=\frac{\mathrm{d} \left(\cT^{\star}( f)+ \epsilon \chi( f)\right)}{\mathrm{d}\epsilon} \frac{\mathrm{d}\left[\cL\left(\cT^{\star}( f)+ \epsilon \chi( f)\right)\right]\bigg|_{\epsilon=0}}{\mathrm{d} \left(\cT^{\star}(f)+ \epsilon \chi(f)\right)}   \nonumber \\
							&=\frac{\mathrm{d}\left[\cL\left(\cT^{\star}( f)+ \epsilon \chi( f)\right)\right]}{\mathrm{d} \left(\cT^{\star}( f)+ \epsilon \chi( f)\right)} \chi( f)\bigg|_{\epsilon=0}\nonumber  \\
							&\stackrel{(a)}{=} \int_0^{\infty}\frac{\chi( f)}{\ln 2}\bigg[\left(\frac{-\mathsf{SNR}_{\text{ideal}}( f)}{1+ \mathsf{SNR}_{\text{ideal}}(  f)  \cT^{\star}(  f)}\right)  + \\ \nonumber &\sum_{i=1}^{N_{\mathsf{BF}}} \mu_i^{\star} \frac{\ln 2 \xi_{\mathsf{BF},i}( f)}{(1- \cT^{\star}( f))}\bigg]\mathrm{d}f  - \mu_{N_{\mathsf{BF}}+1}^{\star}\chi( f) + \mu_{N_{\mathsf{BF}}+2}^{\star}\chi( f),
						\end{align}
						where $(a)$ follows from \eqref{eqn: lagrangian}. Let $\mu_{N_{\mathsf{BF}}+1}^{\star}=0$ and $\mu_{N_{\mathsf{BF}}+2}^{\star}=0$ to satisfy \eqref{eqn: cs2} and \eqref{eqn: cs3} respectively.
						Equating \eqref{eqn: chain rule} to 0 is equivalent to setting the integrand inside the integral to 0. As $\chi( f)$ is an arbitrary function, the non-trivial condition  is 
						\begin{equation}\label{eqn: integrand is 0}
							\left[\left(\frac{-\mathsf{SNR}_{\text{ideal}}( f)}{1+ \mathsf{SNR}_{\text{ideal}}(f)  \cT^{\star}( f)}\right)  + \sum_{i=1}^{N_{\mathsf{BF}}} \mu_i^{\star} \frac{\ln 2 \xi_{\mathsf{BF},i}( f)}{(1- \cT^{\star}( f))}\right]=0.
						\end{equation}
						Simplifying \eqref{eqn: integrand is 0}, we obtain 	
						\begin{equation}\label{eqn stationarity}
							\cT^{\star}( f)\left(1+ \ln 2\sum_{i=1}^{N_{\mathsf{BF}}} \mu^{\star}_i \xi_{\mathsf{BF},i}(  f) \right) = 1- \frac{\ln 2 \sum_{i=1}^{N_{\mathsf{BF}}} \mu^{\star}_i \xi_{\mathsf{BF},i}( f)}{\mathsf{SNR}_{\text{ideal}}(  f) }.
						\end{equation}
						%\end{itemize}
						% We note that \eqref{eqn stationarity} is based on the assumption that $\mu_{N_{\mathsf{BF}}+1}^{\star}=0$ and $\mu_{N_{\mathsf{BF}}+2}^{\star}=0$ to satisfy \eqref{eqn: cs2} and \eqref{eqn: cs3} respectively. 
						%We justify this assumption as follows: If $\mu_{K+1}^{\star} \neq 0$, then to satisfy \eqref{eqn: cs2}, $\cT^{\star}(f)=0$ which implies $\mu_i^{\star}=0$ for $i=1, \dots, K$ to satisfy \eqref{eqn: cs1}. From \eqref{eqn stationarity}, this results in a contradiction that $\cT^{\star}(f)=1$.
						For \eqref{eqn: pf1} to be satisfied, $\cT^{\star}(  f)  - 1 \leq 0$ which requires atleast one $\mu_i^{\star}$
						to be strictly positive based on the expression of $\cT^{\star}(  f) $ from \eqref{eqn stationarity}. Combining \eqref{eqn: pf3} and \eqref{eqn stationarity}, we get \eqref{eqn: thm1}.

						\bibliographystyle{IEEEtran}
						\bibliography{references_Nitish.bib}

% Generated by IEEEtran.bst, version: 1.14 (2015/08/26)
\begin{thebibliography}{1}
\providecommand{\url}[1]{#1}
\csname url@samestyle\endcsname
\providecommand{\newblock}{\relax}
\providecommand{\bibinfo}[2]{#2}
\providecommand{\BIBentrySTDinterwordspacing}{\spaceskip=0pt\relax}
\providecommand{\BIBentryALTinterwordstretchfactor}{4}
\providecommand{\BIBentryALTinterwordspacing}{\spaceskip=\fontdimen2\font plus
\BIBentryALTinterwordstretchfactor\fontdimen3\font minus
  \fontdimen4\font\relax}
\providecommand{\BIBforeignlanguage}[2]{{%
\expandafter\ifx\csname l@#1\endcsname\relax
\typeout{** WARNING: IEEEtran.bst: No hyphenation pattern has been}%
\typeout{** loaded for the language `#1'. Using the pattern for}%
\typeout{** the default language instead.}%
\else
\language=\csname l@#1\endcsname
\fi
#2}}
\providecommand{\BIBdecl}{\relax}
\BIBdecl

\bibitem{10002944}
N.~Deshpande, M.~R. Castellanos, S.~R. Khosravirad, J.~Du, H.~Viswanathan, and
  R.~W. Heath, ``A wideband generalization of the near-field region for
  extremely large phased-arrays,'' \emph{IEEE Wireless Commun. Lett.}, vol.~12,
  no.~3, pp. 515--519, 2023.

\end{thebibliography}


% Generated by IEEEtran.bst, version: 1.14 (2015/08/26)
\begin{thebibliography}{10}
\providecommand{\url}[1]{#1}
\csname url@samestyle\endcsname
\providecommand{\newblock}{\relax}
\providecommand{\bibinfo}[2]{#2}
\providecommand{\BIBentrySTDinterwordspacing}{\spaceskip=0pt\relax}
\providecommand{\BIBentryALTinterwordstretchfactor}{4}
\providecommand{\BIBentryALTinterwordspacing}{\spaceskip=\fontdimen2\font plus
\BIBentryALTinterwordstretchfactor\fontdimen3\font minus
  \fontdimen4\font\relax}
\providecommand{\BIBforeignlanguage}[2]{{%
\expandafter\ifx\csname l@#1\endcsname\relax
\typeout{** WARNING: IEEEtran.bst: No hyphenation pattern has been}%
\typeout{** loaded for the language `#1'. Using the pattern for}%
\typeout{** the default language instead.}%
\else
\language=\csname l@#1\endcsname
\fi
#2}}
\providecommand{\BIBdecl}{\relax}
\BIBdecl

\bibitem{buon_kiong_lau_impact_2006}
{Buon Kiong Lau}, J.~Andersen, G.~Kristensson, and A.~Molisch,
  ``\BIBforeignlanguage{en}{Impact of matching network on bandwidth of compact
  antenna arrays},'' \emph{\BIBforeignlanguage{en}{IEEE Trans. Antennas
  Propag.}}, vol.~54, no.~11, pp. 3225--3238, Nov. 2006.

\bibitem{nie_bandwidth_2017}
D.~Nie and B.~M. Hochwald, ``\BIBforeignlanguage{en}{Bandwidth analysis of
  multiport radio frequency systems—part {I}},''
  \emph{\BIBforeignlanguage{en}{IEEE Trans. Antennas Propag.}}, vol.~65, no.~3,
  pp. 1081--1092, Mar. 2017.

\bibitem{taluja_bandwidth_2011}
P.~S. Taluja and B.~L. Hughes, ``\BIBforeignlanguage{en}{Bandwidth limitations
  and broadband matching for coupled multi-antenna systems},'' in
  \emph{\BIBforeignlanguage{en}{Proc. IEEE Globecom}}.\hskip 1em plus 0.5em
  minus 0.4em\relax Houston, TX, USA: IEEE, Dec. 2011, pp. 1--6.

\bibitem{saab_optimizing_2022}
S.~Saab, A.~Mezghani, and R.~W. Heath, ``\BIBforeignlanguage{en}{Optimizing the
  mutual information of frequency-selective multi-port antenna arrays in the
  presence of mutual coupling},'' \emph{\BIBforeignlanguage{en}{IEEE Trans.
  Commun.}}, vol.~70, no.~3, pp. 2072--2084, Mar. 2022.

\bibitem{steer2019microwave}
M.~Steer, \emph{Microwave and RF design (Third Edition, 2019)}.\hskip 1em plus
  0.5em minus 0.4em\relax NC State University, 2019.

\bibitem{park_dynamic_2017}
S.~Park, A.~Alkhateeb, and R.~W. Heath, ``\BIBforeignlanguage{en}{Dynamic
  subarrays for hybrid precoding in wideband {mmWave} {MIMO} systems},''
  \emph{\BIBforeignlanguage{en}{IEEE Trans. Wirel. Commun.}}, vol.~16, no.~5,
  pp. 2907--2920, May 2017.

\bibitem{heath_jr_foundations_2018}
R.~W. Heath~Jr and A.~Lozano, \emph{\BIBforeignlanguage{en}{Foundations of
  {MIMO} {Communication}}}, 1st~ed.\hskip 1em plus 0.5em minus 0.4em\relax
  Cambridge University Press, Dec. 2018.

\bibitem{bode1945network}
H.~W. Bode \emph{et~al.}, ``Network analysis and feedback amplifier design,''
  1945.

\bibitem{fano1950theoretical}
R.~M. Fano, ``Theoretical limitations on the broadband matching of arbitrary
  impedances,'' \emph{Journal of the Franklin Institute}, vol. 249, no.~1, pp.
  57--83, 1950.

\bibitem{8031276}
C.~R. Chappidi and K.~Sengupta, ``Globally optimal matching networks with lossy
  passives and efficiency bounds,'' \emph{IEEE Trans. Circuits Syst. I: Regul.
  Pap.}, vol.~65, no.~1, pp. 257--269, 2018.

\bibitem{nie_systematic_2014}
D.~Nie, B.~M. Hochwald, and E.~Stauffer, ``\BIBforeignlanguage{en}{Systematic
  design of large-scale multiport decoupling networks},''
  \emph{\BIBforeignlanguage{en}{IEEE Trans. Circuits Syst. I: Regul. Pap.}},
  vol.~61, no.~7, pp. 2172--2181, Jul. 2014.

\bibitem{10002944}
N.~Deshpande, M.~R. Castellanos, S.~R. Khosravirad, J.~Du, H.~Viswanathan, and
  R.~W. Heath, ``A wideband generalization of the near-field region for
  extremely large phased-arrays,'' \emph{IEEE Wireless Commun. Lett.}, vol.~12,
  no.~3, pp. 515--519, 2023.

\bibitem{7841766}
M.~Cai, K.~Gao, D.~Nie, B.~Hochwald, J.~N. Laneman, H.~Huang, and K.~Liu,
  ``Effect of wideband beam squint on codebook design in phased-array wireless
  systems,'' in \emph{Proc. IEEE Globecom}, 2016, pp. 1--6.

\bibitem{deshpande2023analysis}
N.~V. Deshpande, S.~R. Khosravirad, J.~Du, H.~Viswanathan, M.~R. Castellanos,
  and R.~W. Heath, ``Analysis of dense array massive {MIMO} with practical
  constraints,'' in \emph{Proc. 26th Int. ITG Workshop Smart Antennas and 13th
  Conference on Systems, Communications, and Coding}.\hskip 1em plus 0.5em
  minus 0.4em\relax VDE, 2023, pp. 1--6.

\bibitem{taluja_diversity_2013}
P.~S. Taluja and B.~L. Hughes, ``\BIBforeignlanguage{en}{Diversity limits of
  compact broadband multi-antenna systems},''
  \emph{\BIBforeignlanguage{en}{IEEE J. Sel. Areas Commun.}}, vol.~31, no.~2,
  p.~12, 2013.

\bibitem{8354789}
B.~Wang, F.~Gao, S.~Jin, H.~Lin, and G.~Y. Li, ``Spatial- and
  frequency-wideband effects in millimeter-wave massive {MIMO} systems,''
  \emph{IEEE Trans. Signal Process.}, vol.~66, no.~13, pp. 3393--3406, 2018.

\bibitem{ivrlac_toward_2010}
M.~T. Ivrlac and J.~A. Nossek, ``\BIBforeignlanguage{en}{Toward a circuit
  theory of communication},'' \emph{\BIBforeignlanguage{en}{IEEE Trans.
  Circuits Syst. I: Regul. Pap.}}, vol.~57, no.~7, pp. 1663--1683, Jul. 2010.

\bibitem{9097141}
T.~A. de~Vasconcelos, A.~L.~F. de~Almeida, and J.~A. Nossek, ``Matching
  strategies for multiantenna arrays,'' in \emph{Proc. 24th Int. ITG Workshop
  Smart Antennas (WSA)}, 2020, pp. 1--6.

\bibitem{9048753}
T.~L. Marzetta, ``Super-directive antenna arrays: Fundamentals and new
  perspectives,'' in \emph{Proc. 53rd Asilomar Conf. Signals, Syst., Comput.},
  2019, pp. 1--4.

\bibitem{wallace_mutual_2004}
J.~Wallace and M.~Jensen, ``\BIBforeignlanguage{en}{Mutual coupling in {MIMO}
  wireless systems: {A} rigorous network theory analysis},''
  \emph{\BIBforeignlanguage{en}{IEEE Trans. Wirel. Commun.}}, vol.~3, no.~4,
  pp. 1317--1325, Jul. 2004.

\bibitem{kundu_enhancing_2017}
L.~Kundu and B.~L. Hughes, ``\BIBforeignlanguage{en}{Enhancing capacity in
  compact {MIMO}-{OFDM} systems with frequency-selective matching},''
  \emph{\BIBforeignlanguage{en}{IEEE Trans. Commun.}}, vol.~65, no.~11, pp.
  4694--4703, Nov. 2017.

\bibitem{shyianov_achievable_2022}
V.~Shyianov, M.~Akrout, F.~Bellili, A.~Mezghani, and R.~W. Heath,
  ``\BIBforeignlanguage{en}{Achievable rate with antenna size constraint:
  {Shannon} meets {Chu} and {Bode}},'' \emph{\BIBforeignlanguage{en}{IEEE
  Trans. Commun.}}, vol.~70, no.~3, pp. 2010--2024, Mar. 2022.

\bibitem{akrout2022super}
M.~Akrout, V.~Shyianov, F.~Bellili, A.~Mezghani, and R.~W. Heath,
  ``Super-wideband massive mimo,'' \emph{arXiv preprint arXiv:2208.01556},
  2022.

\bibitem{akrout2023physically}
M.~Akrout, F.~Bellili, A.~Mezghani, and J.~A. Nossek, ``Physically consistent
  models for intelligent reflective surface-assisted communications under
  mutual coupling and element size constraint,'' \emph{arXiv preprint
  arXiv:2302.11130}, 2023.

\bibitem{chu1948physical}
L.~J. Chu, ``Physical limitations of omni-directional antennas,'' \emph{Journal
  of applied physics}, vol.~19, no.~12, pp. 1163--1175, 1948.

\bibitem{balanis2015antenna}
C.~A. Balanis, \emph{Antenna theory: analysis and design}.\hskip 1em plus 0.5em
  minus 0.4em\relax John wiley \& sons, 2015.

\bibitem{gray2009analysis}
P.~R. Gray, P.~J. Hurst, S.~H. Lewis, and R.~G. Meyer, \emph{Analysis and
  design of analog integrated circuits}.\hskip 1em plus 0.5em minus 0.4em\relax
  John Wiley \& Sons, 2009.

\bibitem{saab_capacity_2019}
S.~Saab, A.~Mezghani, and R.~W. Heath, ``\BIBforeignlanguage{en}{Capacity based
  optimization of compact wideband antennas},'' in
  \emph{\BIBforeignlanguage{en}{Proc. IEEE-APS Topical Conf. Antennas Propag.
  Wireless Commun. (APWC)}}.\hskip 1em plus 0.5em minus 0.4em\relax Granada,
  Spain: IEEE, Sep. 2019, pp. 322--325.

\bibitem{saab_capacity_2019-1}
------, ``\BIBforeignlanguage{en}{Capacity based analysis of a wideband {SIMO}
  system in the presence of mutual coupling},'' in
  \emph{\BIBforeignlanguage{en}{Proc. IEEE Globecom}}.\hskip 1em plus 0.5em
  minus 0.4em\relax Waikoloa, HI, USA: IEEE, Dec. 2019, pp. 1--6.

\bibitem{nitish_globecom23}
N.~Deshpande, M.~R. Castellanos, S.~R. Khosravirad, J.~Du, H.~Viswanathan, and
  R.~W. Heath, ``\BIBforeignlanguage{en}{Achievable rate of a {SISO} system
  under wideband matching network constraints},'' in
  \emph{\BIBforeignlanguage{en}{Proc. IEEE Globecom}}.\hskip 1em plus 0.5em
  minus 0.4em\relax Kuala Lumpur, Malaysia: IEEE, Dec. 2023.

\bibitem{nie_bandwidth_2017-1}
D.~Nie and B.~M. Hochwald, ``\BIBforeignlanguage{en}{Bandwidth analysis of
  multiport radio-frequency systems—part {II}},''
  \emph{\BIBforeignlanguage{en}{IEEE Trans. Antennas Propag.}}, vol.~65, no.~3,
  pp. 1093--1107, Mar. 2017.

\bibitem{decarli2018phase}
N.~Decarli, ``On phase-based localization with narrowband backscatter
  signals,'' \emph{EURASIP J. Adv. Signal Process}, vol. 2018, no.~1, pp.
  1--12, 2018.

\bibitem{9214385}
K.~Kousias, G.~Caso, O.~Alay, A.~Brunstrom, L.~D. Nardis, M.-G.~D. Benedetto,
  and M.~Neri, ``Coverage and deployment analysis of narrowband internet of
  things in the wild,'' \emph{IEEE Commun. Mag.}, vol.~58, no.~9, pp. 39--45,
  2020.

\bibitem{6998944}
C.~Han, A.~O. Bicen, and I.~F. Akyildiz, ``Multi-ray channel modeling and
  wideband characterization for wireless communications in the {Terahertz}
  band,'' \emph{IEEE Trans. Wirel. Commun.}, vol.~14, no.~5, pp. 2402--2412,
  2015.

\bibitem{9762020}
R.~J. Williams, P.~Ramírez-Espinosa, J.~Yuan, and E.~de~Carvalho,
  ``Electromagnetic based communication model for dynamic metasurface
  antennas,'' \emph{IEEE Trans. Wirel. Commun.}, vol.~21, no.~10, pp.
  8616--8630, 2022.

\bibitem{9723156}
M.~R. Castellanos and R.~W. Heath, ``{MIMO} communication with polarization
  reconfigurable antennas,'' in \emph{Proc. 55th Asilomar Conf. Signals, Syst.,
  Comput.}, 2021, pp. 427--431.

\bibitem{davenport1958introduction}
W.~B. Davenport, W.~L. Root \emph{et~al.}, \emph{An introduction to the theory
  of random signals and noise}.\hskip 1em plus 0.5em minus 0.4em\relax
  McGraw-Hill New York, 1958, vol. 159.

\bibitem{russer2014modeling}
J.~A. Russer and P.~Russer, ``Modeling of noisy {EM} field propagation using
  correlation information,'' \emph{IEEE Trans. Microw. Theory Tech.}, vol.~63,
  no.~1, pp. 76--89, 2014.

\bibitem{6668983}
D.~Williams, ``Traveling waves and power waves: Building a solid foundation for
  microwave circuit theory,'' \emph{IEEE Microw. Mag.}, vol.~14, no.~7, pp.
  38--45, 2013.

\bibitem{darlington1939synthesis}
S.~Darlington, ``Synthesis of reactance 4-poles which produce prescribed
  insertion loss characteristics: including special applications to filter
  design,'' \emph{Journal of Mathematics and Physics}, vol.~18, no. 1-4, pp.
  257--353, 1939.

\bibitem{boyd2004convex}
S.~Boyd, S.~P. Boyd, and L.~Vandenberghe, \emph{Convex optimization}.\hskip 1em
  plus 0.5em minus 0.4em\relax Cambridge university press, 2004.

\bibitem{1697831}
C.~Shannon, ``Communication in the presence of noise,'' \emph{Proceedings of
  the IRE}, vol.~37, no.~1, pp. 10--21, 1949.

\bibitem{gustavsen1999rational}
B.~Gustavsen and A.~Semlyen, ``Rational approximation of frequency domain
  responses by vector fitting,'' \emph{IEEE Trans. Power Deliv.}, vol.~14,
  no.~3, pp. 1052--1061, 1999.

\end{thebibliography}

					\end{document}